\newcommand{\virgolette}[1]{``#1''}
\newtheorem{teorema}{Theorem}[section]
\newtheorem{coro}[teorema]{Corollary}
\newtheorem{lemma}[teorema]{Lemma}
\newtheorem{prop}[teorema]{Proposition}
\newtheorem{osss}[teorema]{Remark}
\newtheorem*{oss}{Remark}
\theoremstyle{definition}
\newtheorem{defi}[teorema]{Definition}
\theoremstyle{remark}
\newtheorem*{assumption1}{\bf Assumptions I}
\newtheorem*{assumption2}{\bf Assumptions II}
\newtheorem*{boundaries}{\bf Boundary conditions}
\newtheorem*{notations}{\bf Notations}
\newcommand{\iii}{{\, \vert\kern-0.25ex\vert\kern-0.25ex\vert\, }}
\DeclareMathOperator{\spn}{span}
\DeclareMathOperator{\Imm}{Im}
\newcommand{\ffi}{\varphi}
\newcommand{\AL}{\mathcal{A}}
\newcommand{\CC}{\mathcal{C}}
\newcommand{\LL}{\mathcal{L}}
\newcommand{\MM}{\mathcal{M}}
\newcommand{\NN}{\mathcal{N}}
\newcommand{\KK}{\mathcal{K}}
\newcommand{\Di}{\mathcal{D}}
\newcommand{\Z}{\mathbb{Z}}
\newcommand{\N}{\mathbb{N}}
\newcommand{\R}{\mathbb{R}}
\newcommand{\Q}{\mathbb{Q}}
\newcommand{\C}{\mathbb{C}}
\newcommand{\Hi}{\mathscr{H}}
\newcommand{\Gi}{\mathscr{G}}
\newcommand{\dd}{\partial}
\newcommand{\ra}{\rangle}
\newcommand{\la}{\langle}
\newcommand{\G}{\Gamma}
\begin{document}

\date{}
\title{Global exact controllability of bilinear quantum systems on compact graphs and energetic controllability}
 \author{Alessandro \textsc{Duca}\footnote{Universit\'e 
 Grenoble Alpes, CNRS, Institut Fourier, F-38000 Grenoble, France, email: { 
 alessandro.duca@univ-grenoble-alpes.fr}} }
%
\maketitle

\begin{abstract}The aim of this work is to study the controllability of the bilinear Schr\"odinger equation on 
compact graphs. In particular, we consider the equation $(\ref{mainx1})$ $i\dd_t\psi=-\Delta\psi+u(t)B\psi$ in the 
Hilbert space $L^2(\Gi,\C)$, with $\Gi$ being a compact graph. The Laplacian $-\Delta$ is equipped with 
self-adjoint boundary conditions, $B$ is a bounded symmetric operator and $u\in L^2((0,T),\R)$ with $T>0$. We 
provide a new technique leading to the global exact controllability of the $(\ref{mainx1})$ in $D(|\Delta|^{s/2})$ 
with $s\geq 3$. Afterwards, we introduce the {\virgolette{energetic controllability}}, a weaker notion of 
controllability useful when the global exact controllability fails.
In conclusion, we develop some applications of the main results involving for instance star graphs.
\end{abstract}

\section{Introduction}\label{intro}

\noindent
In quantum mechanics, any state of a closed system is mathematically represented by a wave function $\psi$ in the 
unit sphere of a Hilbert space $\Hi$. 
We consider the evolution of a particle confined in a network shaped as compact graph $\Gi$ (see Figure 
\ref{fig:1}) and subjected to an external field which plays the role of control. \begin{figure}[H]
	\centering
	\includegraphics[width=\textwidth-150pt]{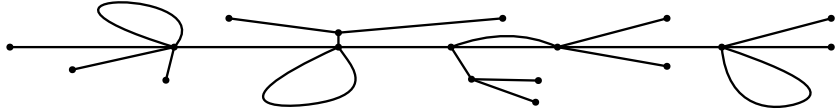}
	\caption{A compact graph is a one-dimensional domain composed by finite vertices (points) connected by edges 
(segments) of finite lengths.}\label{fig:1}
\end{figure}
A standard choice for such setting is to represent the action of the field by an operator $B$ and its intensity by 
a real function $u$. We also impose that $\Hi$ is $L^2(\Gi,\C)$. The evolution of $\psi$ is modeled by the 
bilinear 
Schr\"odinger equation in $L^2(\Gi,\C)$
\begin{equation}\label{mainx1}\tag{BSE}\begin{split}
\begin{cases}
i\dd_t\psi(t)=A\psi(t)+u(t)B\psi(t),\ \ \ \ \ \ \ \ &t\in(0,T),\\
\psi(0)=\psi_0,\ &\ \ \ \ T>0.\\
\end{cases}
\end{split}
\end{equation}
The Laplacian $A=-\Delta$ is equipped with self-adjoint boundary conditions, $B$ is a bounded symmetric operator 
and $u\in L^2((0,T),\R)$. In this context, the well-posedness of the $(\ref{mainx1})$ can be deduced by the seminal 
work on bilinear systems $\cite{ball}$ by Ball, Mardsen and Slemrod where they show the existence of the 
unitary propagator $\G_t^u$ generated by $A+u(t)B.$

\vspace{0.5mm}

The aim of this work is to study the controllability of the $(\ref{mainx1})$ according to the structure of the 
graph $\Gi$, the choice of control field $B$ and the boundary conditions defining the domain of $A$. 

\vspace{0.5mm}

The controllability of finite-dimensional quantum systems modeled by the $(\ref{mainx1})$, when $A$ and $B$ are 
$N\times N$ Hermitian matrices, is well-known for being  
linked to the rank of the Lie algebra spanned by $A$ and $B$ (see $\cite{basso,corona}$). Nevertheless the Lie 
algebra rank condition can not be used for infinite-dimensional quantum systems (see $\cite{corona}$). 

\vspace{0.5mm}

The {global approximate controllability} of the $(\ref{mainx1})$ was proved with different techniques in 
literature. We refer to $\cite{milo,nerse2}$ for Lyapunov techniques, while we cite $\cite{ugo2,ugo3}$ for 
adiabatic arguments and $\cite{nabile,ugo}$ for Lie-Galerking methods.

\vspace{0.5mm}

The {exact controllability} of infinite-dimensional quantum systems is in general a more delicate matter. When 
we consider the linear Schr\"odinger equation, the controllability and observability properties are reciprocally 
dual. Different results were developed by addressing directly or by duality the control problem with different 
techniques: multiplier methods $\cite{lion,cagn}$, 
microlocal analysis $\cite{rauch,burqa,lollo}$ and 
Carleman estimates $\cite{44,kaska,330}$. In any case, when one considers graphs type domains, a complete theory 
is 
far from being formulated. Indeed, the interaction between the different components of a graph may generate 
unexpected phenomena (see $\cite{wave}$). 

\vspace{0.5mm}

The {bilinear Schr\"odinger equation} is well-known for not being exactly controllable in the Hilbert space 
where it is defined when $B$ is a bounded operator and $u\in L^2((0,T),\R)$ with $T>0$ (even though it is 
well-posed in such space). This result was proved by Turinici in \cite{torino} by exploiting the techniques 
from 
the work \cite{ball}. 
As a consequence, the exact controllability of bilinear quantum systems can not be addressed with the classical 
techniques valid for the linear Schr\"odinger equation and weaker notions of controllability are necessary.
The turning point for this kind of studies has been the idea of controlling the equation in subspaces of $D(A)$ 
introduced by Beauchard in $\cite{be1}$. Following this approach, different works were developed for the bilinear 
Schr\"odinger equation \eqref{mainx1} with $\Gi=(0,1)$ and $A=-\Delta_D$ the Dirichlet Laplacian:
$$D(-\Delta_D)=H^2((0,1),\C)\cap H^1_0((0,1),\C)),\ \ \ \ \ -\Delta_D\psi:=-\Delta\psi,\ \ \ \ \forall\psi\in 
D(-\Delta_D).$$

\noindent
For instance, in $\cite{laurent}$, Beauchard and Laurent proved the {well-posedness} and the {local exact 
controllability} of the bilinear Schr\"odinger equation in $H^{s}_{(0)}:=D(|-\Delta_D|^{s/2})$ for $s= 3$. 
For the {global exact controllability} in $H^3_{(0)}$, we refer to $\cite{beauchard2017,mio2}$, while we 
mention $\cite{morgane1,morganerse2}$ for {simultaneous exact controllability} results in $H^3_{(0)}$ and 
$H^4_{(0)}$.

\vspace{0.5mm}

Studying the controllability of the bilinear Schr\"odinger equation \eqref{mainx1} on compact graphs presents an 
additional problem. In particular, when we consider $(\lambda_k)_{k\in\N^*}$, the ordered sequence of eigenvalues 
of $A$, it is possible to show that there exists $\MM\in\N^*$ such that
\begin{equation}\label{g13}\begin{split}
\inf_{{k\in\N^*}}|\lambda_{k+\MM}-\lambda_k|>0\\
\end{split}\end{equation}
(as ensured in $\cite[Lemma\ 2.4]{mio3}$). Nevertheless, the uniform {spectral gap} 
$\inf_{{k\in\N^*}}|\lambda_{k+1}-\lambda_k|> 0$ is only valid when $\Gi=(0,1)$. This hypothesis is crucial for the 
techniques developed in the works $\cite{laurent,mio2,morgane1}$, which can not be directly applied without 
imposing further assumptions. 

\vspace{0.5mm}

As far as we know, the bilinear Schr\"odinger equation on compact graphs has only been studied in $\cite{mio3}$. 
There, the author ensures that, if there exist $C>0$ and suitable $\tilde d\geq 0$ such that
\begin{align}\label{weak_gap}|\lambda_{k+1}-\lambda_k|\geq {C}{k^{-\tilde d}},\ \ \ \ \ \ \forall 
k\in\N^*,\end{align}
then the {global exact controllability} of the $(\ref{mainx1})$ can be guaranteed in some subspaces of 
$L^2(\Gi,\C)$.

\vspace{4mm}

{\noindent \bf \underline{Novelties of the work: Global exact controllability}}

\vspace{2mm}

\noindent
The aim of this work is to present a new technique ensuring the {global exact controllability} of the 
$(\ref{mainx1})$ in different frameworks from the ones considered in \cite{mio3}. Here, we focus on discussing few 
interesting
applications of our result involving {star graphs} composed by any number of edges. The general outcome is 
postponed to the 
next section (Theorem \ref{global}) in order to avoid further technicalities at this moment.

\vspace{0.5mm}

Let $\Gi$ be a {star graph} with $N\in\N^*$ edges $\{e_j\}_{j\leq N}$. We denote by $v$ the internal vertex of 
$\Gi$ and by $\{v_j\}_{j\leq N}$ the set of the external vertices such that $v_j\in e_j$ for every $j\leq N$. Each 
edge $e_j$ with $j\leq N$ is equipped with a coordinate going from $0$ to the length of the edge $L_j$. We set 
the 
coordinate $0$ in $v_j$. 
\begin{figure}[H]
	\centering
	\includegraphics[width=\textwidth-150pt]{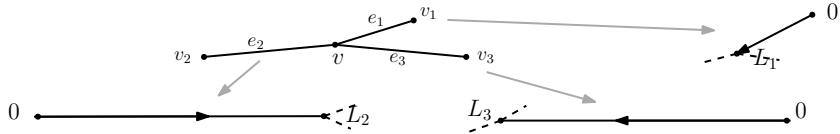}
	\caption{Parametrization of a star graph with $N=3$ edges.}\label{parametrizzazione}
\end{figure}

\noindent
We consider functions $f:=(f^1,...,f^N):\Gi\rightarrow \C$ so that $f^j:(0,L_j)\rightarrow \C$ for every $j\leq N$. 
We denote $L^2(\Gi,\C)=\prod_{j\leq N}L^2((0,L_j),\C)$ the Hilbert space equipped with the norm $\|\cdot\|_{L^2}$.
The controllability result that we present is guaranteed when the lengths $\{L_j\}_{j\leq N}$ satisfy suitable 
assumptions introduced in the following definition.

\begin{defi}
	Fixed $N\in\N^*$, we define $\AL\LL(N)$ such as the set of $\{L_j\}_{j\leq N}\in (\R^+)^N$ so that 
the numbers $\{1,L_1,...,L_N\}$ are {linearly independent} over $\Q$ and all the ratios 
$L_k/L_j$ are 
{algebraic irrational numbers}.
\end{defi}
The set $\AL\LL(N)$ with $N\in\N^*$ contains the uncountable set of $\{L_{j}\}_{j\leq 
N}\in(\R^+)^N$ such
that each $L_j$ can be written in the form $t\widetilde L_j$ where all the ratios $\widetilde L_j/\widetilde L_k$ 
are algebraic irrational numbers and $t$ is a transcendental number. For instance, $\{\pi\sqrt{2},\pi\sqrt{3}\}$ 
belongs to $\AL\LL(2)$.
\begin{defi}
Let $\G_T^u$ be the unitary propagator associated to the \eqref{mainx1} with $u\in L^2((0,T),\R)$ and $T>0$. The 
$(\ref{mainx1})$ is globally exactly controllable in $D(|A|^\frac{s}{2})$ with $s\geq 3$ when, for every 
$\psi^1,\psi^2\in D(|A|^\frac{s}{2})$ such that $\|\psi^1\|_{L^2}=\|\psi^2\|_{L^2}$, there exist $T>0$ and $u\in 
L^2((0,T),\R)$ such that $\G_T^u\psi^1=\psi^2.$
\end{defi}

We are finally ready to present two interesting results obtained from the techniques of this work.

\needspace{3\baselineskip}\begin{teorema}\label{bim.1}  
		Let $\Gi$ be a star graph. Let $D(A)$ be the set of functions $f\in \prod_{j\leq N} H^2(e_j,\C)$ such that:
	\begin{itemize}
		\item $f^j(0)=0$ for every $j\leq N$ (Dirichlet boundary conditions in the external vertices 
$\{v_j\}_{j\leq N}$);
		 \item $f^j(L_j)=f^k(L_k)$ for every $j,k\leq N$ and $\sum_{j\leq 
N}\dd_x f^j(L_j)=0$ {(Neumann-Kirchhoff} conditions in $v$).
\end{itemize}
Let the operator $B$ be such that:
\begin{align*}B:\psi=(\psi^1,...,\psi^N)\in 
L^2(\Gi,\C)\longmapsto\big((x-L_1)^4\psi^1(x),0,...,0\big).\end{align*} There 
exists $\CC\subset (\R^+)^N$ countable such that, 
for every $\{L_j\}_{j\leq N}\in\AL\LL(N)\setminus \CC$, the $(\ref{mainx1})$ is {globally exactly controllable} in 
$D\big(|A|^\frac{4+\epsilon}{2}\big)$ for every $\epsilon>0.$ 
\end{teorema}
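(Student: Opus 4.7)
\medskip

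\noindent\emph{Proof plan.}
The plan is to reduce Theorem \ref{bim.1} to the abstract global exact controllability theorem (Theorem \ref{global}) announced in the next section, and then to verify its three standard hypotheses for the pair $(A,B)$ at hand: a weak spectral gap of the form \eqref{weak_gap} for the ordered eigenvalues $(\lambda_k)_{k\in\N^*}$ of $A$; the regularity statement $B\colon D(|A|^{(4+\epsilon)/2})\to D(|A|^{(4+\epsilon)/2})$ together with boundedness of the relevant iterated commutators; and a non-degeneracy condition on the matrix elements $B_{k,j}:=\la\phi_k,B\phi_j\ra$ in an orthonormal eigenbasis $(\phi_k)_{k\in\N^*}$ of $A$.

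For the spectral gap, I would first diagonalise $A$ explicitly. On each edge $e_j$, Dirichlet at $v_j$ forces an eigenfunction of eigenvalue $\mu^2$ to take the form $\phi^j(x)=a_j\sin(\mu x)$, and imposing continuity and Kirchhoff at $v$ gives, on the generic branch, the secular equation $\sum_{j=1}^N\cot(\mu L_j)=0$ together with $a_j=c/\sin(\mu L_j)$ for a suitable normalisation $c$. Using that $\{L_j\}_{j\leq N}\in\AL\LL(N)$ and combining the $\Q$-linear independence of $\{1,L_1,\dots,L_N\}$ with Roth's theorem applied to the algebraic irrationals $L_j/L_k$, one obtains a Diophantine lower bound of the form $|\mu L_j-m\pi|\geq c\mu^{-\tau}$ for $m\in\Z$; perturbing the unperturbed spectra $(\pi\Z/L_j)$ through the secular equation then converts this into the weak spectral gap \eqref{weak_gap}. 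For the regularity of $B$, the weight $(x-L_1)^4$ is smooth and vanishes to order four at the internal vertex $v$, which is precisely what is needed so that $B\psi$ satisfies the Dirichlet and Kirchhoff transmission conditions up to the order imposed by $D(|A|^{(4+\epsilon)/2})$; repeated commutation with $A$ then yields the required bounds and explains the exponent $4+\epsilon$ in the statement.

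The main obstacle is the matrix-element condition. Integrating by parts four times in $B_{k,j}=\int_0^{L_1}(x-L_1)^4\phi^1_k\,\overline{\phi^1_j}\,dx$, all boundary contributions at $L_1$ vanish because of the quartic weight and those at $0$ because of Dirichlet, leading to a closed expression of the form $B_{k,j}=P_{k,j}(\mu_k,\mu_j)/(\mu_k^2-\mu_j^2)^{r}$, where the numerator is an explicit trigonometric polynomial in $\mu_kL_1,\mu_jL_1$ multiplied by the normalisation constants $a_k^1,a_j^1$. The coupling condition of Theorem \ref{global} thus reduces, for each pair $(k,j)$ required by its Galerkin scheme, to the non-vanishing of an explicit real-analytic function of $(L_1,\dots,L_N)$. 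The delicate step is to verify that this function is not identically zero on $\AL\LL(N)$: an asymptotic expansion of $P_{k,j}$ as $\mu_k,\mu_j\to\infty$ pins down a non-trivial leading trigonometric polynomial thanks to the structure of the weight $(x-L_1)^4$, so the zero set of each $B_{k,j}$ is a proper analytic subvariety of $(\R^+)^N$. Taking $\CC$ as the countable union of these subvarieties over all required pairs delivers the exceptional set of the statement; combining the non-vanishing with the Diophantine estimate of the previous step then produces the quantitative lower bound $|B_{k,j}|\geq c\,k^{-N_0}$ that finally allows Theorem \ref{global} to be applied on $\AL\LL(N)\setminus\CC$.
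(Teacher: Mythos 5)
Your overall architecture --- reduce to the abstract Theorem \ref{global}, then check a spectral condition, the regularity of $B$, and the non-vanishing and decay of $\la\phi_k,B\phi_1\ra_{L^2}$ up to a countable exceptional set $\CC$ defined by analyticity in $L_1$ --- matches the paper, and your handling of the coupling condition (quartic vanishing of the weight at $v$, integration by parts/asymptotics of the matrix elements, zero sets of analytic functions of the lengths) is essentially what the paper does for Theorem \ref{bim.2} and imports from \cite{mio3} for Theorem \ref{bim.1}. The genuine gap is in the spectral step. Theorem \ref{global} does \emph{not} take the weak spectral gap \eqref{weak_gap} as a hypothesis: it asks for a bounded entire function $G$ of exponential type whose simple zeros are $\{\pm\sqrt{\lambda_k}\}_{k\in\N^*}$ and which satisfies $|G'(\pm\sqrt{\lambda_k})|\geq Ck^{-(1+\tilde d)}$. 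For the Dirichlet star graph this is achieved in Theorem \ref{final} with $G(x)=\prod_{l\leq N}\sin(xL_l)\sum_{l\leq N}\cot(xL_l)$, for which $|G'(\pm\sqrt{\lambda_k})|\geq L^*\sum_{l\leq N}\prod_{j\neq l}|\sin(\sqrt{\lambda_k}L_j)|\geq C\lambda_k^{-(1+\epsilon)/2}$ by the Schmidt-type Diophantine estimates (Proposition \ref{zuazua3} and its Dirichlet analogue in \cite{wave}). The decisive point is that this is a lower bound on a \emph{sum of products} of sines, valid uniformly in $N$, and it requires no control whatsoever on the individual consecutive gaps $|\lambda_{k+1}-\lambda_k|$.

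The route you propose instead --- establish \eqref{weak_gap} via Roth's theorem and run an Ingham-type moment-problem argument --- is precisely the strategy of \cite{mio3}, and it is the reason that work is limited to star graphs with at most four edges. First, the gap bound itself is doubtful for large $N$: up to $N$ poles of $\sum_l\cot(xL_l)$, one from each lattice $\pi\N/L_j$, can cluster, and the zeros interlacing them may be far closer to one another than the pairwise pole separations that Roth controls, so a bound on $|\lambda_{k+1}-\lambda_k|$ does not follow from the pairwise Diophantine estimates. Second, even granting \eqref{weak_gap}, the weak-gap moment problem is only solvable for sequences in $h^{\tilde d(\MM-1)}$ with $\MM-1$ the maximal cluster size (growing with $N$), which forces a regularity requirement on $B$ incompatible with Assumptions II (capped at $a+\eta<5/2$ for Dirichlet conditions) and with the claimed space $D(|A|^{(4+\epsilon)/2})$ once $N$ is large; your proof would therefore not deliver the statement for arbitrary $N$, which is the whole novelty of the theorem. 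Relatedly, your final bound $|B_{k,j}|\geq ck^{-N_0}$ with unspecified $N_0$ is too loose: Assumptions I require the precise decay $|\la\phi_k,B\phi_1\ra_{L^2}|\geq Ck^{-(2+\eta)}$ with $\eta$ small enough to be compatible with $d=2+\epsilon$, and this comes from explicit lower bounds on the normalisation constants $a_k^1$, again via the sum-of-products Diophantine estimates rather than a soft non-vanishing argument.
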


A similar result to Theorem \ref{bim.1} is the following. Here, the \eqref{mainx1} is considered on a 
generic star 
graph equipped with Neumann boundary conditions on the external vertices instead of the Dirichlet ones.

\begin{teorema}\label{bim.2}
Let $\Gi$ be a star graph. Let $D(A)$ be the set of functions $f\in \prod_{j\leq N} H^2(e_j,\C)$ such that:
	\begin{itemize}
		\item $\dd_x f^j(0)=0$ for every $j\leq N$ (Neumann boundary conditions in the external vertices 
$\{v_j\}_{j\leq N}$);
		 \item $f^j(L_j)=f^k(L_k)$ for every $j,k\leq N$ and $\sum_{j\leq 
N}\dd_x f^j(L_j)=0$ {(Neumann-Kirchhoff} conditions in $v$).
\end{itemize}
Let the operator $B$ be such that: 
\begin{align*}B:\psi=(\psi^1,...,\psi^N)\in L^2(\Gi,\C)\longmapsto\Big(\big(5x^6-24x^5 L_1 
+45 x^4 L_1^2-40 x^3 L_1^3 +15 x^2 L_1^4 -L_1^6\big)\psi^1(x),0,...,0\Big).\end{align*} There exists $\CC\subset 
(\R^+)^N$ countable such that, 
for every $\{L_j\}_{j\leq N}\in\AL\LL(N)\setminus \CC$, the $(\ref{mainx1})$ is {globally exactly controllable} in 
$D\big(|A|^\frac{5+\epsilon}{2}\big)$ for every $\epsilon>0.$ 
\end{teorema}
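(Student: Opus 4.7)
The plan is to reduce Theorem \ref{bim.2} to the abstract global exact controllability result (Theorem \ref{global}) announced in the next section. This requires three ingredients: a sufficiently strong spectral gap for $A$, a compatibility check for the action of $B$ on $D(|A|^{s/2})$ with $s=5+\epsilon$, and a quantitative non-degeneracy of the matrix elements $B_{j,k}:=\la\phi_k,B\phi_j\ra$ in the orthonormal eigenbasis $(\phi_k)_{k\in\N^*}$ of $A$.

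First I would describe the spectrum. The Neumann conditions at the external vertices force each component of an eigenfunction to have the form $\phi^j(x)=a_j\cos(\sqrt{\lambda}\,x)$. Continuity at the internal vertex $v$ yields $a_j\cos(\sqrt{\lambda}L_j)=a_k\cos(\sqrt{\lambda}L_k)$, and the Kirchhoff condition gives $\sum_j a_j\sin(\sqrt{\lambda}L_j)=0$, producing a transcendental secular equation in $\sqrt{\lambda}$. Combining this with the algebraic irrationality of the ratios $L_k/L_j$ granted by $\{L_j\}_{j\leq N}\in\AL\LL(N)$, together with Roth-type Diophantine estimates as in \cite[Lemma~2.4]{mio3}, I expect to recover both \eqref{g13} and a polynomial lower bound of the form \eqref{weak_gap}. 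Second, I would verify that $B$ preserves $D(|A|^{(5+\epsilon)/2})$: the polynomial $b(x)=5x^6-24x^5L_1+45x^4L_1^2-40x^3L_1^3+15x^2L_1^4-L_1^6$ has been engineered so that $b'(x)=30\,x(x-L_1)^4$. Hence $b'(0)=0$ is compatible with the Neumann condition at $v_1$, while $b(L_1)=0$ makes $B\psi$ vanish at the internal vertex (preserving continuity and the Kirchhoff-type gluing), and the higher derivatives of $b$ at $L_1$ also behave in the required way to iterate these boundary checks up to the relevant regularity level.

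The main obstacle will be the non-degeneracy of the matrix elements. Since $B$ is supported only on $e_1$, one has $B_{j,k}=\int_0^{L_1}b(x)\,\phi_k^1(x)\overline{\phi_j^1(x)}\,dx$, which after reducing to sums of cosines becomes a combination of integrals of the form $\int_0^{L_1}b(x)\cos\big((\sqrt{\lambda_k}\pm\sqrt{\lambda_j})x\big)\,dx$. Repeated integration by parts, in which the boundary contributions coming from $b$, $b'$ and $b''$ vanish thanks to the tailored form of $b$, should yield the lower bound on $|B_{j,k}|$ required by Theorem \ref{global}. For each pair $(j,k)$ belonging to the resonant set dictated by \eqref{g13}, this quantity is a non-trivial real-analytic function of the lengths $(L_1,\dots,L_N)$, so its vanishing locus carves out a countable collection of forbidden length configurations whose union defines $\CC$. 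For $\{L_j\}_{j\leq N}\in\AL\LL(N)\setminus\CC$ all the hypotheses of Theorem \ref{global} are then met, yielding global exact controllability in $D(|A|^{(5+\epsilon)/2})$ for every $\epsilon>0$.
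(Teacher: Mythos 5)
Your skeleton (reduce to Theorem \ref{global}, describe the cosine eigenfunctions and the secular equation $\sum_l\tan(\sqrt{\lambda}L_l)=0$, check that $b'(x)=30x(x-L_1)^4$ makes $B$ compatible with the boundary conditions, estimate $\la\phi_1,B\phi_k\ra$ by integration by parts, and remove a countable set of lengths by analyticity) matches the paper's strategy, and your Assumptions~II verification is essentially the paper's. But there is a genuine gap in the spectral ingredient. Theorem \ref{global} does \emph{not} take the weak gap \eqref{weak_gap} as a hypothesis: its key input is an entire function $G\in L^\infty(\R,\R)$ of exponential type whose simple zeros are $\{\pm\sqrt{\lambda_k}\}$ and which satisfies $|G'(\pm\sqrt{\lambda_k})|\geq C k^{-(1+\tilde d)}$. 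For the Neumann star graph this is constructed explicitly as $G(x)=\prod_{l\leq N}\cos(xL_l)\sum_{l\leq N}\tan(xL_l)$ (see Theorem \ref{final}, part 2), and the derivative lower bound with $\tilde d=\epsilon$ arbitrarily small comes from the Schmidt/Roth-type estimate of Proposition \ref{zuazua3}. Replacing this by \eqref{g13} plus \eqref{weak_gap} sends you back to the machinery of \cite{mio3}, where the moment problem is only solvable in $h^{\tilde d(\MM-1)}$ with $\MM$ as large as $N$; combined with the ceiling $d<7/2$ in Assumptions II this is exactly why the old route is restricted to star graphs with at most $4$ edges, whereas the whole point of Theorem \ref{bim.2} is to cover arbitrary $N$. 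You also need to record that the spectrum is simple (done in the paper via the non-vanishing of eigenfunctions at the internal vertex, Remark \ref{cicciobastardo}), since Theorem \ref{global} assumes it.

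A second, quantitative gap sits in your non-degeneracy argument. Integration by parts with $b(L_1)=b'(L_1)=b''(L_1)=b'''(L_1)=0$ and $b'(0)=0$ gives the \emph{rate} $(\sqrt{\lambda_k}\pm\sqrt{\lambda_1})^{-4}$, but the matrix element carries the prefactor $a_k^1a_1^1$, where $a_k^1$ is the amplitude of $\phi_k$ on the controlled edge $e_1$. Assumptions I require the uniform polynomial lower bound $|\la\phi_1,B\phi_k\ra|\geq C k^{-(2+\eta)}$ for \emph{all} $k$, and for this one must prove $|a_k^1|\geq C_\epsilon k^{-(1+\epsilon)}$; in the paper this is Corollary \ref{corozuazua}, again a Diophantine estimate on $|\cos(\sqrt{\lambda_k}L_l)|$ along the solutions of the secular equation. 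Your analyticity-in-$L_1$ argument only rules out exact vanishing for each fixed $k$ (and is indeed how the paper defines the countable exceptional set $\CC$, both for point 1 and for the diagonal resonance condition in point 2 of Assumptions I); it cannot by itself prevent $\la\phi_1,B\phi_k\ra$ from decaying faster than every polynomial along a subsequence, which would destroy the moment-problem estimate.
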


Theorem \ref{bim.1} and Theorem \ref{bim.2} are deduced from an abstract global exact controllability 
result 
stated in the next section 
in Theorem \ref{global}. Such theorem presents hypotheses on $(\lambda_k)_{k\in\N^*}$ and on the operator $B$  
so that the controllability of the bilinear Schr\"odinger equation \eqref{mainx1} is 
guaranteed for a general $\Gi$.

\vspace{0.5mm}

As it is common for this type of outcomes, the global exact controllability of the \eqref{mainx1} can be 
ensured by extending a local result following from the solvability of a suitable 
\virgolette{moment problem} (an example can be found in \eqref{mome_generic}).
In order to prove Theorem \ref{bim.1} and Theorem \ref{bim.2}, we start by studying assumptions on 
$(\lambda_k)_{k\in\N^*}$ and on the operator $B$ leading to the solvability of such 
moment problem. In a second moment, we prove the abstract global exact controllability result of 
Theorem \ref{global}.
Afterwards, we ensure the validity of the spectral assumptions considered 
in Theorem \ref{global} for suitable star graphs.
In conclusion, we validate the remaining hypotheses when the operator $B$ is defined as in Theorem 
\ref{bim.1} or Theorem \ref{bim.2}.

\vspace{0.5mm}

The main novelty of Theorem \ref{bim.1} and Theorem \ref{bim.2} is the validity of the controllability results 
when 
$\Gi$ is a star graph with any number of edges. In fact, the techniques developed in the existing work \cite{mio3} 
only allow to consider star graphs with at most $4$ edges (see \cite[Proposition\ 3.3]{mio3}). In addition, the 
controllability is guaranteed even though the control field $B$ only acts on one edges of the graph, which is due 
to the choice of the lengths $\{L_{j}\}_{j\leq N}$ in $\AL\LL(N)$. About this fact, if some ratios $L_k/ L_1$ are 
rationals, then the spectrum of the operator $A$ presents multiple eigenvalues and there exist eigenfunctions of 
$A$ vanishing in $e_1$ (we refer to Remark \ref{pallosso} for further details on this fact). As a consequence, the 
dynamics of the bilinear Schr\"odinger equation \eqref{mainx1} stabilizes such eigenfunctions since the control 
operator $B$ only acts on $e_1$. This is an obvious obstruction to the controllability which underlines the 
importance of choosing suitable lengths for the edges of the graph in this kind of problems.

\vspace{4mm}

{\noindent \bf \underline{Novelties of the work: Energetic controllability}}

\vspace{2mm}

\noindent
In the spirit of the results provided in $\cite{corobeau}$, we 
introduce a weaker notion of controllability: the {{energetic controllability}}.

\begin{defi}\label{energiaaa}
Let $(\varphi_{k})_{k\in\N^*}$ be an orthonormal system of $L^2(\Gi,\C)$ (not necessarily complete) composed by 
eigenfunctions of $A$ and $(\mu_{k})_{k\in\N^*}$ be the corresponding eigenvalues. Let $\G_T^u$ be the unitary 
propagator associated to the \eqref{mainx1} with $u\in L^2((0,T),\R)$ and $T>0$.
	The $(\ref{mainx1})$ is energetically controllable in $(\mu_k)_{{k\in\N^*}}$ if, for every $m,n\in\N^*$, there 
exist $T>0$ and $u\in L^2((0,T),\R)$ so that $\G_T^u\varphi_m=\varphi_n.$
\end{defi}

The energetic controllability guarantees the controllability of specific energy levels of the quantum system 
$i\dd_t\psi=A\psi$ in $L^2(\Gi,\C)$ via the external field $u(t)B$. An application of the abstract energetic 
controllability result, which is presented in Section \ref{controlloenergia} (in Theorem 
$\ref{globalenergetic}$), is the following theorem.

\begin{teorema}\label{bim2}
	Let $\Gi$ be a star graph with edges of equal length $L$. Let $D(A)$ be defined such as in Theorem 
$\ref{bim.1}$. 
	Let the operator $B$ be such that: 
\begin{align*}B:\psi=(\psi^1,...,\psi^N)\in L^2(\Gi,\C)\longmapsto\big((x-L)^2\psi^1(x),0,...,0\big).\end{align*}
	The $(\ref{mainx1})$ is energetically controllable in $\big(\frac{k^2\pi^2}{4L^2}\big)_{k\in\N^*}.$ 
\end{teorema}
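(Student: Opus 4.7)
The plan is to reduce Theorem \ref{bim2} to the abstract energetic controllability result Theorem \ref{globalenergetic} announced in Section \ref{controlloenergia}, whose hypotheses concern the spectral structure of $A$, the non-vanishing of a chain of matrix elements $\langle B\varphi_k,\varphi_{k+1}\rangle$, and a regularity/invariance property of $B$. The proof therefore splits naturally into three verifications: diagonalize $A$ with a careful choice of orthonormal eigenfunctions, show that $B$ stabilizes the relevant regularity scale, and compute the matrix elements along the chosen system.

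For the spectral analysis I would impose $\phi^j(x)=c_j\sin(\sqrt{\mu}\,x)$ on each edge (forced by the Dirichlet condition at $v_j$) and split the Neumann--Kirchhoff problem at the internal vertex $v$ into two cases. When $\sin(\sqrt{\mu}L)\neq 0$, continuity forces $c_1=\cdots=c_N$ and Kirchhoff yields $\cos(\sqrt{\mu}L)=0$, giving the simple eigenvalues $\mu_k=k^2\pi^2/(4L^2)$ for $k$ odd with symmetric eigenfunction $\varphi_k^j(x)=\sqrt{2/(NL)}\sin(k\pi x/(2L))$. When $\sin(\sqrt{\mu}L)=0$, continuity is vacuous and Kirchhoff collapses to $\sum_j c_j=0$, yielding the $(N-1)$-fold degenerate eigenvalues $\mu_k=k^2\pi^2/(4L^2)$ for $k$ even; here I would select the representative concentrating mass on $e_1$, namely $\varphi_k^1=\sqrt{2(N-1)/(NL)}\sin(k\pi x/(2L))$ and $\varphi_k^j=-\sqrt{2/(NL(N-1))}\sin(k\pi x/(2L))$ for $j\geq 2$. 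Orthogonality of $(\varphi_k)_{k\in\N^*}$ follows from orthogonality of sines of distinct wavenumbers on $(0,L)$ for same-parity pairs and from $\sum_j c_j^{(k)}=0$ tested against the symmetric eigenfunctions for mixed-parity pairs.

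Since $\varphi_k^1$ is in all cases proportional to $\sin(k\pi x/(2L))$ with a $k$-dependent positive constant $\alpha_k$, and since $(B\psi)^j=0$ for $j\geq 2$, the matrix elements factor as
\[
\langle B\varphi_m,\varphi_n\rangle=\alpha_m\alpha_n\int_0^L (x-L)^2\,\sin\!\Big(\tfrac{m\pi x}{2L}\Big)\sin\!\Big(\tfrac{n\pi x}{2L}\Big)\,dx.
\]
A product-to-sum identity followed by two integrations by parts reduces the integral to $(L^3/2)\big[J(m-n)-J(m+n)\big]$ with $J(p)=\tfrac{8}{(p\pi)^2}-\tfrac{16\sin(p\pi/2)}{(p\pi)^3}$ for $p\neq 0$. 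For consecutive indices $m=k$, $n=k+1$ the arguments $m\pm n$ are both odd, the sine correction survives, and one has $J(1)=8/\pi^2-16/\pi^3>0$ while $|J(2k+1)|\leq 8/(3\pi)^2+16/(3\pi)^3<J(1)$ for every $k\geq 1$; this gives $\langle B\varphi_k,\varphi_{k+1}\rangle\neq 0$ for every $k\in\N^*$. In parallel, $B$ preserves the relevant domains: $(x-L)^2$ and its first derivative vanish at $x=L$, so the Neumann--Kirchhoff condition is preserved, and $(B\psi)^1(0)=L^2\psi^1(0)=0$ preserves Dirichlet at $v_1$, so $B$ maps $D(A)$ into itself.

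The main obstacle I anticipate is the careful propagation of the boundary conditions through the higher-order regularity spaces required by the abstract theorem: each successive application of $A$ to $B\psi$ generates boundary terms at both $x=0$ and $x=L$, and checking that these vanish in the correct sense is exactly what motivates the second-order vanishing of $(x-L)^2$ at the internal vertex together with the Dirichlet factor at the external one. Once the chain of inclusions $B:D(|A|^{s/2})\to D(|A|^{s/2})$ is established at the required level $s$, the spectral and matrix-element inputs assembled above permit a direct invocation of Theorem \ref{globalenergetic}, which delivers the claimed energetic controllability in $(k^2\pi^2/(4L^2))_{k\in\N^*}$.
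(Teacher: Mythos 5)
Your spectral analysis is correct and coincides with the paper's: the same splitting into the simple eigenvalues $\frac{(2k-1)^2\pi^2}{4L^2}$ (symmetric eigenfunctions) and the $(N-1)$-fold degenerate eigenvalues $\frac{(2k)^2\pi^2}{4L^2}$, the same choice of a distinguished representative that does not vanish on $e_1$ (the paper's $f_k^1$), and the same reduction to Theorem \ref{globalenergetic} through its point {\bf 2.} with the uniform gap $\inf_{j\neq k}|\mu_j-\mu_k|=\pi^2/4L^2$. (One point you leave implicit but should state: the orthogonal complement of $\widetilde\Hi$ is spanned by eigenfunctions vanishing on $e_1$, which is exactly why $B$, acting only on $e_1$, maps $\widetilde\Hi$ into itself.)

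The genuine gap is that you have misread the hypotheses of the abstract theorem. Theorem \ref{globalenergetic} requires Assumptions I$(\upvarphi,\eta)$, whose first condition is a \emph{quantitative} lower bound on the couplings to the first eigenfunction, $|\la\ffi_k,B\ffi_1\ra_{L^2}|\geq C k^{-(2+\eta)}$ for every $k\in\N^*$, not the mere non-vanishing of the consecutive couplings $\la B\ffi_k,\ffi_{k+1}\ra_{L^2}$ that you verify. Your integral formula is correct and could be redirected to the pairs $(1,k)$, where $J(k-1)-J(k+1)\sim 32/(\pi^2k^3)$ reproduces the paper's estimate $|\la\ffi_1,B\ffi_k\ra_{L^2}|\geq C\sqrt{\mu_1\mu_k}\,(\mu_k-\mu_1)^{-2}\geq C'k^{-3}$, i.e.\ $\eta=1$; but as written your verification addresses a condition the theorem does not ask for and omits the one it does. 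More importantly, you omit entirely the second condition of Assumptions I$(\upvarphi,\eta)$: whenever $\mu_j-\mu_k-\mu_l+\mu_m=0$ with $(j,k)\neq(l,m)$ one must have $\la\ffi_j,B\ffi_j\ra_{L^2}-\la\ffi_k,B\ffi_k\ra_{L^2}-\la\ffi_l,B\ffi_l\ra_{L^2}+\la\ffi_m,B\ffi_m\ra_{L^2}\neq 0$. The paper checks this by computing $\la\ffi_k,B\ffi_k\ra_{L^2}=C_3+C_4k^{-2}$ and observing that $j^{-2}-k^{-2}-l^{-2}+m^{-2}\neq 0$ for the relevant quadruples. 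Without these two verifications the invocation of Theorem \ref{globalenergetic} is not justified.
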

Theorem $\ref{bim2}$ is valid although the spectrum of $A$ presents multiple eigenvalues and then the global exact 
controllability from Theorem $\ref{global}$ is not satisfied (also $\cite[Theorem\ 3.2]{mio3}$ is not guaranteed).
In addition, the energetic controllability is ensured with respect to all the energy levels of the quantum system 
$i\dd_t\psi=A\psi,$ since the eigenvalues of $A$ are $\big(\frac{k^2\pi^2}{4L^2}\big)_{k\in\N^*}$ (without 
considering their multiplicity).

\vspace{0.5mm}

The energetic controllability is useful when it is not possible to fully characterize the spectrum of $A$ because 
of the complexity of the graph $\Gi$. By studying the structure of $\Gi$, it is possible to explicit some 
eigenvalues $(\mu_k)_{k\in\N^*}$ and verify if the system is energetically controllable in $(\mu_k)_{k\in\N^*}$. 
In Section $\ref{energia}$, we discuss some examples where the result is satisfied, {\it e.g } graphs containing 
loops as in Figure \ref{cappio} (a loop is an edge of the graph which is connected from both extremes to the same 
vertex).

\begin{figure}[H]
	\centering
	\includegraphics[width=\textwidth-150pt]{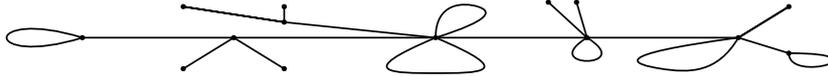}
	\caption{Example of compact graph containing more loops.}
	\label{cappio}
\end{figure}

{\noindent \bf \underline{Scheme of the work}}\\

\noindent
In {Section $\ref{accurate}$}, we present the main assumptions adopted in the work, the well-posedness of the 
$(\ref{mainx1})$ in specific subspaces of $L^2(\Gi,\C)$ and the abstract global exact controllability result in 
Theorem \ref{global}.

\noindent
In Section $\ref{trigonometric}$, we study the solvability of the \virgolette{moment problem} under the hypotheses 
of Theorem \ref{global}.

\noindent
In Section \ref{proofglobal}, we prove Theorem \ref{global} by extending a local exact 
controllability result provided in Proposition \ref{localteorema}. To the purpose, we use the outcomes developed 
in Section 
$\ref{trigonometric}$ and Appendix $\ref{approximate}$.

\noindent
In Section \ref{generic_application}, we study the \eqref{mainx1} on star graphs and we prove Theorem \ref{bim.1} 
and Theorem \ref{bim.2}.

\noindent
In Section \ref{controlloenergia}, we discuss the energetic controllability of the \eqref{mainx1} and we prove 
Theorem \ref{bim2}.

\noindent
In Appendix $\ref{approximate}$, we present the global approximate controllability of the bilinear Schr\"odinger 
equation.

\noindent
In {Appendix $\ref{numeri}$}, we study some spectral results adopted in the work.

\section{The bilinear Schr\"odinger equation on compact graphs}\label{accurate}
\subsection{Preliminaries} 
Let $\Gi$ be a compact graph composed by $N\in\N^*$ edges $\{e_j\}_{j\leq N}$ of lengths $\{L_j\}_{j\leq N}$ and 
$M\in\N^*$ vertices $\{v_j\}_{j\leq M}$. For every vertex $v$, we denote $N(v):=\big\{l \in\{1,...,N\}\ |\ v\in 
e_l\big\}$ and $n(v):=|N(v)|.$
We call $V_e$ and $V_i$ the external and the internal vertices of $\Gi$, {\it i.e.}
\begin{equation*}V_e:=\big\{v\in\{v_j\}_{ j\leq M}\ |\ \exists ! e\in\{e_j\}_{j\leq N}: v\in 
e\big\},\ \ \ \ \ V_i:=\{v_j\}_{ j\leq M}\setminus V_e.\end{equation*}
We study graphs equipped with a metric, which parametrizes each edge $e_j$ with a coordinate going from $0$ to its 
length $L_j$. A graph is compact when it is composed by a finite number of vertices and edges of finite lengths. We 
consider functions $f:=(f^1,...,f^N):\Gi\rightarrow \C$ with domain a compact metric graph $\Gi$ so that 
$f^j:e_j\rightarrow \C$ for every $j\leq N$. We denote
$$\Hi=L^2(\Gi,\C)=\prod_{j\leq N}L^2(e_j,\C).$$
The Hilbert space $\Hi$ is equipped with the norm $\|\cdot\|_{L^2}$ induced by the scalar product
$$\la\psi,\ffi\ra_{L^2}:=\sum_{j\leq N}\la\psi^j,\ffi^j\ra_{L^2(e_j,\C)}=\sum_{j\leq 
N}\int_{e_j}\overline{\psi^j}(x)\ffi^j(x)dx,\ \ \ \  \ \ \forall \psi,\ffi\in\Hi.$$
For $s>0$, we define the spaces $$H^s=H^s(\Gi,\C):=\prod_{j=1}^N H^s(e_j,\C),$$
$$h^s=\Big\{(x_j)_{j\in\N^*}\subset{\C}\ 
\big|\ \sum_{j=1}^{\infty}|j^s x_j|^2<\infty\Big\}.$$
We equip $h^s$ with the norm 
$\big\|(x_j)_{j\in\N^*}\big\|_{(s)}=\big(\sum_{j=1}^\infty|j^s x_j|^2\big)^\frac{1}{2}$ for every 
$(x_j)_{j\in\N^*}\in h^s$. 

\vspace{0.5mm}

Let $f=(f^1,...,f^N):\Gi\rightarrow\C$ be smooth and $v$ be a vertex of $\Gi$ connected once to an edge $e_j$ with 
$j\leq N$. When the coordinate parametrizing $e_j$ in the vertex $v$ is equal to $0$ (resp. $L_j$), we denote
\begin{align}\label{NK1}\dd_x f^j(v)=\dd_xf^j(0),\ \ \  \ \ \ \ \ \big(\text{resp.}\ \dd_x 
f^j(v)=-\dd_xf^j(L_j)\big).\end{align}
When $e_j$ is a loop and it is connected to $v$ in both of its extremes, we use the notation 
\begin{align}\label{NK2}\dd_x 
f^j(v)=\dd_xf^j(0)-\dd_xf^j(L_j).\end{align} When $v$ is an external vertex and then $e_j$ is the only edge 
connected to $v$, we call $\dd_x f(v)=\dd_x f^j(v).$

\vspace{0.5mm}

In the bilinear Schr\"odinger equation $(\ref{mainx1})$, we consider the Laplacian $A$ being self-adjoint and we 
denote $\Gi$ as {quantum graph}. From now on, when we introduce a quantum graph $\Gi$, we implicitly define on 
$\Gi$ a self-adjoint Laplacian $A$. Formally, $D(A)$ is characterized by the following boundary conditions.

\needspace{3\baselineskip}
\begin{boundaries}
Let $\Gi$ be a compact quantum graph. 
\begin{itemize}
\item[($\Di$)] A vertex $v\in V_e$ is equipped with {Dirichlet} boundary conditions when $f(v)=0$ for every $f\in 
D(A)$.

\item[($\NN$)] A vertex $v\in V_e$ is equipped with {Neumann} boundary conditions when $\dd_xf(v)=0$ for every 
$f\in D(A)$.
 \item[($\NN\KK$)] A vertex $v\in V_i$ is equipped with {Neumann-Kirchhoff} boundary conditions when every $f\in 
D(A)$ is 
continuous in $v$ and $\sum_{j\in N(v)}\dd_x f^j(v)=0$ (we refer to \eqref{NK1} and \eqref{NK2} for further 
details on the notation).

\end{itemize}
\end{boundaries}

\needspace{1\baselineskip}
\begin{notations}Let $\Gi$ be a compact quantum graph. 
\begin{itemize}

\item The graph $\Gi$ is said to be equipped with ($\Di$) (resp. ($\NN$)) when every $v\in V_e$ is equipped with 
($\Di$) (resp. ($\NN$)) and every $v\in V_i$ with ($\NN\KK$). 

\item The graph $\Gi$ is said to be equipped with ($\Di$/$\NN$) when every $v\in V_e$ is equipped with ($\Di$) or 
($\NN$), while every $v\in V_i$ with ($\NN\KK$). 
\end{itemize}

\end{notations}

When the boundary conditions described above are satisfied, the Laplacian $A$ is self-adjoint (see \cite[Theorem\ 
3]{kuk} for further details) and admits purely discrete spectrum (see \cite[Theorem\ 18]{kuk}). We denote by 
$(\lambda_k)_{k\in\N^*}$ the ordered sequence of eigenvalues of $A$ and we define a Hilbert basis of $\Hi$:  
\begin{equation*}\Phi:=(\phi_k)_{k\in\N^*}\end{equation*}composed by corresponding eigenfunctions.
 From $\cite[Lemma\ 2.3]{mio3}$, there exist $C_1,C_2>0$ so that
	\begin{equation}\label{interessante}C_1{k^2}\leq\lambda_k\leq C_2k^2,\ \ \ \ \ \forall k\geq 2.\end{equation}

\noindent
Let $[r]$ be the entire part of a number $r\in\R$. For $s>0$, we denote
\begin{equation*}
\begin{split}
H^s_{\NN\KK}:=\Big\{&\psi\in H^s(\Gi,\C)\ \Big|\ \dd_x^{2n}\psi\text{ is continuous in }v,\ 
 \forall n\in\N,\ n<\big[({s+1})/{2}\big],\ \forall v\in V_i;\\
& \sum_{j\in N(v)}\dd_{x}^{2n+1}\psi^j(v)=0,\ \forall n\in\N,\ n<\big[{s}/{2}\big],\  \forall v\in V_i\Big\},\\
H^s_{\Gi}&:=D(A^{{s}/{2}}),\ \ \ \ \ \ \ \ \ \ \ \ \ \ 
\|\cdot\|_{(s)}:=\|\cdot\|_{H^s_{\Gi}}=\Big(\sum_{k\in\N^*}\big|k^s\la\cdot,\phi_k\ra_{L^2}\big|^2\Big)^{\frac{1}{2
} } .\\
\end{split}
\end{equation*}

We introduce the main assumptions adopted in the manuscript by considering 
$(\mu_{k})_{k\in\N^*}\subseteq(\lambda_{k})_{k\in\N^*}$ an ordered sequence of some eigenvalues of $A$ and the 
corresponding eigenfunctions $$\upvarphi:=(\varphi_{k})_{k\in\N^*}\subseteq(\phi_k)_{k\in\N^*}.$$
Let $\eta>0$, $a\geq 0$, $I:=\{(j,k)\in(\N^*)^2:j< k\}$ and $\widetilde\Hi:=\overline{\spn\{\varphi_k\ |\ 
k\in\N^*\}}^{\ L^2}.$

\smallskip

\needspace{1\baselineskip}
\begin{assumption1}[$\upvarphi,\eta$]
	The bounded symmetric operator $B$ satisfies the following conditions.

	\begin{enumerate}
		\item There exists $C>0$ such that $\big|\la\ffi_k,B\ffi_1\ra_{L^2}\big|\geq\frac{C}{k^{2+\eta}}$ for 
every 
$k\in\N^*.$
		\item For every $(j,k),(l,m)\in I$ such that $(j,k)\neq(l,m)$ and $\mu_j-\mu_k-\mu_l+\mu_m=0,$ it holds 
$$\la\ffi_j,B\ffi_j\ra_{L^2}-\la\ffi_k,B\ffi_k\ra_{L^2}-\la\ffi_l,B\ffi_l\ra_{L^2}+\la\ffi_m,B\ffi_m\ra_{L^2}\neq 
0.$$
			\end{enumerate}

\end{assumption1}

\begin{assumption1}[$\eta$]
	The couple $(A,B)$ satisfies Assumptions I$(\Phi,\eta)$ with $\Phi$ a Hilbert basis of $\Hi$ made by 
eigenfunctions of $A$.
\end{assumption1}	
\vspace{1mm}

The first condition of Assumptions I($\upvarphi,\eta$) (resp. Assumptions I($\eta$)) quantifies how much $B$ mixes 
the eigenspaces associated to the eigenfunctions $(\ffi_k)_{k\in\N^*}$ (resp. $(\phi_k)_{k\in\N^*}$). This 
assumption is crucial for the controllability. Indeed, when $B$ stabilizes such spaces, also $\G_T^u$ does the 
same 
and we can not expect to obtain controllability results. The second hypothesis is used to decouple some 
eigenvalues 
resonances appearing in the proof of the approximate controllability that we use in order to prove our main 
results. 

\needspace{3\baselineskip}
\begin{assumption2}[$\upvarphi,\eta,a$] Let $Ran(B|_{H^2_{\Gi}\cap \widetilde \Hi})\subseteq H^2_{\Gi}\cap 
\widetilde \Hi$ and one of the following points be satisfied.
	\begin{enumerate}
		
	\item When $\Gi$ is equipped with ($\Di$/$\NN$) and $a+\eta\in(0, 3/2)$, there exists 
	$d\in[\max\{a+\eta,1\},3/2)$ so that $$Ran\big(B|_{H_{\Gi}^{2+d}\cap \widetilde \Hi}\big)\subseteq H^{2+d}\cap 
H^2_{\Gi}\cap \widetilde \Hi.$$

	\item When $\Gi$ is equipped with ($\NN$) and $a+\eta\in(0, 7/2)$, there exist $d\in[\max\{a+\eta,2\},7/2)$ and 
$d_1\in(d,7/2)$ such that $$Ran\big(B|_{ H^{d_1}_{\NN\KK}\cap \widetilde \Hi}\big)\subseteq H^{d_1}_{\NN\KK}\cap 
\widetilde 
\Hi,\ \ \ \ \ \ \ Ran\big(B|_{H_{\Gi}^{2+d}\cap \widetilde \Hi}\big)\subseteq H^{2+d}\cap H^{1+d}_{\NN\KK}\cap 
H^2_{\Gi}\cap \widetilde \Hi.$$

	\item When $\Gi$ is equipped with ($\Di$) and $a+\eta\in(0, 5/2)$, there exists $d\in[\max\{a+\eta,1\},5/2)$ 
such that $$Ran\big(B|_{H_{\Gi}^{2+d}\cap \widetilde \Hi}\big)\subseteq H^{2+d}\cap H^{1+d}_{\NN\KK}\cap 
H^2_{\Gi}\cap 
\widetilde \Hi.$$ If $d\geq 2$, then there exists $d_1\in(d,5/2)$ such that $Ran\big(B|_{H^{d_1}\cap \widetilde 
\Hi}\big)\subseteq H^{d_1}\cap \widetilde \Hi.$
	
	\end{enumerate}
\end{assumption2}
\begin{assumption2}[$\eta,a$]
	The couple $(A,B)$ satisfies Assumptions II$(\Phi,\eta,a)$ with $\Phi$ a Hilbert basis of $\Hi$ made by 
eigenfunctions of $A$.
\end{assumption2}	
\vspace{1mm}

We introduce Assumptions II($\upvarphi,\eta,a$) and Assumptions II($\upvarphi,\eta$) since the choice of the 
boundary conditions defining $D(A)$ affects the definition of the spaces $H^s_\Gi=D(|A|^\frac{s}{2})$ with $s>0$. 
For this reason, we have to calibrate the regularity of the control potential $B$ according to such choice.

\subsection{Well-posedness}
Now, we cite $\cite[Theorem\ 4.1]{mio3}$ where the {well-posedness} of the bilinear Schr\"odinger equation 
$(\ref{mainx1})$ is ensured in $H^s_\Gi$ with suitable $s\geq 3$.

\begin{prop}{$\cite[Theorem\ 4.1]{mio3}$}\label{laura}
Let $\Gi$ be a compact quantum graph and $(A,B)$ satisfy Assumptions II$(\eta,\tilde d)$ with $\eta>0$ and $\tilde 
d\geq 0$. For any $T>0$ and $u\in L^2((0,T),\R)$, the flow of the ($\ref{mainx1}$) is unitary in $\Hi$ and, for any 
initial data $\psi^0\in H^{2+d}_{\Gi}$ with $d$ from Assumptions II$(\eta,\tilde d)$, there
exists a unique mild solution of ($\ref{mainx1}$) in
$H^{2+d}_{\Gi}$, i.e. a function $\psi\in C^0([0,T],H^{2+d}_{\Gi})$ such that
\begin{equation}\label{form}
\psi(t,x)=e^{-iAt}\psi^0(x)-
i\int_0^t e^{-iA(t-s)}u(s)B\psi(s,x)ds,\ \ \ \ \ \ \ \ \ \ \ \forall t\in[0,T].
\end{equation}
\end{prop}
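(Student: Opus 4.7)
The plan is to combine the classical Ball–Mardsen–Slemrod unitarity theory with a bootstrap regularity argument driven by the mapping properties of $B$ encoded in Assumptions~II$(\eta,\tilde d)$.

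First I would settle the $L^2$ theory. Since $A$ is self-adjoint on $\Hi$, $B$ is bounded and symmetric, and $u\in L^2((0,T),\R)\subset L^1((0,T),\R)$, the Ball–Mardsen–Slemrod construction from \cite{ball} produces a strongly continuous unitary propagator $\G_t^u$ on $\Hi$, and $t\mapsto \G_t^u\psi^0$ is the unique mild solution of \eqref{mainx1} in $C^0([0,T],\Hi)$ satisfying \eqref{form}. Unitarity follows from the formal self-adjointness of $A+u(t)B$ at each $t$ together with the approximation of $u$ by piecewise constant controls, where the evolution reduces to a composition of $L^2$-unitary strongly continuous groups.

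Next I would promote the solution to $H^{2+d}_\Gi=D(|A|^{(2+d)/2})$ by a contraction argument on the Duhamel map
\begin{equation*}
F(\psi)(t)=e^{-iAt}\psi^0-i\int_0^t e^{-iA(t-s)}u(s)B\psi(s)\,ds
\end{equation*}
in a ball of $C^0([0,\tau],H^{2+d}_\Gi)$ for $\tau>0$ small. Since $e^{-iAt}$ is an isometry on every $H^s_\Gi$, the estimate reduces to controlling $\|B\psi(s)\|_{(2+d)}$ in terms of $\|\psi(s)\|_{(2+d)}$. The crucial point is that Assumptions~II$(\eta,\tilde d)$ guarantee that $B$ maps $H^{2+d}_\Gi$ into a subspace of $H^{2+d}$ that also satisfies the Neumann–Kirchhoff and Dirichlet/Neumann matching at the vertices required to belong to $D(|A|^{(2+d)/2})$; an application of Cauchy–Schwarz in the time integral, together with $\|u\|_{L^1}\le \sqrt{\tau}\,\|u\|_{L^2}$, yields contraction for $\tau$ small enough. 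Iterating on a partition of $[0,T]$ with uniform length (the constants do not depend on the initial data thanks to the linearity of $F$) delivers the global mild solution in $C^0([0,T],H^{2+d}_\Gi)$, and uniqueness is inherited from the $L^2$ theory.

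The main obstacle is precisely the verification that $B\psi\in H^{2+d}_\Gi$, i.e.\ that $B\psi$ satisfies the boundary identities defining $D(|A|^{(2+d)/2})$ up to the required order. This is the reason for the case-by-case structure of Assumptions~II: the compatibility conditions at each vertex for iterates of $A$ depend on whether $\Gi$ is equipped with $(\Di)$, $(\NN)$, or $(\Di/\NN)$, and one must check that the images of $B$ prescribed in each of the three items do land in $D(|A|^{(2+d)/2})$. Once this is established, the three other ingredients—unitarity of $e^{-iAt}$ on $H^{2+d}_\Gi$, the $L^1$ bound on $u$, and the Banach fixed-point principle—assemble routinely into the stated well-posedness.
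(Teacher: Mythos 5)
The paper does not actually prove this proposition; it imports it verbatim from \cite[Theorem\ 4.1]{mio3}, whose proof follows the Beauchard--Laurent scheme of \cite{laurent}. Measured against that, your $L^2$ step (Ball--Marsden--Slemrod unitarity and uniqueness of the mild solution in $C^0([0,T],\Hi)$) is fine, but the bootstrap to $H^{2+d}_{\Gi}$ rests on a claim that is false and that Assumptions II are deliberately \emph{not} making. You assert that Assumptions II$(\eta,\tilde d)$ guarantee $B:H^{2+d}_{\Gi}\to H^{2+d}_{\Gi}$, i.e.\ that $B\psi$ again satisfies the vertex conditions defining $D(|A|^{(2+d)/2})$. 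They only give $Ran(B|_{H^{2+d}_{\Gi}})\subseteq H^{2+d}\cap H^2_{\Gi}$ (plus $H^{1+d}_{\NN\KK}$ in the ($\NN$) and ($\Di$) cases), and for $2+d\geq 3$ this is a strictly larger space: membership in $D(|A|^{(2+d)/2})$ forces boundary/vertex conditions on $\Delta(B\psi)$ (vanishing at Dirichlet vertices, continuity and Kirchhoff conditions at internal ones) which a multiplication-type $B$ does not preserve. This is exactly the classical obstruction on the interval, where $\mu\psi\in H^3\cap H^1_0$ but $(\mu\psi)''$ need not vanish at the endpoints, so $\mu\psi\notin H^3_{(0)}$. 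If $B$ did preserve $H^{2+d}_{\Gi}$ the proposition would indeed be a routine contraction, but then none of the paper's concrete control operators would satisfy the hypotheses.

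Because of this, your fixed-point argument does not close: for fixed $s$ the integrand $e^{-iA(t-s)}u(s)B\psi(s)$ is not in $H^{2+d}_{\Gi}$, so $F(\psi)(t)$ cannot be bounded in $\|\cdot\|_{(2+d)}$ by estimating the integrand pointwise in time. The missing ingredient is the regularizing effect of the Duhamel integral: one must prove that $f\mapsto\int_0^t e^{iAs}u(s)f(s)\,ds$ maps time-continuous functions valued in $H^{2+d}\cap H^{1+d}_{\NN\KK}\cap H^2_{\Gi}$ into $C^0([0,T],H^{2+d}_{\Gi})$. This is done by expanding on the eigenbasis $(\phi_k)_{k\in\N^*}$, integrating by parts in $x$ inside $\la\phi_k,f(s)\ra_{L^2}$ to isolate the vertex traces, and then summing $\sum_k k^{2(2+d)}\big|\int_0^t u(s)e^{i\lambda_k s}a_k(s)\,ds\big|^2$ with the help of an Ingham/Riesz-basis type inequality in time; this is the content of \cite[Proposition\ 2]{laurent} and of the proof of \cite[Theorem\ 4.1]{mio3}, and it is precisely what the case-by-case calibration of $d$, $d_1$ and the $\NN\KK$ conditions in Assumptions II is designed to feed. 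As written, your argument establishes well-posedness only in $H^2_{\Gi}$, where $B$ genuinely acts invariantly, not in $H^{2+d}_{\Gi}$.
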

\begin{osss}\label{wellene}
Let $\upvarphi:=(\varphi_{k})_{k\in\N^*}\subseteq(\phi_k)_{k\in\N^*}$ be an orthonormal system of $\Hi$ made by 
eigenfunctions of $A$ and $\widetilde\Hi:=\overline{\spn\{\varphi_k\ |\ k\in\N^*\}}^{\ L^2}.$ We notice that the 
statement of Proposition $\ref{laura}$ can be ensured in $\widetilde\Hi$ as the propagator $\G^u_t$ 
preserves the 
space $H^2_{\Gi}\cap\widetilde\Hi$ when $B:H^2_{\Gi}\cap\widetilde\Hi\longrightarrow H^2_{\Gi}\cap\widetilde\Hi$. 
Thus, if $(A,B)$ satisfies Assumptions II$(\upvarphi,\eta,\tilde d)$ with $\eta>0$ and $\tilde d\geq 0$, then, for 
every $\psi^0\in H^{2+d}_{\Gi}\cup\widetilde\Hi$ with $d$ from Assumptions II$(\upvarphi,\eta,\tilde d)$ and $u\in 
L^2((0,T),\R)$, there exists a unique mild solution of the ($\ref{mainx1}$) in $H^{2+d}_{\Gi}\cup\widetilde\Hi.$
\end{osss}

\subsection{Abstract global exact controllability result}\label{globalcon}

In the following theorem, we provide the main abstract result of the work regarding the global exact 
controllability of the $(\ref{mainx1})$.
\begin{teorema}\label{global} \needspace{30mm}
Let $\Gi$ be a compact quantum graph and $(\lambda_k)_{k\in\N^*}$ be the ordered sequence of 
eigenvalues of 
$A$. Let the following hypotheses be satisfied.
\begin{itemize}
\needspace{10mm} \item The eigenvalues $(\lambda_k)_{k\in\N^*}$ are simple. There exists an entire function 
$G\in L^\infty(\R,\R)$  such that there exist 
$J,I>0$ such that $|G(z)|\leq J e^{I|z|}$ for every $z\in\C$ and the numbers $\{\pm\sqrt{\lambda_k}\}_{k\in\N^*}$ 
are simple zeros of $G$. In addition, there exist $\tilde d\geq 0$ and $C>0$ such that 
\begin{equation}\label{stimabasso}\Big|G'\big(\pm\sqrt{\lambda_k}\big)\Big|\geq \frac{C}{k^{1+\tilde d}},\ \  \  \ 
\ \ \ \ \ \ \ \ 
\ \ \ \forall k\in\N^*.\end{equation}
\item The couple $(A,B)$ satisfy Assumptions I$(\eta)$ and Assumptions II$(\eta,\tilde d)$ for $\eta>0$.
	\end{itemize}
	
	\noindent
	Then, the $(\ref{mainx1})$ is globally exactly controllable in $H^{2+d}_{\Gi}$ with $d$ from 
Assumptions II$(\eta,\tilde d)$.
		\end{teorema}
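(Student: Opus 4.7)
The plan is to reduce global exact controllability to the combination of two ingredients already set up in the paper: a local exact controllability result around the first eigenfunction $\phi_1$, obtained by solving a trigonometric moment problem, and the global approximate controllability of Appendix \ref{approximate}. By time-reversibility of the \eqref{mainx1} (the map $u(t)\mapsto -u(T-t)$ reverses the propagator), it suffices, given $\psi^1,\psi^2\in H^{2+d}_{\Gi}$ of equal $L^2$-norm, to steer $\psi^1$ exactly to $\phi_1$ and then $\phi_1$ exactly to $\psi^2$. Approximate controllability in $H^{2+d}_{\Gi}$ delivers both endpoints to an arbitrarily small neighborhood of $\phi_1$, after which local exact controllability closes the gap.

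The analytic heart of the argument is the local step. Linearizing Duhamel's formula around $\psi^0=\phi_1$ and projecting on the eigenbasis $(\phi_k)_{k\in\N^*}$ (which is an orthonormal basis since $(\lambda_k)$ is simple), the problem of reaching a target $\psi^T$ close to $e^{-i\lambda_1 T}\phi_1$ reduces at first order to finding $u\in L^2((0,T),\R)$ solving
\[
\int_0^T u(s)\, e^{i(\lambda_k-\lambda_1)s}\,ds \;=\; c_k(\psi^T), \qquad k\in\N^*,
\]
together with the real-valuedness constraint that couples indices $k$ and their conjugate partners. The coefficients $c_k$ are obtained by dividing the projection of $\psi^T-e^{-i\lambda_1 T}\phi_1$ on $\phi_k$ by $\la B\phi_1,\phi_k\ra_{L^2}$; Assumptions I$(\eta)$ guarantees this division costs at most a factor $k^{2+\eta}$, and Assumptions II$(\eta,\tilde d)$ ensures $B$ maps $H^{2+d}_{\Gi}$ into a space compatible with the functional setting used to invert the linearized equation. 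This is precisely the framework of Proposition \ref{localteorema}.

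The main obstacle, and the reason the entire function $G$ enters, is solving this moment problem in $L^2(0,T)$ with quantitative bounds in a weighted $\ell^2$ of data $(c_k)_{k\in\N^*}$. Because only the weak gap condition \eqref{g13} is available on $(\lambda_k)$, the family $\{e^{i(\lambda_k-\lambda_1)t}\}_{k\in\N^*}$ need not be a Riesz basis in $L^2(0,T)$ for any $T$. What replaces this is a generalized Ingham / Paley--Wiener construction: using $G$, entire of exponential type with $G\in L^\infty(\R)$ and simple zeros at $\{\pm\sqrt{\lambda_k}\}$, one builds a biorthogonal family $(\theta_k)_{k\in\N^*}\subset L^2(0,T)$ to $\{e^{i(\lambda_k-\lambda_1)t}\}$ whose norms are controlled from above by $1/|G'(\sqrt{\lambda_k})|$. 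The lower bound \eqref{stimabasso} then yields $\|\theta_k\|_{L^2}\leq C k^{1+\tilde d}$, which combined with the $k^{-(2+\eta)}$ loss from Assumptions I lets the series $u=\sum_k c_k \theta_k$ converge in $L^2(0,T)$ provided $c_k$ decays like $k^{-(2+\eta+\tilde d+\epsilon)}$, i.e.\ provided $\psi^T\in H^{2+d}_{\Gi}$ with $d\geq a+\eta$ for $a=\tilde d$. This is exactly the regularity assumption made in Assumptions II$(\eta,\tilde d)$; I expect Section \ref{trigonometric} to provide precisely this quantitative moment-problem result.

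Once the linearized moment problem is solved continuously in the relevant weighted spaces, the nonlinear remainder in Duhamel's formula is handled by a standard Banach fixed-point argument (or equivalently the inverse mapping theorem): Assumptions II$(\eta,\tilde d)$ ensure that $B$ preserves the relevant subspaces of $H^{2+d}_{\Gi}$, so the remainder is quadratic in the control and contracts on a small ball, yielding local exact controllability around $\phi_1$ in $H^{2+d}_{\Gi}$. Concatenating the three stages \emph{approximate + local + time-reversed local + approximate} produces a single control $u\in L^2((0,T),\R)$ steering $\psi^1$ exactly to $\psi^2$, which is the claim. The delicate part is genuinely the exponent bookkeeping in the moment-problem step, where the interplay between $\eta$ (from Assumptions I) and $\tilde d$ (from the bound on $G'$) must match the loss of regularity dictated by the choice of $d$ in Assumptions II for each of the three boundary condition regimes.
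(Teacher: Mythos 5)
Your proposal follows essentially the same route as the paper: global approximate controllability (Appendix \ref{approximate}) brings both states near $\phi_1$, local exact controllability around $\phi_1$ (Proposition \ref{localteorema}, proved by the inverse function theorem applied to the end-point map and the solvability of the moment problem \eqref{mome_generic} via the entire function $G$ and Paley--Wiener), and time reversal glues the four stages together exactly as in Section \ref{proofglobal}.

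One quantitative point in your sketch does not close as written. You propose to solve the moment problem by summing $u=\sum_k c_k\theta_k$ over the biorthogonal family with $\|\theta_k\|_{L^2}\leq Ck^{1+\tilde d}$, and you assert that absolute convergence holds \emph{precisely} when $d\geq \tilde d+\eta$. But $d\geq\tilde d+\eta$ only places the data $(c_k)=(x_k\la\phi_k,B\phi_1\ra^{-1})_k$ in $h^{\tilde d}$, i.e.\ $c_k$ decays roughly like $k^{-\tilde d-1/2-\epsilon}$, whereas $\sum_k|c_k|\,\|\theta_k\|_{L^2}<\infty$ requires decay faster than $k^{-(2+\tilde d)}$; the naive summation therefore loses about $3/2$ derivatives relative to the stated hypotheses. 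This is exactly why the paper does not sum against the biorthogonal family directly but instead passes through the Riesz basis of divided differences of exponentials (Theorem \ref{riesz}, the operators $F(\upnu)$, and Proposition \ref{inequality_prop}), which yields solvability of \eqref{momem} for \emph{all} $(x_k)\in h^{\tilde d}$ and not merely for more rapidly decaying sequences. The biorthogonal-family estimate of Lemma \ref{biorthogonal_estimation} enters only as an ingredient to bound the entries of the matrices $F_m$, not as the convergence mechanism. With that replacement your exponent bookkeeping matches the paper's, and the remaining steps (real-valuedness of $u$ via the symmetrized sequence, the tangent-space reduction, and the concatenation of the four controls) are as you describe.
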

		\begin{proof}
		 See Section \ref{proofglobal}.\qedhere
		\end{proof}

The proof of Theorem \ref{global} is obtained by extending a local exact controllability result which follows from 
the solvability of a suitable moment problem in a specific subspace of $\ell^2(\C)$. In other words, we 
ensure that, 
for every $(x_{k})_{k\in\N^*}$ in such subspace, there exists $u\in L^2((0,T),\R)$ with $T>0$ such that 
\begin{equation}\begin{split}\label{mome_generic}
	{x_{k}}=-i\int_{0}^Tu(\tau)e^{i(\lambda_k-\lambda_1)\tau}d\tau,\ \ \ \ \ \ \ \forall k\in\N^*.
	\\
	\end{split}\end{equation} 
	In the next section, we develop a new technique leading to such result when the hypotheses of Theorem 
\ref{global} are satisfied. In particular, the entire function $G$ is used to construct the control $u$ satisfying 
\eqref{mome_generic} thanks to the Paley-Wiener's Theorem. The lower bound \eqref{stimabasso} provides the 
regularity of the subspace of $\ell^2(\C)$ where we can consider $({x_{k}})_{k\in\N^*}$ in the moment problem 
\eqref{mome_generic}.
	
	\vspace{0.5mm}
	
	In Section \ref{generic_application}, we show how to construct an entire function $G$ satisfying the 
hypothesis of Theorem \ref{global} when $\Gi$ is an appropriate star graph. In such framework, it is possible to 
see the numbers $\{\pm\sqrt{\lambda_k}\}_{k\in\N^*}$ as the zeros of a specific function that we use to 
define $G$.

\begin{oss}
	When $\Gi=(0,1)$,  Ingham's type theorems lead to the solvability of \eqref{mome_generic} for sequences 
$({x_{k}})_{k\in\N^*}\in\ell^2(\C)$. Such techniques are valid thanks to the {spectral gap} 
$\inf_{{k\in\N^*}}|\lambda_{k+1}-\lambda_k|> 0$ as explained in \cite[Appendix\ B]{laurent}. When $\Gi$ is a 
compact graph and the weak spectral gap \eqref{weak_gap} is satisfied with suitable $\tilde d>0$, such result can 
be ensured for $(x_{k})_{k\in\N^*}\in h^{\tilde d(\MM-1)}$ with $\MM$ from \eqref{g13} (see \cite[Appendix\ 
B]{mio3}).
\end{oss}

\section{Trigonometric moment problems}\label{trigonometric}
\subsection{On the solvability of the moment problem}\label{trigonometric_intro}
The aim of this section is to prove the next proposition which ensures the solvability of the moment problem 
\eqref{mome_generic} when the hypotheses of Theorem \ref{global} are valid.

\begin{prop}\label{entire}
	Let $(\lambda_k)_{k\in\N^*}\subset \R^+$ be an ordered sequence of pairwise distinct numbers so that there 
exist $\delta,C_1,C_2>0$ and $\MM\in\N^*$ such 
that $\inf_{{k\in\N^*}}\big|\sqrt{\lambda_{k+\MM}}-\sqrt{\lambda_k}\big|\geq\delta \MM$ and
	\begin{align}\label{ultimo}C_1 k^2 
\leq\lambda_k\leq C_2 k^2,\ \ \ \ \ \ \ \ \ \ \ \forall k\geq 2.\end{align}
Let $G$ be an entire function so that 
$\{\pm\sqrt{\lambda_k}\}_{k\in\N^*}$ are its simple zeros, $G\in L^\infty(\R,\R)$ and there exist $J,I>0$ such that 
$|G(z)|\leq J e^{I|z|}$ for every $z\in\C.$
	If there exist $\tilde d\geq 0$ and $C>0$ such that $\big|G'(\pm\sqrt{\lambda_k})\big|\geq 
\frac{C}{k^{1+\tilde 
d}}$ for every $k\in\N^*,$
	then for $T>2\pi/\delta$ and for every $(x_k)_{k\in\N^*}\in h^{\tilde d}(\N^*,\C)$ with $x_1\in\R$, there 
exists $u\in L^2((0,T),\R)$ so that
\begin{equation}\label{momem}\begin{split}
{x_{k}}=\int_{0}^Tu(\tau)e^{i(\lambda_k-\lambda_1) \tau}d\tau,\ \ \ \ \ \ \forall k\in\N^*.\\
	\end{split}\end{equation}
	\end{prop}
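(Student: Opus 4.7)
The plan is to solve the moment problem by constructing $u$ as a series built from $G$ via the Paley--Wiener theorem. Setting $z_k:=\sqrt{\lambda_k}$, the natural building blocks are, for each $k\in\N^*$,
$$
G_k(z)\,:=\,\frac{G(z)}{G'(z_k)\,(z-z_k)},\qquad G_{-k}(z)\,:=\,\frac{G(z)}{G'(-z_k)\,(z+z_k)},
$$
which are entire of exponential type $\leq I$ (since $\pm z_k$ are simple zeros of $G$) and belong to $L^2(\R)$: indeed $G$ is bounded on $\R$, so $G_k$ decays like $1/|z|$ at infinity while remaining bounded near $z_k$. The interpolation properties $G_k(z_j)=\delta_{jk}$ and $G_k(-z_j)=0$ are immediate. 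By Paley--Wiener each $G_k$ is the Fourier transform of some $g_k\in L^2(\R)$ supported in $[-I,I]$, and the interpolation translates into the biorthogonality
$$
\int_{-I}^{I}g_k(\sigma)\,e^{iz_j\sigma}\,d\sigma=\delta_{jk},\qquad\int_{-I}^{I}g_k(\sigma)\,e^{-iz_j\sigma}\,d\sigma=0,\qquad\forall j\in\N^*.
$$

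The next step is to combine the families $\{g_k,g_{-k}\}$ into a real-valued biorthogonal system on $(-I,I)$, using that $G\in L^\infty(\R,\R)$ and the $\pm z_k$-symmetry of the zero set; the condition $x_1\in\R$ is precisely what is needed to pin down the single degree of freedom corresponding to $k=1$ (where $\pm z_1$ give the same squared frequency $\lambda_1-\lambda_1=0$). The delicate task that remains is to transfer this biorthogonality for the \emph{square-root} frequencies $\{e^{\pm iz_k\sigma}\}$ on $(-I,I)$ into solvability of the moment problem on $(0,T)$ for the \emph{squared} frequencies $\lambda_k-\lambda_1=z_k^2-z_1^2=(z_k-z_1)(z_k+z_1)$. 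The threshold $T>2\pi/\delta$ is precisely the generalised Ingham--Beurling completeness time for the sequence $(z_k)$ with average gap $\delta$ and shift $\MM$, which drives the construction and allows one to work on the prescribed time interval after an appropriate translation.

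Convergence of the ensuing series $u=\sum_k x_k v_k$ in $L^2(0,T)$ is controlled by the spectral bound $|G'(z_k)|\geq C/k^{1+\tilde d}$. Combined with $\|G\|_{L^\infty(\R)}<\infty$, this gives $\|G_k\|_{L^2(\R)}\lesssim 1/|G'(z_k)|\lesssim k^{1+\tilde d}$, and by Plancherel $\|g_k\|_{L^2}\lesssim k^{1+\tilde d}$. The single power of $k$ gained in the transfer step --- morally, an integration by parts that absorbs the factor $z_k+z_1\sim k$ arising from the factorisation above --- matches the weight $k^{\tilde d}$ in the $h^{\tilde d}$-norm of $(x_k)$ and makes the series $L^2$-summable.

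I anticipate the main obstacle to be precisely this transfer step between the two pictures: Paley--Wiener on $G$ naturally produces a biorthogonal family for the square-root frequencies $\{e^{\pm iz_k\sigma}\}$, whereas the moment problem is formulated for the squared frequencies $\{e^{i(\lambda_k-\lambda_1)\tau}\}$. Making the threshold $T>2\pi/\delta$, the $k^{1+\tilde d}$ growth of the biorthogonal family, the reality constraint and the special role of $x_1$ mesh consistently requires a careful choice of transform and is, I expect, the technical heart of the argument.
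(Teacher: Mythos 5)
Your outline correctly identifies several of the paper's ingredients: the Paley--Wiener construction applied to $G_k(z)=G(z)/(z-\nu_k)$ for the square-root frequencies $\nu_k=\pm\sqrt{\lambda_k}$, the resulting biorthogonal family with $L^2$-norm growing like $k^{1+\tilde d}$, the gain of one power of $k$ coming from the factorisation $\lambda_j-\lambda_k=(\sqrt{\lambda_j}-\sqrt{\lambda_k})(\sqrt{\lambda_j}+\sqrt{\lambda_k})$, and the role of $x_1\in\R$ together with the symmetric extension $\tilde x_{-k}=\overline{x_k}$ in forcing $u$ to be real. But you explicitly defer the one step on which everything hinges --- passing from biorthogonality for $\{e^{\pm i\sqrt{\lambda_k}\sigma}\}$ on $(-I,I)$ to solvability of the moment problem for $\{e^{i(\lambda_k-\lambda_1)\tau}\}$ on $(0,T)$ --- and label it an anticipated obstacle rather than resolving it. There is no direct ``transform'' or ``integration by parts'' that effects this passage, so as written the proof is incomplete at its central point.

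The paper's mechanism is the divided-differences machinery of Avdonin--Moran and Baiocchi--Komornik--Loreti (Theorem 3.29 of D\'ager--Zuazua): one works from the start with the \emph{squared}, shifted frequencies $\alpha_k=\pm(\lambda_k-\lambda_1)$, which satisfy the weak gap condition \eqref{gapp11} with parameters $\MM,\delta'$; the divided differences ${\bf \Xi}=F({\bf \upalpha})^*{\bf e}$ of the exponentials $e^{i\alpha_k t}$ form a Riesz basis of their closed span for $T>2\pi/\delta'$, and solvability of \eqref{momem} for $(x_k)\in h^{\tilde d}$ reduces to showing $h^{\tilde d}\subseteq D(F({\bf \upalpha})^*)$, i.e.\ to bounding the block matrices $F_m$. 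This is where your Paley--Wiener family actually enters: the uniqueness of the biorthogonal family in the span, together with the Riesz-basis inequality of Proposition \ref{inequality_prop}, converts the bound $\|w_k\|_{L^2}\lesssim k^{1+\tilde d}$ into an entrywise bound on $F_m({\bf \upnu}^m)$ for the square-root sequence (Lemma \ref{entire1}); the factorisation $\theta_j-\theta_l=(\nu_j-\nu_l)(\nu_j+\nu_l)$ then gives $|F_{m;j,k}({\bf \Theta}^m)|\lesssim |F_{m;j,k}({\bf \upnu}^m)|\min\{|\nu_l|^{-1}\}$, which is the precise form of the one-power gain you gesture at (Lemma \ref{entire2}). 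Without introducing the operators $F({\bf \upnu})$, $F({\bf \Theta})$ and their domains, the threshold $T>2\pi/\delta$, the $k^{1+\tilde d}$ growth and the weight $\tilde d$ cannot be made to mesh; that is the idea your proposal is missing.
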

	The proof of Proposition \ref{entire} is provided in Section \ref{proofen} by gathering some technical results 
developed below. We suggest to the uninterested reader to skip it and pass to the Section \ref{proofglobal}.

\begin{osss}\label{entirecoro}
	Let the hypotheses of Proposition \ref{entire} be guaranteed. From Proposition \ref{entire}, for 
$T>2\pi/\delta$ and for every 
$(x_k)_{k\in\N^*}\in h^{\tilde d}(\N^*,\C)$ with $x_1\in\R$, there exists $u\in L^2((0,T),\R)$ such that 
\begin{equation*}\begin{split}
{x_{k}}=\int_{0}^Tu(\tau)e^{-i(\lambda_k-\lambda_1) \tau}d\tau,\ \ \ \ \ \  \ \ \ \ \ \forall k\in\N^*.\\
	\end{split}\end{equation*}
Indeed, for every 
$(x_k)_{k\in\N^*}\in h^{\tilde d}(\N^*,\C)$ with $x_1\in\R$, there exists $u\in L^2((0,T),\R)$ such that $
\overline{x_{k}}=\int_{0}^Tu(\tau)e^{i(\lambda_k-\lambda_1) \tau}d\tau$ for every $k\in\N^*.$
The claim is proved by conjugating the last expression.
\end{osss}

\subsection{Families of functions in a Hilbert space}
Let $\Z^*=\Z\setminus\{0\}$. We denote by $\la \cdot,\cdot\ra_{L^2(0,T)}$ the scalar product in $L^2((0,T),\C)$ 
with $T>0.$
\begin{defi}
	Let $(f_k)_{k\in\Z^*}$ be a family of functions in $L^2((0,T),\C)$ with $T>0$. The family $(f_k)_{k\in\Z^*}$ 
is 
said to be minimal if and only if $f_k\not\in\overline{\spn\{f_j:j\neq k\}}^{\, L^2}$ for every $k\in\Z^*.$ 
\end{defi}
	
	\begin{defi} A biorthogonal family to $(f_k)_{k\in\Z^*}\subset L^2((0,T),\C)$ is a sequence of functions 
$(g_k)_{k\in\Z^*}$ in $L^2((0,T),\C)$ such that $\la f_k, g_j\ra_{L^2(0,T)}=\delta_{k,j}$ for every $k,j\in\Z^*.$
	\end{defi}
	
    \begin{osss}\label{unique_bio}
    When $(f_k)_{k\in\Z^*}$ is minimal, there exists an unique biorthogonal family $(g_k)_{k\in\Z^*}$ to 
$(f_k)_{k\in\Z^*}$ belonging to $X:=\overline{\spn\{f_j:j\in\Z^*\}}^{\, L^2}$. Indeed, $(g_k)_{k\in\Z^*}$ can be 
constructed by setting $$g_k=(f_k-\pi_kf_k)\|f_k-\pi_kf_k\|^{-2}_{L^2(0,T)},\ \ \ \ \ \ \ \forall k\in\Z^*$$ where 
$\pi_k$ is the orthogonal projector onto $\overline{\spn\{f_j:j\neq k\}}^{\, L^2}$. The unicity of 
$(g_k)_{k\in\Z^*}$ follows as, for any biorthogonal family $(g^1_k)_{k\in\Z^*}$ in $X$, we have $\la g_k 
-g_k^1,f_j\ra_{L^2(0,T)}=0$ for every $j,k\in\Z^*$, which implies $g_k 
=g_k^1$ for every $k\in\Z^*.$
       \end{osss}
    
    \begin{osss}\label{minimality}
If a sequence of functions $(f_k)_{k\in\Z^*}\subset L^2((0,T),\C)$ admits a biorthogonal family 
$(g_k)_{k\in\Z^*}$, then it is minimal. Indeed, if we assume that there exists $k\in\Z^*$ such that 
$f_k\in\overline{\spn\{f_j:j\neq k\}}^{\, L^2}$, then the relations $\la 
f_k,g_j\ra_{L^2(0,T)}=0$ for every $j\in\Z^*\setminus\{k\}$ would imply $\la 
f_k,g_k\ra_{L^2(0,T)}=0$ which is absurd.
    \end{osss}

\begin{defi}
Let $(f_k)_{k\in\Z^*}$ be a family of functions in $L^2((0,T),\C)$ with $T>0$. The family $(f_k)_{k\in\Z^*}$ is a 
Riesz basis of $\overline{\spn\{f_j:j\in\Z^*\}}^{\, L^2}$ if and only if it is isomorphic to an orthonormal system.
\end{defi}

\begin{osss}\label{unique_bio_riesz}    
    Let $(f_k)_{k\in\Z^*}$ be a Riesz basis of $X:=\overline{\spn\{f_j : j\in\Z^*\}}^{\, L^2}$. The sequence 
$(f_k)_{k\in\Z^*}$ is also minimal and the biorthogonal family to $(f_k)_{k\in\Z^*}$ can be uniquely defined in $X$ 
thanks to Remark \ref{unique_bio}. The biorthogonal family to $(f_k)_{k\in\Z^*}$ forms a Riesz basis of $X$.
\end{osss}

Now, we provide an important property on the Riesz basis proved in $\cite[Appendix\ B.1]{laurent}$.

\begin{prop}{$\cite[Appendix\ B; Proposition\ 19]{laurent}$}\label{inequality_prop}
\,\,Let $(f_k)_{k\in\Z^*}$ be a family of functions in $L^2((0,T),\C)$ with $T>0$ and let $(f_k)_{k\in\Z^*}$ be a 
Riesz basis of $\overline{\spn\{f_k : k\in\Z^*\}}^{\, L^2}$. There exist $C_1,C_1>0$ such that
\begin{equation*}C_1\|{\bf x}\|^2_{\ell^2}\leq\int_0^{ T}\Big|\sum_{k\in\Z^*}x_kf_k\Big|^2ds\leq 
	C_2\|{\bf x}\|^2_{\ell^2},\ \ \  \ \ \  \ \forall {\bf x}:=(x_k)_{k\in N^*}\in\ell^2(\Z^*,\C).\end{equation*}
 \end{prop}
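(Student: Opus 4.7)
The plan is to reduce the inequality directly to Parseval's identity via the isomorphism implicit in the definition of a Riesz basis. By hypothesis, $(f_k)_{k\in\Z^*}$ is a Riesz basis of $X := \overline{\spn\{f_k : k\in\Z^*\}}^{\,L^2}$, meaning it is isomorphic to an orthonormal system. Made concrete, this furnishes a bounded linear bijection $T : X \to X$ with bounded inverse such that $(e_k)_{k\in\Z^*} := (Tf_k)_{k\in\Z^*}$ is an orthonormal basis of $X$.

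The first step is to invoke Parseval's identity for the orthonormal basis $(e_k)_{k\in\Z^*}$: for every ${\bf x} = (x_k)_{k\in\Z^*} \in \ell^2(\Z^*, \C)$, the series $\sum_{k\in\Z^*} x_k e_k$ converges in $L^2((0,T), \C)$ and
\begin{equation*}
\Big\| \sum_{k\in\Z^*} x_k e_k \Big\|^2_{L^2(0,T)} \,=\, \|{\bf x}\|^2_{\ell^2}.
\end{equation*}
The second step is to transfer this to $(f_k)_{k\in\Z^*}$ via the continuous linear map $T^{-1}$. By linearity and continuity, $\sum_{k\in\Z^*} x_k f_k = T^{-1}\!\sum_{k\in\Z^*} x_k e_k$ converges in $L^2((0,T), \C)$, and the two-sided operator norm estimates
\begin{equation*}
\|T\|^{-1}\,\|v\|_{L^2(0,T)} \,\leq\, \|T^{-1} v\|_{L^2(0,T)} \,\leq\, \|T^{-1}\|\,\|v\|_{L^2(0,T)},
\end{equation*}
applied to $v = \sum_{k\in\Z^*} x_k e_k$ and squared, yield the stated double inequality with $C_1 = \|T\|^{-2}$ and $C_2 = \|T^{-1}\|^{2}$.

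Since this is essentially just an unfolding of the definition of a Riesz basis, there is no genuine obstacle. The only subtlety worth flagging is that the phrase \virgolette{isomorphic to an orthonormal system} must be read in the strong sense of a bounded linear bijection with bounded inverse; this is what allows both directions of Parseval's identity to transfer to $(f_k)_{k\in\Z^*}$ with multiplicative constants depending only on the operator norms of $T$ and $T^{-1}$.
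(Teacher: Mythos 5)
Your argument is correct. Note that the paper does not prove this proposition at all: it is quoted verbatim from Beauchard--Laurent (\cite[Appendix B, Proposition 19]{laurent}) and used as a black box, so there is nothing to compare against. What you give is the standard, self-contained derivation: unfold the definition of a Riesz basis as the image of an orthonormal system under a bounded bijection $T$ with bounded inverse, apply Parseval to $\sum_k x_k e_k$, and transfer the identity through $T^{-1}$ to get the two-sided bound with $C_1=\|T\|^{-2}$ and $C_2=\|T^{-1}\|^{2}$; the convergence of $\sum_k x_k f_k$ is correctly obtained as the continuous image of a convergent series. The only cosmetic issue is that you reuse the letter $T$ both for the time horizon in $L^2((0,T),\C)$ and for the isomorphism; rename one of them to avoid a clash with the statement's notation.
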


\subsection{Riesz basis made by divided differences of exponentials}\label{divided_difference}
Let ${\bf {\upnu}}=(\nu_k)_{k\in\Z^*}\subset\R$ be an ordered sequence of pairwise distinct numbers such that 
there exist $\MM\in\N^*$ and $\delta>0$ such that 
\begin{equation}\begin{split}\label{gapp11}
\inf_{\{k\in\Z^*\ :\ k+\MM\neq 0 \}}|\nu_{k+\MM}-\nu_k|\geq\delta \MM.\\
\end{split}\end{equation}\needspace{2mm}
\noindent The last relation yields that it does not exist $\MM$ consecutive $k,k+1\in\Z^*$ such that 
$|\nu_{k+1}  -  \nu_{k}| < \delta$ and then, there exist some $j\in\Z^*\setminus\{-1\}$ such that 
$|\nu_{j+1}-\nu_{j}|\geq\delta$. This fact leads to a partition of $\Z^*$ in subsets $\{E_m\}_{m\in\Z^*}$ that we 
construct as follows. We denote by 
$(l_m)_{m\in\Z^*}\subseteq\Z^*\setminus\{-1\}$ the ordered
sequence of all the numbers such that 
$$|\nu_{l_m+1}-\nu_{l_m}|\geq\delta.$$ We also add the value $-1$ to such 
sequence when $|\nu_{1}-\nu_{-1}|\geq\delta$. We denote by
$\{E_{m}\}_{m\in\Z^*}$ the sets:
\begin{equation}\begin{split}\label{partitionpalle}
E_{-1}=\Big\{k\in \Z^*\ :\ &l_{-1}+1\leq k\leq 
l_{1}\Big\},\\
E_m=\Big\{k\in \Z^*\ :\ l_{m}+1&\leq k\leq l_{m+1}\Big\},\ \ \ \ \ \forall 
m\in\Z^*\setminus\{-1\}.
\end{split}\end{equation}
The partition of $\Z^*$ in subsets $\{E_m\}_{m\in\Z^*}$ also defines an equivalence 
relation in $\Z^*$:
$$k\sim n\text{ if and only if there exists }m\in\Z^*\text{ such that }k,n\in E_m.$$
Now, 
$\{E_m\}_{m\in\Z^*}$ are the equivalence classes corresponding to such relation and $|E_m|\leq \MM-1$ thanks to 
\eqref{gapp11}. Let $s(m)$ 
be the smallest element of $E_m$. For every ${\bf {x}}:=(x_k)_{k\in\Z^*}\subset\C$ and $m\in\Z^*,$ we define 
$${\bf 
{x}}^m:=(x^m_l)_{l\leq |E_m|},\ \ \ \ \  : \ \ \  \ \ x_l^m=x_{s(m)+(l-1)},\ \ \ \ \ \forall l\leq |E_m|.$$
In other words, ${\bf x}^m$ is the vector in $\C^{|E_m|}$ composed by those elements of ${\bf x}$ with indices in 
$E_m$.
For every $m\in\Z^*$, we denote $F_m({\bf {\upnu}}^m):\C^{|E_m|}\rightarrow \C^{|E_m|}$ the matrix with components
\begin{equation*}
\begin{split}
F_{m;j,k}({\bf {\upnu}}^m):=\begin{cases}
\prod_{\underset{ l\leq k}{l\neq j}}(\nu^m_j-\nu_l^m)^{-1},\ \ \ \ \ & j\leq k,\\
1,\ \ \ \ \ \ \ \ \ \ \ \ \ \ \ \ & j=k=1,\\
0,\ \ \ \ \ \ \ \ \ \ \ \ \ \ \ \ & j>k,\\
\end{cases}\ \ \ \ \ \ \ \ \ \ \ \ \ \ \forall j,k\leq {|E_m|}.
\end{split}
\end{equation*}For each $k\in\Z^*$, there exists $m(k)\in\Z^*$ such that $k\in E_{m(k)}$, while $s(m(k))$ 
represents the smallest element of $E_{m(k)}$. Let $F({\bf {\upnu}})$ be the infinite matrix acting on ${\bf 
x}=(x_k)_{k\in\Z^*}\subset\C$ as follows
$$\big(F({\bf {\upnu}}){\bf x}\big)_k=\Big(F_{m(k)}({\bf {\upnu}}^{m(k)}){\bf x}^{m(k)}\Big)_{k-s(m(k))+1},\ \ \ \ 
\  \ \ \ \ \ \forall\, k\in\Z^*.$$
We consider $F({\bf {\upnu}})$ as the operator on $\ell^2(\Z^*,\C)$ defined by the action\ above and with domain
$$H({\bf {\upnu}}):=D(F({\bf {\upnu}}))=\left\{{\bf  x}:=(x_k)_{k\in\Z^*}\in\ell^2(\Z^*,\C)\ :\ F({\bf 
{\upnu}}){\bf  x}\in\ell^2(\Z^*,\C)\right\}.$$

\begin{osss}\label{limitatelo}
Each matrix $F_m({\bf {\upnu}}^m)$ with $m\in\Z^*$ is invertible and we call $F_m({\bf {\upnu}}^m)^{-1}$ its 
inverse. Now, $F({\bf {\upnu}}):H({\bf {\upnu}})\rightarrow Ran(F({\bf {\upnu}}))$ is invertible and $F({\bf 
{\upnu}})^{-1}:Ran(F({\bf {\upnu}}))\rightarrow H({\bf {\upnu}})$ is so that, for ${\bf x}\in Ran(F({\bf 
{\upnu}}))$,
$$\big(F({\bf {\upnu}})^{-1}{\bf x}\big)_k=\Big(F_{m(k)}({\bf {\upnu}}^{m(k)})^{-1}{\bf 
x}^{m(k)}\Big)_{k-s(m(k))+1},\ \ \ \ \  \ \ \ \ \forall k\in\Z^*.$$
\end{osss}

Let $F_{m(k)}({\bf {\upnu}}^{m(k)})^*$ be the transposed matrix of $F_{m(k)}({\bf {\upnu}}^{m(k)})$ for every 
$m\in\Z^*$. Let $F({\bf {\upnu}})^*$ be the infinite matrix so that, for every ${\bf 
x}=(x_k)_{k\in\Z^*}\subset\C$, 
$$\big(F({\bf {\upnu}})^*{\bf x}\big)_k=\Big(F_{m(k)}({\bf {\upnu}}^{m(k)})^*{\bf x}^{m(k)}\Big)_{k-s(m(k))+1},\ \ 
\ \ \  \ \ \ \ \forall k\in\Z^*.$$

\begin{osss}\label{aggiunto}
	When $H({\bf {\upnu}})$ is dense in $\ell^2(\Z^*,\C)$, we can consider $F({\bf {\upnu}})^*$ as the unique 
adjoint operator of $F({\bf {\upnu}})$ in $\ell^2(\Z^*,\C)$ with domain $H({\bf {\upnu}})^*:=D(F({\bf 
{\upnu}})^*)$.
	As in Remark $\ref{limitatelo}$, we define $(F({\bf {\upnu}})^*)^{-1}$ the inverse operator of $F({\bf 
{\upnu}})^*:H({\bf {\upnu}})^*\rightarrow Ran(F({\bf {\upnu}})^*)$ and $(F({\bf {\upnu}})^*)^{-1}=(F({\bf 
{\upnu}})^{-1})^*$.
	\end{osss}

	\vspace{2mm}
Let ${\bf e}$ be the sequence of functions in $L^2((0,T),\C)$ with $T>0$ so that
${\bf e}:=(e^{i\nu_k (\cdot)})_{k\in\Z^*}.$
We denote by ${\bf \Xi}$ the so-called {divided differences} of the family $(e^{i\nu_kt})_{k\in\Z^*}$ such that
\begin{align}\label{xi_riesz}{\bf \Xi}:=(\xi_k)_{k\in\Z^*}=F({\bf {\upnu}})^*{\bf e}.\end{align}

In the following theorem, we rephrase a result of Avdonin and Moran $\cite{avdd}$, which is also proved by 
Baiocchi, Komornik and Loreti in $\cite{balocchi}$. 
\begin{teorema}[Theorem\ 3.29; $\cite{wave}$]\label{riesz}
	Let $(\nu_k)_{k\in\Z^*}$ be an ordered sequence of pairwise distinct real numbers satisfying $(\ref{gapp11})$. 
If $T>2\pi/\delta$, then $(\xi_k)_{k\in\Z^*}$ forms a Riesz Basis in the space $\overline{\spn\{\xi_k:k\in\Z^*\}}^{ 
L^2}.$
\end{teorema}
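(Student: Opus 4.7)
The plan is to establish the Riesz basis property by proving the standard two-sided inequality: there exist constants $0<c_1 \leq c_2$ such that for every finitely supported $\mathbf{x}=(x_k)_{k\in\Z^*}$,
\begin{equation*}
c_1\|\mathbf{x}\|_{\ell^2}^2 \;\leq\; \int_0^T\Big|\sum_{k\in\Z^*} x_k\, \xi_k(t)\Big|^2\,dt \;\leq\; c_2\|\mathbf{x}\|_{\ell^2}^2,
\end{equation*}
and then extend by density. The starting observation is the algebraic identity obtained by unfolding $\xi_k=(F(\boldsymbol{\upnu})^*\mathbf{e})_k$: within each equivalence class $E_m$, the function $\xi_k$ (for $k\in E_m$ at position $p$) is precisely the classical \emph{divided difference} of $z\mapsto e^{izt}$ at the nodes $\nu^m_1,\ldots,\nu^m_p$. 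This is the right replacement for the bare exponentials, because divided differences remain well defined and bounded as nodes coalesce: if $\nu^m_1,\ldots,\nu^m_p$ all sit in a very short interval, $\xi_k$ is close to $\tfrac{(it)^{p-1}}{(p-1)!}e^{i\bar\nu t}$ for a representative $\bar\nu$.

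The second ingredient is a block-level Ingham/Beurling inequality adapted to the weak gap \eqref{gapp11}. Using the partition $\{E_m\}$ from \eqref{partitionpalle}, I would pick for each block $m$ a representative frequency $\mu_m$ (say $\mu_m=\nu_{s(m)}$), so that the $(\mu_m)_{m\in\Z^*}$ form an $M$-separated sequence with gap $\geq \delta$. An Ingham-type argument (cf.\ Avdonin--Moran, Baiocchi--Komornik--Loreti) then yields, for $T>2\pi/\delta$, a two-sided inequality of the form
\begin{equation*}
K_1 \sum_{m\in\Z^*}\Big\| \mathbf{y}^m\Big\|_{\C^{|E_m|}}^2 \;\leq\; \int_0^T\Big|\sum_{j\in\Z^*} y_j\, e^{i\nu_j t}\Big|^2\,dt \;\leq\; K_2 \sum_{m\in\Z^*}\Big\| \mathbf{y}^m\Big\|_{\C^{|E_m|}}^2,
\end{equation*}
where on each cluster the $L^2$ norm is only controlled \emph{block-wise} — inside a cluster the exponentials can be nearly parallel, so no genuine separation of coefficients is obtained from them alone. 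The job of the divided differences is to supply the missing separation within each finite block.

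To combine the two ingredients, I would write $\sum_k x_k\xi_k = \sum_j (F(\boldsymbol{\upnu})\mathbf{x})_j\, e^{i\nu_j t}$ (a rearrangement of the double sum using the upper-triangular structure of $F$ within each block). Plugging this into the block-Ingham inequality reduces the question to showing that within each finite-dimensional block the map $\mathbf{x}^m\mapsto F_m(\boldsymbol{\upnu}^m)\mathbf{x}^m$, evaluated in the block-$L^2$ norm of the exponential family $\{e^{i\nu_k t}\}_{k\in E_m}$, is uniformly well conditioned in $m$. Equivalently: the Gram matrix of the divided-difference family restricted to each block has eigenvalues bounded above and below by constants independent of $m$. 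The main obstacle lies exactly here — the matrices $F_m$ themselves blow up as the nodes of a block coalesce, but the corresponding exponentials $e^{i\nu_j t}$ in the block become collinear at the matching rate, so the combined object $F_m^*\mathbf{e}^m$ stays bounded on both sides. Making this cancellation quantitative uniformly in $m$ (using only that $|E_m|\leq M-1$ and the bound on the diameter of the clusters implicit in \eqref{gapp11}) is the heart of the Avdonin--Moran argument, and it is the one step that cannot be sidestepped by soft arguments; once it is in place, combining with the block-Ingham estimate above yields the desired Riesz basis inequality and completes the proof.
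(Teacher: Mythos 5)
The paper offers no proof of this statement: Theorem \ref{riesz} is quoted from the literature (Theorem 3.29 of D\'ager--Zuazua, restating Avdonin--Moran and Baiocchi--Komornik--Loreti), so your attempt can only be measured against those published arguments. Your outline does capture their architecture correctly: the identification of $\xi_k=(F({\bf {\upnu}})^*{\bf e})_k$ with the classical divided difference of $z\mapsto e^{izt}$ at the nodes of its block, the coalescence limit $\frac{(it)^{p-1}}{(p-1)!}e^{i\bar\nu t}$, and the rearrangement $\sum_k x_k\xi_k=\sum_j(F({\bf {\upnu}}){\bf x})_je^{i\nu_jt}$ are all accurate and are exactly the right ingredients.

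Two things keep this from being a proof. First, the displayed \emph{block-Ingham inequality} is false as written: with the plain Euclidean norm $\|{\bf y}^m\|_{\C^{|E_m|}}$ on each cluster, the lower bound $K_1\sum_m\|{\bf y}^m\|^2\leq\int_0^T\big|\sum_j y_je^{i\nu_jt}\big|^2dt$ fails precisely in the regime the construction is designed for --- take $y_j=1$, $y_{j+1}=-1$ for two frequencies of one cluster at distance $\varepsilon$ (allowed by \eqref{gapp11}): the left side is $2K_1$ while the right side is $O(\varepsilon^2)$. If that inequality held, the bare exponentials would already form a Riesz basis and divided differences would be superfluous; you acknowledge the near-collinearity one sentence later, but the estimate you propose to plug $F({\bf {\upnu}}){\bf x}$ into is not available. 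What Avdonin--Moran actually establish is the equivalence of $\int_0^T|\cdot|^2$ with a block norm built from the divided differences themselves, and that is not a corollary of an Ingham inequality for cluster representatives. Second, you explicitly defer the quantitative core --- the uniform-in-$m$ two-sided conditioning of the divided-difference blocks, i.e.\ the cancellation between the blow-up of $F_m({\bf {\upnu}}^m)$ and the collinearity of the exponentials --- stating that it ``cannot be sidestepped by soft arguments.'' That is true, but it is the step that constitutes the theorem. As a roadmap to the cited proof your text is essentially faithful; as a standalone proof it has a genuine gap at its center and rests on an incorrectly stated intermediate inequality.
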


\begin{osss}\label{minimality1}
Let Theorem \ref{riesz} be valid. As ${\bf \Xi}$ is a Riesz basis, it is minimal in 
$\overline{\spn\{\xi_k:k\in\Z^*\}}^{ L^2}$ and it admits a biorthogonal family ${\bf u}:=(u_k)_{k\in\Z^*}$ thanks 
to Remark \ref{minimality}. Now, it is possible to see that $$\la u_k,\xi_j\ra_{L^2(0,T)}=\big\la u_k,\big(F({\bf 
{\upnu}})^*{\bf e}\big)_j\big\ra_{L^2(0,T)}=\big\la\big(F({\bf {\upnu}}){\bf 
u}\big)_k,e^{i\nu_jt}\big\ra_{L^2(0,T)},\ \ \ \ \ \forall 
j,k\in\Z^*.$$
The last relation ensures that $F({\bf {\upnu}}){\bf u}$ is a  biorthogonal family to ${\bf e}$, which is then 
minimal.
\end{osss}
	
	\subsection{Auxiliary results}
	In this subsection, we provide few intermediate results required in the proof of Proposition \ref{entire}. 
	
	\noindent
	In Lemma \ref{linden} and Lemma \ref{biorthogonal_estimation}, we consider a suitable sequence of 
numbers ${\bf 
\upnu}:=(\nu_k)_{k\in\Z^*}$. We construct and characterize a biorthogonal family to 
$\{e^{i\nu_kt}\}_{k\in\Z^*}$ by using a function $G$ defined as in Proposition \ref{entire}.

	\noindent
In Lemma \ref{entire1}, the previous results lead to specific estimations on $ F_{m}({\bf \upnu}^m)$ with 
$m\in\Z^*$.

\noindent
In Lemma \ref{entire2} and Lemma \ref{entire3}, we consider a sequence ${\bf \Theta}:=(\theta_k)_{k\in\Z^*}$ and 
another one defined as ${\bf 
\upnu}:=(\nu_k)_{k\in\Z^*}=\big(sgn{(\theta_k)}\sqrt{|\theta_k|}\big)_{k\in\Z^*}$. We use the estimations on $ 
F_{m}({\bf \upnu}^m)$ with 
$m\in\Z^*$ provided by 
Lemma 
\ref{entire1} in order to study $ F_{m}({\bf \Theta}^m)$ 
with $m\in\Z^*$ and on the domain of the operator $ F({\bf \Theta})$.

\noindent
In the proof of Proposition \ref{entire}, we respectively denote by ${\bf \Theta}$ and ${\bf \upnu}$ the sequences 
obtained by reordering $\{\pm\lambda_{k}\}_{k\in\N^*}$ and $\{\pm\sqrt{\lambda_{k}}\}_{k\in\N^*}$ and we use Lemma 
\ref{entire3} to ensure the statement.

	\begin{lemma}\label{linden}
	Let ${\bf {\upnu}}:=(\nu_k)_{k\in\Z^*}$ be an ordered sequence of pairwise distinct real numbers. Let $G$ be an 
entire function such that $G\in L^\infty(\R,\R)$. Let exist $J,I>0$ such that $|G(z)|\leq J 
e^{I|z|}$ for 
every $z\in\C.$
	Denoted $G_k(z):={G(z)}{(z-\nu_k)^{-1}}$ with $z\in\C$ for every $k\in \Z^*$, there exists $C>0$ such that
	\begin{equation*} \|G_k\|_{L^2(\R,\R)}\leq C,\ \ \ \ \  \forall k\in\Z^*.\end{equation*}
	\end{lemma}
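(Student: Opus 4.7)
The plan is to bound $\|G_k\|_{L^2(\R)}^2$ by splitting the integral into a local piece around $\nu_k$ and a tail piece, and to show that each piece can be controlled by a constant depending only on $G$ (not on $k$). Note that the statement implicitly requires each $\nu_k$ to be a zero of $G$, otherwise $G_k$ would have a simple pole on the real line and fail to be in $L^2(\R,\C)$; under this reading, $G_k$ is entire. Since in the subsequent use (Proposition \ref{entire}) the $\nu_k$ are among $\{\pm\sqrt{\lambda_k}\}_{k\in\N^*}$, which are simple zeros of $G$, this is precisely the relevant situation.

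The key analytic input is Bernstein's inequality for entire functions of exponential type: because $G$ is entire with $|G(z)|\le J e^{I|z|}$ and $G\in L^\infty(\R,\R)$, the derivative $G'$ is also bounded on $\R$, with a bound of the form $\|G'\|_{L^\infty(\R)}\le I\,\|G\|_{L^\infty(\R)}$. Granting this, since $G(\nu_k)=0$ the mean value theorem applied on the real line yields
\[
|G(x)|=|G(x)-G(\nu_k)|\le \|G'\|_{L^\infty(\R)}\,|x-\nu_k|,\qquad \forall x\in\R,
\]
so that $|G_k(x)|\le \|G'\|_{L^\infty(\R)}$ for every $x\in\R\setminus\{\nu_k\}$; this bound is uniform in $k$.

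From here the proof is a direct splitting. Writing
\[
\|G_k\|_{L^2(\R)}^2=\int_{|x-\nu_k|\le 1}\frac{|G(x)|^2}{|x-\nu_k|^2}\,dx+\int_{|x-\nu_k|>1}\frac{|G(x)|^2}{|x-\nu_k|^2}\,dx,
\]
the first integral is estimated using the pointwise bound $|G_k(x)|\le \|G'\|_{L^\infty(\R)}$, giving at most $2\|G'\|_{L^\infty(\R)}^2$, while the second is estimated by $|G(x)|\le\|G\|_{L^\infty(\R)}$ and the integrability of $y\mapsto y^{-2}$ on $\{|y|>1\}$, giving at most $2\|G\|_{L^\infty(\R)}^2$. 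Adding and using Bernstein's inequality produces a constant $C=C(I,\|G\|_{L^\infty(\R)})$ independent of $k$.

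The only non-routine step is the invocation of Bernstein's inequality; this is a classical result for entire functions of exponential type obtained via a Cauchy-type integral representation of $G'$ from the Phragm\'en--Lindel\"of principle applied to $G$ on the upper/lower half planes, so it can be cited rather than reproved. Everything else is elementary splitting and majorization, and the resulting constant depends only on the parameters $I,J$ and on $\|G\|_{L^\infty(\R)}$, as required.
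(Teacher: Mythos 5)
Your proof is correct, but it handles the delicate part of the estimate by a different route than the paper. Both arguments begin with the same splitting of $\|G_k\|_{L^2(\R)}^2$ into the piece over $\{|x-\nu_k|\le 1\}$ and the tail, and both dispose of the tail identically using $G\in L^\infty(\R,\R)$ and the integrability of $y\mapsto y^{-2}$ away from the origin. The difference lies in the local piece: the paper stays in the complex plane, observing via \cite[p.\ 82, Theorem\ 11]{giovane} that $|G(x+iy)|\le Me^{I|y|}$ and then deforming the contour onto the upper semicircle $\nu_k+e^{i\theta}$ by the Cauchy Integral Theorem, which bounds the local integral by $M^2\int_0^\pi e^{2I\sin\theta}\,d\theta$; you instead invoke Bernstein's inequality $\|G'\|_{L^\infty(\R)}\le I\|G\|_{L^\infty(\R)}$ for entire functions of exponential type bounded on the real line, combine it with the mean value theorem and $G(\nu_k)=0$ to get the pointwise bound $|G_k(x)|\le\|G'\|_{L^\infty(\R)}$, and integrate. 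Both routes ultimately rest on the same Phragm\'en--Lindel\"of circle of ideas (Bernstein's inequality is itself proved from the half-plane bound), so neither is more elementary in substance, but yours yields an explicit real-variable pointwise bound on $G_k$ and a cleaner constant $C=M\sqrt{2(1+I^2)}$, while the paper's contour argument avoids naming Bernstein's theorem. Your observation that the lemma implicitly requires $G(\nu_k)=0$ is accurate and worth noting: the paper's contour deformation equally needs the numerator $\overline{G(\bar z)}G(z)$ to vanish to second order at $\nu_k$ in order for the integrand to be holomorphic across the boundary point, so this hypothesis is silently used in both proofs and is indeed satisfied wherever the lemma is applied (Lemma \ref{biorthogonal_estimation}, where the $\nu_k$ are simple zeros of $G$).
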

	\begin{proof}
	 We know that there exists $M>0$ so that $|G(x)|\leq M$ for every $x\in \R$. It implies that, for every 
$k\in\Z^*$, there exists $C_1>0$ (not depending on $k$) so that 
$$\|G_k\|_{L^2(\R)}^ 2=\int_\R \overline{G_k(x)}G_k(x)\,dx=\int_\R \frac{\overline{G(x)}G(x)}{(x-\nu_k)^ 2}\,dx\leq 
\int_{|x-\nu_k|\leq 1}{\overline{G(x)}G(x)}{(x-\nu_k)^{- 2}}\,dx+M^2C_1.$$
Now, from $\cite[p.\ 82;\ Theorem\ 11]{giovane}$, we have $|G(x+iy)|\leq Me^ {I |y|}$ for $x,y\in\R$. The Cauchy 
Integral Theorem ensures the existence of $C_2>0$ (not depending on $k$) so that $$\int_{|x-\nu_k|\leq 
1}\frac{\overline{G(x)}G(x)}{(x-\nu_k)^2}dx\leq\int_0^\pi\big|\overline{G(\overline{\nu_k+e^ {i\theta}})}G(\nu_k+e^ 
{i\theta})\big|d\theta\leq M^2 \int_0^\pi e^{2I\sin(\theta)}\,d\theta\leq 
M^2C_2.\qedhere$$
	\end{proof}

	\begin{lemma}\label{biorthogonal_estimation}
	Let ${\bf {\upnu}}:=(\nu_k)_{k\in\Z^*}$ be an ordered sequence of pairwise distinct real numbers satisfying 
$(\ref{gapp11})$ with $\delta>0$. Let $G$ be an entire function such that $G\in L^\infty(\R,\R)$. Let exist 
$J,I>0$ such that $|G(z)|\leq J e^{I|z|}$ for every $z\in\C.$ 
	Let $(\xi_k)_{k\in\Z^*}\subset L^2((0,T),\C)$ be defined in \eqref{xi_riesz} for 
${T}>\max\{2\pi/\delta,2I\}$. If 
$({\nu_k})_{k\in\Z^*}$ are simple zeros of $G$ such that there exist $\tilde d\geq 0$, $C>0$ such that 
\begin{equation}\label{11x}\big|G'(\nu_k)\big|\geq \frac{C}{|k|^{1+\tilde d}},\ \ \ \  \ \ \ \ \  \ \ \forall 
k\in\Z^*,\end{equation}
	then there exists $(w_k)_{k\in\Z^*}$ an unique biorthogonal family to $(e^{i\nu_kt })_{k\in\Z^*}$ in 
$\overline{\spn\{\xi_k:k\in\Z^*\}}^{ L^2}$ satisfying the following property. There exists $C_1>0$ such that 
$\|w_k\|_{L^2(0,T)}\leq{C_1}{|k|^{1+\tilde d}}$ for every $k\in\Z^*.$
	\end{lemma}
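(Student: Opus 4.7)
The plan is to build each $w_k$ explicitly by applying the Paley--Wiener theorem to the auxiliary entire function $G_k(z):=G(z)/(z-\nu_k)$. Because $\nu_k$ is a simple zero of $G$, the apparent singularity is removable, so $G_k$ is entire with $G_k(\nu_j)=0$ for $j\neq k$ and $G_k(\nu_k)=G'(\nu_k)\neq 0$. The bound $|G(z)|\leq Je^{I|z|}$ forces $G_k$ to have exponential type at most $I$, and Lemma \ref{linden} gives $\|G_k\|_{L^2(\R)}\leq C$ uniformly in $k\in\Z^*$. Paley--Wiener then furnishes $f_k\in L^2(\R)$ with $\mathrm{supp}\,f_k\subseteq[-I,I]$ and $\|f_k\|_{L^2(\R)}\leq C'$ (uniformly in $k$), such that $G_k(\omega)=\int_{-I}^{I}f_k(t)e^{-i\omega t}\,dt$ for every $\omega\in\R$.

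Evaluating this identity at $\omega=\nu_j$ produces
\[
\int_{-I}^{I}f_k(t)\,e^{-i\nu_j t}\,dt=\delta_{jk}\,G'(\nu_k).
\]
Using $T>2I$, the interval $[T/2-I,\,T/2+I]$ sits inside $[0,T]$, so, after translating by $T/2$ and rescaling, I would define
\[
\tilde w_k(t):=\frac{e^{i\nu_k T/2}}{G'(\nu_k)}\,f_k(t-T/2),\qquad t\in(0,T).
\]
A direct change of variables yields $\la e^{i\nu_j t},\tilde w_k\ra_{L^2(0,T)}=\delta_{jk}$, so $(\tilde w_k)_{k\in\Z^*}$ is a biorthogonal family to $(e^{i\nu_k t})_{k\in\Z^*}$ in $L^2((0,T),\C)$. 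The norms satisfy $\|\tilde w_k\|_{L^2(0,T)}=\|f_k\|_{L^2(\R)}/|G'(\nu_k)|$, which by the uniform bound on $\|f_k\|_{L^2}$ together with hypothesis \eqref{11x} is at most $C_1|k|^{1+\tilde d}$.

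To land inside $X:=\overline{\spn\{\xi_k:k\in\Z^*\}}^{\,L^2}$, I would set $w_k:=P_X\tilde w_k$, where $P_X$ is the orthogonal projector onto $X$. Because $F({\bf \upnu})^*$ is block-diagonal with invertible finite blocks (Remark \ref{limitatelo}), each $\xi_k$ is a finite combination of the $e^{i\nu_j t}$ in its block and vice-versa, so $X=\overline{\spn\{e^{i\nu_k t}\}}^{\,L^2}$. Since each $e^{i\nu_j t}\in X$, biorthogonality is preserved under $P_X$, while $\|w_k\|_{L^2(0,T)}\leq\|\tilde w_k\|_{L^2(0,T)}$. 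Uniqueness of $(w_k)_{k\in\Z^*}$ inside $X$ follows from Remark \ref{unique_bio}, once one notes that $(e^{i\nu_k t})_{k\in\Z^*}$ is minimal, which is exactly what Remark \ref{minimality1} records (via Theorem \ref{riesz}).

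The main technical point is keeping track of the phase $e^{i\nu_k T/2}$ introduced by the translation to $[0,T]$, so that the biorthogonality relation with respect to the convention $\la\cdot,\cdot\ra_{L^2(0,T)}$ comes out correctly; everything else is standard Paley--Wiener bookkeeping combined with the pointwise lower bound \eqref{11x}.
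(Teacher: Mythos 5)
Your proof is correct and follows essentially the same route as the paper: apply Paley--Wiener to $G_k(z)=G(z)/(z-\nu_k)$, use Lemma \ref{linden} plus Plancherel for the uniform bound, translate to $[0,T]$ and divide by $G'(\nu_k)$, then project onto $X$ and invoke Remarks \ref{minimality1} and \ref{unique_bio} for uniqueness. The only cosmetic difference is that the paper first builds the biorthogonal family to the rescaled system $(c_k^{-1}e^{i\nu_k(\cdot)})$ and renormalizes afterwards, whereas you incorporate the factor $G'(\nu_k)^{-1}$ from the start.
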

\begin{proof}
For every $k\in\Z^*$, we define $G_k(z):={G(z)}{(z-\nu_k)^{-1}}.$
Thanks to the Paley-Wiener's Theorem $\cite[Theorem\ 3.19]{wave}$, for every $k\in\Z^*$, there exists $f_k\in 
L^2(\R,\R)$ with support in $[-I, I]$ such that
$$G_k(z)=\int_{-I}^{I}e^{izs}f_k(s)ds=\int_{-T/2}^{T/2}e^{izs}f_k(s)ds=\int_0^{{T}}e^{izt}e^{-iz\frac{{T}}{2}}
f_k(t-T/2)dt.$$
	For $j,k\in\Z^*$ and $c_k:=G'(\nu_k)$, we name $v_k(t):=e^{i\nu_k\frac{{T}}{2}}\overline{f_k}(t-T/2)$. Since 
$G_k(\nu_j)=\delta_{k,j}G'(\nu_k),$
$$\la v_k, e^{i\nu_j(\cdot)}\ra_{L^2(0, T)}=G_k(\nu_j)=\delta_{k,j}G'(\nu_k)=\delta_{k,j}c_k.$$
	Thus, the sequence $(v_k)_{k\in\Z^*}$ is biorthogonal to $(c_k^{-1}e^{i\nu_k(\cdot)})_{k\in\Z^*}$. 
Thanks to the Plancherel's identity and to Lemma \ref{linden}, there exists $C_1>0$ such that 
	\begin{equation}\label{22x} \|v_k\|_{L^2(0, T)}=\|G_k\|_{L^2(\R,\R)}\leq C_1,\ \ \ \ \  \forall 
k\in\Z^*.\end{equation}
The family $(e^{i\nu_k(\cdot)})_{k\in\Z^*}$ is minimal in $X:=\overline{\spn\{\xi_k:k\in\Z^*\}}^{ L^2}$ thanks to 
Remark \ref{minimality1} and then, $(c_k^{-1}e^{i\nu_k(\cdot)})_{k\in\Z^*}$ is minimal in $X$. Defined $\pi_X$ 
the orthogonal projector onto $X$, we see that, for every $j,k\in\Z^*$,
$$\la \pi_Xv_k, c_j^{-1}e^{i\nu_j(\cdot)}\ra_{L^2(0, T)}=\la v_k, c_j^{-1}e^{i\nu_j(\cdot)}\ra_{L^2(0, 
T)}=\delta_{k,j}.$$	
The last relation and Remark \ref{unique_bio} imply that $(\pi_X v_k)_{k\in\Z^*}$ is the unique biorthogonal 
family to $(c_k^{-1}e^{i\nu_k(\cdot)})_{k\in\Z^*}$ in $X$. We denote $w_k=(\overline{c_k})^{-1}\pi_X v_k$ for every 
$k\in\Z^*$ and $(w_k)_{k\in\Z^*}$ is the unique biorthogonal family to $(e^{i\nu_k(\cdot)})_{k\in\Z^*}$ in $X$. 
In conclusion, thanks to \eqref{11x} and \eqref{22x}, there exists $C_2>0$ such that $$\|w_k\|_{L^2(0,T)}\leq 
\|\pi_Xv_k\|_{L^2(0,T)} |c_k|^{-1}\leq \|v_k\|_{L^2(0,T)} |G'(\nu_k)|^{-1}\leq C_2 |k|^{1+\tilde d},\ \ \ \ \ \ 
\forall k\in\Z^* .\qedhere$$
\end{proof}

	\begin{lemma}\label{entire1}
	Let ${\bf {\upnu}}:=(\nu_k)_{k\in\Z^*}$ be an ordered sequence of pairwise distinct real numbers satisfying 
$(\ref{gapp11})$. Let $G$ be an entire function such that $G\in L^\infty(\R,\R)$. Let exist $J,I>0$ such 
that 
$|G(z)|\leq J e^{I|z|}$ for every $z\in\C.$ If $({\nu_k})_{k\in\Z^*}$ are simple zeros of $G$ such that there exist 
$\tilde d\geq 0$, $C>0$ such that 
$|G'(\nu_k)|\geq \frac{C}{|k|^{1+\tilde d}}$ for every $k\in\Z^*$, then there exists $C>0$ so that 
$$Tr\Big(F_{m}({\bf \upnu}^m)^*F_{m}({\bf \upnu}^m)\Big)\leq C\min\{|l|\in E_m\}^{2(1+{\tilde d})},\ \ \ \ \ \ 
\forall m\in\Z^*$$
where the matrices $ F_{m}({\bf \upnu}^m)$ are defined in Section \ref{divided_difference}.
\end{lemma}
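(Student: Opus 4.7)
The plan is to identify $\mathrm{Tr}\!\left(F_m(\upnu^m)^* F_m(\upnu^m)\right)$ with the squared Frobenius norm $\sum_{l,k}|F_{m;l,k}(\upnu^m)|^2$ and to control this sum through the biorthogonal family produced by Lemma~\ref{biorthogonal_estimation}. I will first fix any $T>\max\{2\pi/\delta,2I\}$ so that both Theorem~\ref{riesz} and Lemma~\ref{biorthogonal_estimation} apply to $(\nu_k)_{k\in\Z^*}$ and to $G$; the matrix $F_m(\upnu^m)$ does not depend on $T$, so any admissible choice delivers the claimed $T$-independent bound.

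By Theorem~\ref{riesz}, $(\xi_k)_{k\in\Z^*}$ is a Riesz basis of $X:=\overline{\spn\{\xi_k:k\in\Z^*\}}^{\,L^2}$, and Remark~\ref{unique_bio_riesz} furnishes a unique biorthogonal family $(u_k)_{k\in\Z^*}\subset X$ that is itself a Riesz basis of $X$. The adjoint identity in Remark~\ref{minimality1} shows that $F(\upnu)\mathbf{u}$ is biorthogonal to $(e^{i\nu_k t})_{k\in\Z^*}$; combined with the uniqueness of such a family in $X$ (Remark~\ref{unique_bio}) and with Lemma~\ref{biorthogonal_estimation}, this gives
$$w_k=(F(\upnu)\mathbf{u})_k,\qquad \|w_k\|_{L^2(0,T)}\leq C_1|k|^{1+\tilde d},\qquad \forall k\in\Z^*.$$
Reading this relation block by block with respect to the partition $\{E_m\}_{m\in\Z^*}$:
$$w_{s(m)+l-1}=\sum_{k=1}^{|E_m|}F_{m;l,k}(\upnu^m)\,u_{s(m)+k-1},\qquad l=1,\dots,|E_m|.$$

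Next, I will apply the lower Riesz inequality of Proposition~\ref{inequality_prop} to the Riesz basis $(u_k)$, obtaining a constant $C_2>0$ (independent of $m$ and $l$) such that
$$C_2\sum_{k=1}^{|E_m|}|F_{m;l,k}(\upnu^m)|^2\leq \|w_{s(m)+l-1}\|_{L^2(0,T)}^2\leq C_1^2\,|s(m)+l-1|^{2(1+\tilde d)}.$$
Since the gap condition \eqref{gapp11} forces $|E_m|\leq \MM-1$ for every $m\in\Z^*$, the ratio $\max\{|k|:k\in E_m\}/\min\{|l|:l\in E_m\}$ is uniformly bounded by some $C_3>0$: it tends to $1$ as $|s(m)|\to\infty$, and the finitely many blocks whose indices change sign are handled by direct inspection. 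Summing the inequality above over $l=1,\dots,|E_m|$ and using $|E_m|\leq \MM-1$ then yields
$$\mathrm{Tr}\!\left(F_m(\upnu^m)^*F_m(\upnu^m)\right)=\sum_{l,k=1}^{|E_m|}|F_{m;l,k}(\upnu^m)|^2\leq C\,\min\{|l|:l\in E_m\}^{2(1+\tilde d)},$$
which is the desired conclusion.

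The substantive work has already been absorbed into Lemma~\ref{biorthogonal_estimation} through the Paley--Wiener representation of $G_k(z)=G(z)/(z-\nu_k)$, so the only real obstacle here is the identification $w_k=\sum_j F_{k,j}(\upnu)\,u_j$ via the adjoint computation of Remark~\ref{minimality1} together with the correct uniqueness statement; once this link between the biorthogonal family of exponentials and the entries of $F(\upnu)$ is in place, Proposition~\ref{inequality_prop} automatically converts $\|w_k\|^2$ into the Frobenius norm that defines the trace, and the index-comparison step $|k|\lesssim \min_{E_m}|l|$ is a routine consequence of the uniform bound $|E_m|\leq \MM-1$.
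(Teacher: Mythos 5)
Your proof is correct and follows essentially the same route as the paper: both rest on the identification $\mathbf{w}=F({\bf\upnu})\mathbf{u}$ between the biorthogonal family of Lemma~\ref{biorthogonal_estimation} and the image under $F({\bf\upnu})$ of the dual Riesz basis of ${\bf\Xi}$, the bound $\|w_k\|_{L^2(0,T)}\leq C|k|^{1+\tilde d}$, Proposition~\ref{inequality_prop}, and the uniform comparison of indices within each block $E_m$ (valid since $|E_m|\leq\MM-1$). The only cosmetic difference is that you extract the row-wise $\ell^2$ bound $\sum_k|F_{m;l,k}|^2\lesssim\|w_{s(m)+l-1}\|^2$ directly from the lower Riesz inequality for $(u_k)$, whereas the paper obtains entrywise bounds by testing $F({\bf\upnu})$ on standard basis vectors and writing $(F({\bf\upnu}){\bf x})_k=\la w_k,f\ra_{L^2(0,T)}$; these are two phrasings of the same estimate.
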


\begin{proof}
Let ${T}>\max(2\pi/\delta,2I)$ with $\delta>0$ from \eqref{gapp11}. Thanks to Theorem $\ref{riesz}$, the 
sequence of functions ${\bf \Xi}=(\xi_{k})_{k\in\Z^*}:=(F({\bf 
\upnu})^*{\bf  e})$ forms a Riesz basis of $X:=\overline{\spn\{\xi_k\ :\ k\in\Z^*\}}^{ L^2}.$
	We call ${\bf {u}}:=( u_k)_{k\in\Z^*}$ the corresponding biorthogonal sequence to ${\bf  \Xi}$ in $X$, which 
is 
also a Riesz basis of $X$ (see Remark \ref{unique_bio_riesz}). From Remark $\ref{limitatelo}$, the matrix $F({\bf 
\upnu})$ is invertible and $(F({\bf \upnu})^*)^{-1}=(F({\bf \upnu})^{-1})^*$ thanks to Remark \ref{aggiunto}. 
	Let ${\bf w}=(w_k)_{k\in\Z^*}$ be the biorthogonal family in $X$ to $(e^{i\nu_k(\cdot)})_{k\in\Z^*}$ defined 
in 
Lemma \ref{biorthogonal_estimation}. For every $j,k\in\Z^*$, $$\delta_{k,j}=\la w_k,e^{i\nu_j(\cdot)}\ra_{L^2(0, 
T)}=\big\la w_k,\big((F({\bf {\upnu}})^*)^{-1}{\bf \Xi}\big)_j\big\ra_{L^2(0, T)}
	=\big\la \big(F({\bf {\upnu}})^{-1}{\bf w}\big)_k,\xi_j\big\ra_{L^2(0, T)},$$ implying ${\bf w}=F({\bf 
\upnu}){\bf {u}}$.
	Let $f(t)=\sum_{k\in\Z^*} \xi_k x_k$ with ${\bf x}:=(x_k)_{k\in\Z^*}\in\ell^2(\Z^*,\C)$. As ${\bf \Xi}$ and 
${\bf u}$ are reciprocally biorthogonal, there holds $x_{k}=\la {u}_k,f\ra_{L^2(0, T)}$ for every $k\in \Z^*$. 
From Proposition \ref{inequality_prop}, there exist $C_1,C_2>0$ such that
\begin{equation}\label{ccc1}C_1\|{\bf x}\|^2_{\ell^2}\leq\|f\|_{L^2(0,T)}^2\leq 
	C_2\|{\bf x}\|^2_{\ell^2}.\end{equation}
	 For $k\in\Z^*$, we call $m(k)\in\Z^*$ the number such that $k\in E_{m(k)}$. Thanks to Lemma 
\ref{biorthogonal_estimation} and $(\ref{ccc1})$, there exist $C_3,C_4>0$ such that, for every $k\in \Z^*$, we 
have 
	\begin{equation}\label{casinaccio}\begin{split}
	\Big|\big(F({\bf \upnu}){\bf x}\big)_k\Big|&=\Big|\big(F({\bf \upnu})\big(\la {u}_l,f\ra_{L^2(0, 
T)}\big)_{l\in\Z^*}\big)_k\Big|=|\la  
w_k,f\ra_{L^2(0, T)}|\leq {\|w_k\|_{L^2(0, T)}\|f\|_{L^2(0, T)}}\\
&\leq C_2^\frac{1}{2}\|w_k\|_{L^2(0, T)}{\|{\bf x}\|_{\ell^2}}\leq C_3|k|^{1+\widetilde{d}}{\|{\bf 
x}\|_{\ell^2}}\leq 
C_4\min_{l\in E_{m(k)}}|l|^{1+\widetilde{d}}{\|{\bf x}\|_{\ell^2}}.\\
	\end{split}\end{equation}
	 For every $m\in\Z^*$, we denote by $s(m)$ the smallest element of $E_m$ and, for every $j\leq |E_m|$, we 
consider $${\bf x}\in\ell^2(\Z^*,\C) \ \ \  \ :\ \ \ \ x_{s(m)+j-1}=1,\ \ \  \ \ \ x_l=0,\ \ \  \ \ \forall 
l\in\Z^*\setminus\{ s(m)+j-1\}.$$ For every $j,n\leq |E_m|$, we use the sequence ${\bf x}$ in the identity 
\eqref{casinaccio} with $k=s(m)+n-1$ and we obtain $|(F_{m;n,j}({\bf \upnu}^m))|\leq C_4\min_{l\in 
E_m}|l|^{1+\widetilde{d}}$, which leads to the statement.  \qedhere
\end{proof}

\begin{osss}\label{double_gap}
Let ${\bf \Theta}:=(\theta_k)_{k\in\Z^*}$ be an ordered sequence of pairwise distinct real numbers and let the 
sequence ${\bf 
\upnu}:=(\nu_k)_{k\in\Z^*}=
	\big(sgn{(\theta_k)}\sqrt{|\theta_k|}\big)_{k\in\Z^*}$ satisfy the identity $(\ref{gapp11})$ with $\MM\in\N^*$ 
and $\delta>0$. As $\inf_{\underset{k+\MM\neq 
0}{k\in\Z^*}}|\nu_{k+\MM}-\nu_k|\geq\delta\MM\min_{\underset{\nu_k\neq 0}{k\in\Z^*}}(|\nu_k|,1)$, we have 
$$\underset{\underset{k+\MM\neq 0}{k\in\Z^*}}{\inf}|\theta_{k+\MM}-\theta_k|=\underset{\underset{k+\MM\neq 
0}{k\in\Z^*}}{\inf}\big||\nu_{k+\MM}|-|\nu_{k}|\big|\big||\nu_{k+\MM}|+|\nu_{k}|\big|\geq\underset{\underset{
\nu_k\neq 0}{k\in\Z^*}}{\min}(|\nu_k|,1)\delta\MM.$$ Now, both the sequences ${\bf {\Theta}}$ and ${\bf 
\upnu}$ satisfy $(\ref{gapp11})$ with respect to the same 
$\delta':=\min_{\overset{k\in\Z^*}{\nu_k\neq 0}}\{|\nu_k|,1\}\delta$ and $\MM$. This fact ensures that ${\bf 
{\Theta}}$ and ${\bf \upnu}$ induce the definition of the same equivalence classes $\{E_m\}_{m\in\Z^*}$ of $\Z^*$ 
introduced in the relations \eqref{partitionpalle}. Thus, we can use $\{E_m\}_{m\in\Z^*}$ in order to define the 
matrices $F_m({\bf \Theta}^m)$ and $F_m({\bf \upnu}^m)$ for every $m\in\Z^*$ and the operators $F({\bf \Theta})$ 
and $F({\bf \upnu})$ on $\ell^2(\Z^*,\C)$.
\end{osss}

\begin{lemma}\label{entire2}
	Let ${\bf \Theta}:=(\theta_k)_{k\in\Z^*}$ be an ordered sequence of pairwise distinct real numbers such that 
$(\nu_k)_{k\in\Z^*}=
	\big(sgn{(\theta_k)}\sqrt{|\theta_k|}\big)_{k\in\Z^*}$ satisfies $(\ref{gapp11})$. Let exist $C_1,C_2>0 $ such 
that
	\begin{equation}\label{ricorda} C_1 |k| \leq|\nu_k|\leq C_2 |k|,\ \ \ \ \ \ \ \ \ \ \ \forall k\in\Z^*,\ \ 
\nu_k\neq 0.\end{equation}
	Let $G$ be an entire function so that $(\nu_k)_{k\in\Z^*}$ are its simple zeros, $G\in L^\infty(\R,\R)$ and 
there exist $J,I>0$ such that $|G(z)|\leq J e^{I|z|}$ for every $z\in\C.$ If there exist $\tilde d\geq 0$ and $C>0$ 
such that $|G'(\nu_k)|\geq \frac{C}{|k|^{1+\tilde d}}$ for every $k\in\Z^*$, then there exists $C>0$ so that 
$$Tr\Big(F_{m}({\bf \Theta}^m)^*F_{m}({\bf \Theta}^m)\Big)\leq C\min\{|l|\in E_m\}^{2{\tilde d}},\ \ \ \ \ \forall 
m\in\Z^*$$ where the matrices $ F_{m}({\bf \Theta}^m)$ are defined as in Remark \ref{double_gap} from the sequence 
${\bf \Theta}$.
\end{lemma}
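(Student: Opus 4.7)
The key algebraic observation is the identity $|\theta_j-\theta_l|=|\nu_j-\nu_l|\cdot|\nu_j+\nu_l|$, which holds whenever $\nu_j,\nu_l$ share the same sign (in this case $\theta_j,\theta_l$ are both equal to $\pm\nu^2$ with a common sign). This factorizes each modulus of an entry of $F_m({\bf \Theta}^m)$ as the corresponding modulus of $F_m({\bf \upnu}^m)$ multiplied by $(k-1)$ extra terms $|\nu_j^m+\nu_l^m|^{-1}$; these extra terms will provide the two missing powers of $\min\{|l|\in E_m\}$ needed to convert the bound with exponent $2(1+\tilde d)$ of Lemma \ref{entire1} into the sharper exponent $2\tilde d$ claimed here.

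By Remark \ref{double_gap}, the partitions $\{E_m\}_{m\in\Z^*}$ induced by ${\bf \upnu}$ and ${\bf \Theta}$ coincide and the class sizes are uniformly bounded by $\MM-1$. Combining the growth estimate \eqref{ricorda} with the gap \eqref{gapp11} provides $M_0\in\N^*$ such that, for every $|m|\geq M_0$, all $\nu_k$ with $k\in E_m$ share the same sign. For such a class, $|\nu_j^m+\nu_l^m|=|\nu_j^m|+|\nu_l^m|\geq 2C_1\min_{l\in E_m}|l|$, so that for $1\leq j\leq k\leq |E_m|$,
\begin{equation*}
|F_{m;j,k}({\bf \Theta}^m)|=|F_{m;j,k}({\bf \upnu}^m)|\prod_{\underset{l\leq k}{l\neq j}}|\nu_j^m+\nu_l^m|^{-1}\leq \bigl(2C_1\min_{l\in E_m}|l|\bigr)^{-(k-1)}|F_{m;j,k}({\bf \upnu}^m)|.
\end{equation*}

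Squaring this inequality, summing over $j\leq k$, and bounding the resulting $\upnu$-sum by $Tr(F_m({\bf \upnu}^m)^*F_m({\bf \upnu}^m))\leq C\min_{l\in E_m}|l|^{2(1+\tilde d)}$ via Lemma \ref{entire1}, each column $k\geq 2$ contributes at most $C\min|l|^{2\tilde d-2(k-2)}\leq C\min|l|^{2\tilde d}$, while the column $k=1$ contributes the trivial value $1\leq \min|l|^{2\tilde d}$. Summing over the at most $\MM-1$ columns yields the announced bound for every $|m|\geq M_0$. The remaining indices $|m|<M_0$ form a finite set, so each $F_m({\bf \Theta}^m)$ is a fixed finite-dimensional matrix whose trace is a universal constant, which is absorbed into the final constant because $\min|l|^{2\tilde d}\geq 1$.

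The main delicate point is the sign-coherence step: one must verify that classes whose $\nu_k$'s straddle zero occur only for finitely many $m$, which follows from $|\nu_k|\geq C_1|k|$ together with the uniform bound on the diameters of the classes. Once the reduction to the single-sign case is in place, the factorization combined with Lemma \ref{entire1} closes the estimation immediately.
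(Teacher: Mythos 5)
Your proof is correct and follows essentially the same route as the paper: both compare the entries of $F_m({\bf \Theta}^m)$ with those of $F_m({\bf \upnu}^m)$ through the factorization of $\theta_j-\theta_l$ in terms of $\nu_j-\nu_l$ and then invoke Lemma \ref{entire1}. The only divergence is your sign-coherence reduction (the exact identity $|\theta_j-\theta_l|=|\nu_j-\nu_l|\,|\nu_j+\nu_l|$ for same-sign pairs plus a finite exceptional set of classes, which does work), whereas the paper sidesteps this by using the one-sided inequality $|\theta_l-\theta_k|\geq\min\{|\nu_l|,|\nu_k|\}\,|\nu_l-\nu_k|$, valid for arbitrary sign combinations, keeping a single factor $\min|\nu_l|^{-1}$ per entry and absorbing the remaining powers into the constant.
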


\begin{proof} 
	We notice $|\theta_{l}-\theta_k|\geq\min\{|\nu_l|,|\nu_{k}|\}|\nu_{l}-\nu_{k}|$ for every $l,k\in \Z^*.$
	Let $m\in\Z^*$ and $I\subseteq E_m$ so that $I\neq \emptyset
	$. Now, $|I|\leq|E_m|\leq\MM-1$ and
	\begin{equation*}\begin{split}&\prod_{j,k\in I}|\theta_k-\theta_j|\geq\underset{\underset{\nu_l\neq 0}{l\in 
I}}{\min} |\nu_l|^{|I|}\prod_{j,k\in I}|\nu_k-\nu_j|\geq C_1\underset{\underset{\nu_l\neq 0}{l\in I}}{\min}\ 
|\nu_l|\prod_{j,k\in I}|\nu_k-\nu_j|\\
		\end{split}\end{equation*}
for $C_1=\min_{\underset{\nu_l\neq 0}{l\in \Z^*}}( |\nu_l|^{\MM-2},1)$. Thus, there exists $C_2>0$ so that, for 
every $m$ and $j,k\in E_m$, we have $$\big|F_{m;j,k}({\bf {\bf {\Theta}}}^m)\big|\leq C_2\big|F_{m;j,k}({\bf 
\upnu}^m)\big|\min\big\{|\nu_l|^{-1}:\ l\in E_m,\ \nu_l\neq 0\big\}.$$ In conclusion, thanks to $(\ref{ricorda})$ 
and Lemma 
$\ref{entire1}$, there exists $C_3>0$ such that
\begin{equation*}\begin{split}
	Tr\Big(F_{m}({\bf {\bf {\Theta}}}^m)^*F_{m}({\bf {\bf {\Theta}}}^m)\Big)&\leq 
C_2^2\underset{\underset{\nu_l\neq 0}{l\in E_m}}{\min}|\nu_l|^{-2}\ {Tr\Big(F_{m}({\bf \upnu}^m)^*F_{m}({\bf 
\upnu}^m)\Big)}\leq C_3\underset{{l\in E_m}}{\min}|l|^{2{\tilde d}}.\qedhere
	\end{split}\end{equation*}
\end{proof}

\begin{lemma}\label{entire3}
	Let the hypotheses of Lemma \ref{entire2} be verified. Then, there holds $H({\bf {\bf \Theta}})\subseteq 
h^{\tilde d}(\Z^*,\C)$
	where $H({\bf {\bf \Theta}})$ is the domain of the operator $F({\bf {\bf \Theta}})$ defined as in Remark 
\ref{double_gap} from the sequence ${\bf \Theta}$.
\end{lemma}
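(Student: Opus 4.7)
The plan is to establish $H({\bf \Theta}) \subseteq h^{\tilde d}(\Z^*,\C)$ by combining the block-diagonal decomposition of $F({\bf \Theta})$ with the Hilbert--Schmidt upper bound supplied by Lemma \ref{entire2}.

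Take any ${\bf x} \in H({\bf \Theta})$, so that both ${\bf x}$ and ${\bf y} := F({\bf \Theta}){\bf x}$ lie in $\ell^2(\Z^*,\C)$. In block form this means $\sum_{m \in \Z^*}\|{\bf y}^m\|^2 < \infty$ with ${\bf y}^m = F_m({\bf \Theta}^m){\bf x}^m$. Since $|E_m| \le \MM-1$ with the indices in $E_m$ consecutive, for $k \in E_m$ one has $|k|^{2\tilde d} \le C(\min_{l \in E_m}|l|)^{2\tilde d}$ away from the finitely many small blocks. Thus
\[
\sum_{k \in \Z^*} |k|^{2\tilde d}|x_k|^2 \,\le\, C\sum_{m \in \Z^*}(\min_{l \in E_m}|l|)^{2\tilde d}\|{\bf x}^m\|^2 + O(1),
\]
and the task reduces to controlling the right-hand side.

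The core step I would carry out is the blockwise inequality
\[
(\min_{l \in E_m}|l|)^{\tilde d}\,\|{\bf x}^m\|_{\C^{|E_m|}} \,\le\, C\,\|F_m({\bf \Theta}^m){\bf x}^m\|_{\C^{|E_m|}}
\]
with $C>0$ independent of $m$; squaring this bound and summing against $\sum_m\|{\bf y}^m\|^2 = \|{\bf y}\|_{\ell^2}^2 <\infty$ would close the argument and conclude ${\bf x} \in h^{\tilde d}(\Z^*,\C)$. Equivalently, the task is to bound $\|F_m({\bf \Theta}^m)^{-1}\|$ by $C(\min_{l \in E_m}|l|)^{-\tilde d}$ uniformly in $m$.

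The hard part is that Lemma \ref{entire2} supplies an \emph{upper} Hilbert--Schmidt bound on $F_m({\bf \Theta}^m)$ itself, whereas we need a companion control on its inverse. Because each block has dimension $|E_m| \le \MM-1$ uniformly bounded, all finite-dimensional norms are equivalent up to a dimensional constant, so I would work with $F_m^{-1}$ through the Newton-interpolation representation $(F_m^{-1})_{n,k} = \prod_{l=1}^{k-1}(\theta_n^m - \theta_l^m)$, bounding its entries from the within-block diameter estimate for $\{\theta_l^m\}_{l \in E_m}$ inherited from Remark \ref{double_gap} and the growth condition \eqref{ricorda}, and if necessary invoking the finite-dimensional identity $\sigma_{\min}(F_m) \ge |\det F_m|/\sigma_{\max}(F_m)^{|E_m|-1}$ together with the reciprocal-Vandermonde expression for $\det F_m({\bf \Theta}^m)$ to convert the trace bound of Lemma \ref{entire2} into the required estimate on the smallest singular value of $F_m({\bf \Theta}^m)$.
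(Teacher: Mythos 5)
You have set out to prove the inclusion exactly as printed, $H({\bf \Theta})\subseteq h^{\tilde d}(\Z^*,\C)$, but the printed direction is a misprint: the paper's own proof establishes, and Remark \ref{nonme} together with the proof of Proposition \ref{entire} later use, the \emph{reverse} inclusion $h^{\tilde d}(\Z^*,\C)\subseteq H({\bf \Theta})$. That direction is immediate from Lemma \ref{entire2}: since $\iii F_m({\bf \Theta}^m)\iii^2=\rho\big(F_m({\bf \Theta}^m)^*F_m({\bf \Theta}^m)\big)\leq Tr\big(F_m({\bf \Theta}^m)^*F_m({\bf \Theta}^m)\big)\leq C\min_{l\in E_m}|l|^{2\tilde d}$, one gets for every ${\bf x}\in h^{\tilde d}(\Z^*,\C)$ that $\|F({\bf \Theta}){\bf x}\|_{\ell^2}^2\leq\sum_{m}\iii F_m({\bf \Theta}^m)\iii^2\sum_{l\in E_m}|x_l|^2\leq C\|{\bf x}\|_{h^{\tilde d}}^2<+\infty$. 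No inverse of $F_m({\bf \Theta}^m)$ is ever needed.

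The direction you attack cannot be closed from the stated hypotheses, and your own writeup flags the gap at the decisive point: the blockwise inequality $(\min_{l\in E_m}|l|)^{\tilde d}\|{\bf x}^m\|\leq C\|F_m({\bf \Theta}^m){\bf x}^m\|$ is a \emph{lower} bound on the smallest singular value of $F_m({\bf \Theta}^m)$, whereas Lemma \ref{entire2} supplies only an upper Hilbert--Schmidt bound, and no lower bound of the required strength is available. Concretely, if the spectrum has a uniform gap then every block $E_m$ is a singleton, $F({\bf \Theta})$ is the identity, and $H({\bf \Theta})=\ell^2(\Z^*,\C)\not\subseteq h^{\tilde d}(\Z^*,\C)$ for $\tilde d>0$; the literal inclusion is therefore false in general. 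The determinant route you sketch also fails quantitatively: $F_m({\bf \Theta}^m)$ is upper triangular with $(1,1)$ entry equal to $1$, so $\sigma_{\max}(F_m({\bf \Theta}^m))\geq 1$, while the within-block differences $|\theta^m_j-\theta^m_l|$ are only bounded \emph{above} by a quantity of order $\min_{l\in E_m}|l|$, which bounds $|\det F_m({\bf \Theta}^m)|$ from below only by a negative power of $\min_{l\in E_m}|l|$; feeding this into $\sigma_{\min}\geq|\det F_m|/\sigma_{\max}^{|E_m|-1}$ produces a decaying, not growing, lower bound. The statement you should prove is $h^{\tilde d}(\Z^*,\C)\subseteq H({\bf \Theta})$, which is what the rest of the paper relies on and which follows in the three lines above.
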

\begin{proof}  Let $\rho(M)$ be the spectral radius of a matrix M and let $\iii M\iii=\sqrt{\rho(M^*M)}$ be its 
euclidean norm. We consider Lemma \ref{entire2} and, since $\big(F_{m}({\bf 
{\Theta}}^m)^*F_{m}({\bf {\Theta}}^m)\big)$ is positive-definite, there exists $C>0$ such that $$\iii F_m({\bf 
{\Theta}}^m)\iii^2=\rho\big(F_{m}({\bf {\Theta}}^m)^*F_{m}({\bf {\Theta}}^m)\big)\leq Tr\Big(F_{m}({\bf {\bf 
{\Theta}}}^m)^*F_{m}({\bf {\bf {\Theta}}}^m)\Big)\leq
C\underset{{l\in E_m}}{\min}|l|^{2{\tilde d}},\ \ \ \ \ \ \ m\in\Z^*.$$
	In conclusion, we obtain $h^{\tilde{d}}(\Z^*,\C)\subset H({\bf {\Theta}})$ since, for every ${\bf 
x}=(x_k)_{k\in\Z^*}\in h^{{\tilde d}}(\Z^*,\C)$,
	\begin{equation*}\begin{split}
	&\|F({\bf {\Theta}})
	{\bf x}\|_{\ell^2}^2\leq\sum_{m\in \Z^*}\iii F_m({\bf {\Theta}}^m)\iii
	^2\sum_{l\in E_m} |x_l|^2\leq C\sum_{m\in \Z^*}\underset{{l\in E_m}}{\min}|l|^{2{\tilde d}}
	\sum_{l\in E_m} |x_l|^2\leq C\|{\bf x}\|_{h^{\tilde d}}^2<+\infty.\qedhere
	\end{split}
	\end{equation*}
	
\end{proof}
\begin{osss}\label{nonme}
When Lemma $\ref{entire3}$ is satisfied with respect to the sequence ${\bf {\Theta}}=(\theta_k)_{k\in\Z^*}$ with 
$\tilde 
d\geq 0$, we have $H({\bf {\Theta}})\supseteq h^{\tilde d}(\Z^*,\C)$ that is dense in $\ell^2(\Z^*,\C)$. Thanks to 
Remark $\ref{aggiunto}$, we consider $F({\bf {\Theta}})^*$ as the unique adjoint operator of $F({\bf {\Theta}})$. 
As $Tr(F_{m}({\bf {\bf {\Theta}}}^m)^*F_{m}({\bf {\bf {\Theta}}}^m))=Tr(F_{m}({\bf {\bf {\Theta}}}^m)F_{m}({\bf 
{\bf {\Theta}}}^m)^*)$ for every $m\in\Z^*$, the techniques from the proof of Lemma $\ref{entire2}$ 
lead to $H({\bf {\Theta}})^*=D(F({\bf {\Theta}})^*)\supseteq h^{\tilde d}(\Z^*,\C)$.
\end{osss}

	\subsection{Proof of Proposition \ref{entire}}\label{proofen}
	\begin{proof}[Proof of Proposition \ref{entire}]Let us introduce the following sequences:
	$${\bf \Theta}:=(\theta_k)_{k\in\Z^*}\ \ \ \ \  \ :\ \ \ \  \ \theta_k=-\lambda_k,\ \ \ \ \ \forall k>0;\ \ \ 
\ 
\  \ \theta_k=\lambda_{-k},\ \ \ \ \ \forall k< 0;$$
	$${\bf {\upnu}}:=(\nu_k)_{k\in\Z^*}\ \ \ \ \  \ :\ \ \ \  \ \nu_k=-\sqrt{\lambda_k},\ \ \ \ \ \forall k>0;\ \ \ 
\ \  \ \nu_k=\sqrt{\lambda_{-k}},\ \ \ \ \ \forall k< 0;$$
	$${\bf \upalpha}:=(\alpha_k)_{k\in\Z^*\setminus\{-1\}}\ \ \ \ \ \ \ :\ \ \ \ \ \ 
\alpha_k=-\lambda_k+\lambda_1,\ \ \ \ \forall  k>0; \ \ \ \ \ \ \alpha_k=\lambda_{-k}-\lambda_1,\ \ \ \ \forall k< 
-1.$$
	We consider $\MM'\in\N^*$ and $\delta'>0$ so that ${\bf \Theta}$ and ${\bf {\upnu}}$ satisfy $(\ref{gapp11})$ 
with respect to $\MM'$ and $\delta'$ (as in Remark \ref{double_gap}), while
	\begin{equation}\label{diocan}\inf_{\{k\in\Z^*\setminus\{-1\}\ :\ k+\MM'\in\Z^*\setminus\{-1\} 
\}}|\alpha_{k+\MM'}-\alpha_k|\geq\delta' \MM'.\end{equation} Let $\{E_m\}_{m\in\Z^*}$ be the equivalence classes 
in 
$\Z^*$ defined by ${\bf \Theta}$ and ${\bf {\upnu}}$. Let us assume that 
$-1\in 
E_{-1}$. Now, $\{E_m\}_{m\in\Z^*\setminus\{-1\}}\cup \{E_{-1}\setminus\{-1\}\}$ are the equivalence classes in 
$\Z^*\setminus\{-1\}$ defined by $(\ref{diocan})$. 
Thanks to \eqref{ultimo}, the sequence ${\bf \upnu}$ satisfies \eqref{ricorda} and
Remark $\ref{nonme}$ is valid with respect to the sequences 
${\bf \Theta}$ and 
${\bf {\upnu}}$. Thus, $H({\bf {\Theta}})^*\supseteq h^{\tilde d}(\Z^*,\C).$
	We define as in Section \ref{divided_difference}, the operator $F({\bf {\upalpha}})$ in 
$\ell^2(\Z^*\setminus\{-1\},\C)$ from the sequence ${\bf \upalpha}$ and we notice that, for every $m\neq -1$, we 
have $F_{m}({\bf {\upalpha}}^m)=F_{m}({\bf {\Theta}}^m)$ and $F_{m}({\bf {\upalpha}}^m)^*=F_{m}({\bf 
{\Theta}}^m)^*$. Thus, as in Lemma \ref{entire3} and Remark $\ref{nonme}$, there hold
	$$H({\bf {\upalpha}})\supseteq h^{\tilde d}\big(\Z^*\setminus\{-1\},\C\big),\ \ \ \ \ \ \ \ \ \ \ \ \  H({\bf 
{\upalpha}})^*\supseteq h^{\tilde d}\big(\Z^*\setminus\{-1\},\C\big).$$
		We define ${\bf e}:=(e^{i\alpha_k (\cdot)})_{k\in\Z^*\setminus\{-1\}}$ and ${\bf 
\Xi}:=(\xi_k)_{k\in\Z^*\setminus\{-1\}}=F({\bf \upalpha})^*{\bf e}.$ When $T>2\pi/\delta'$, Theorem $\ref{riesz}$ 
ensures that $(\xi_k)_{k\in\Z^*\setminus\{-1\}}$ is a Riesz Basis of the space 
$X:=\overline{\spn\{\xi_k:k\in\Z^*\setminus\{-1\}\}}^{ L^2}.$ Now, the map $$M:g\in X\longmapsto \big(\la 
\xi_k,g\ra_{L^2(0,T)}\big)_{k\in\Z^*\setminus\{-1\}}\in \ell^2\big(\Z^*\setminus\{-1\},\C\big)$$ is invertible 
thanks to 
Proposition \ref{inequality_prop}. 
	Denoted $\tilde{X}:=M^{-1}\circ F({\bf \upalpha})^*\big( h^{\widetilde d}(\Z^*\setminus\{-1\},\C)\big)$, the 
map $$\big(F({\bf \upalpha})^*\big)^{-1}\circ M:g\in \tilde{X}\longmapsto \big(\la {\bf 
e},g\ra_{L^2(0,T)}\big)_{k\in\Z^*\setminus\{-1\}}\in h^{\tilde d}\big(\Z^*\setminus\{-1\},\C\big)$$ is invertible. 
Thus, for 
every $(x_k)_{k\in\Z^*\setminus\{-1\}}\in h^{\tilde d}\big(\Z^*\setminus\{-1\},\C\big)$, there exists $u\in 
X\subseteq L^2((0,T),\C)$ 
such that 
\begin{equation}\label{pie}{x_{k}}=\int_{0}^Tu(\tau)e^{-i\alpha_k \tau}d\tau,\ \ \ \ \ \ \ \forall 
k\in\Z^*\setminus\{-1\}.\end{equation}
The last relation can be used to ensure the solvability of the moment problem \eqref{momem} for 
$u\in L^2((0,T),\C)$. So, we need to proceed as follows in order to ensure the result for a function $u\in 
L^2((0,T),\R).$

Given ${\bf x}=(x_k)_{k\in\N^*}\in h^{\tilde d}(\N^*,\C)$ so that $x_1\in\R$, we introduce $(\tilde x_k)_{k\in 
\Z^*\setminus\{-1\}}\in 
h^{\tilde d}(\Z^*\setminus\{-1\},\C)$ so that $\tilde x_k=x_{k}$ for $k> 0$, while $\tilde x_k=\overline x_{-k}$ 
for $k<-1$. Thanks to $(\ref{pie})$ and to the 
definition of ${\bf \upalpha}$, there exists $u\in 
X\subseteq L^2((0,T),\C)$ so that
\begin{equation}\label{SIAM}\begin{split}
	\int_0^Tu(s)e^{i(\lambda_k-\lambda_1) s}ds=x_k=\int_0^T\overline{u}(s)e^{ i(\lambda_k-\lambda_1) s}ds,\ \ \ \ \ 
\ \ \ & k\in\N^*\setminus\{1\},\\
\int_0^T\overline{u}(s)e^{-i(\lambda_k-\lambda_1) s}ds=\overline{x_k}=\int_0^Tu(s)e^{ -i(\lambda_k-\lambda_1) 
s}ds,\ \ \ \ \ 
\ \ \ & k\in\N^*\setminus\{1\},\\
\end{split}\end{equation}
while $\int_0^Tu(s)ds=x_1.$ The relations \eqref{SIAM} and the fact that $x_{1}\in\R$ imply $\la 
\Imm(u),e^{i\alpha_k(\cdot)}\ra_{L^2(0,T)}=0$ for every $k\in\Z^*\setminus\{-1\}$ and then
\begin{align}\label{immu}\la \Imm(u),\xi_k\ra_{L^2(0,T)}=\Big(F({\bf \upalpha})^*\big(\la 
\Imm(u),e^{i\alpha_l(\cdot)}\ra_{L^2(0,T)}\big)_{l\in\Z^*\setminus\{-1\}}\Big)_k=0,\ \ \ \  \ \ \forall 
k\in\Z^*\setminus\{-1\}.\end{align}
From Remark \ref{minimality1}, the family $\{e^{i\alpha_kt}\}_{k\in\Z^*\setminus\{-1\}}$ is minimal in $X$ and 
$X=\overline{\spn\{e^{i\alpha_kt}:k\in\Z^*\setminus\{-1\}\}}^{ L^2}.$ 
Now, we recall that $-\alpha_k=\alpha_{-k}$ for every $k\in\Z^*\setminus\{\pm 1\}$ and $\alpha_1=0$. For every 
$u\in X$, we have
$$\overline{u}\in \overline{\spn\{e^{-i\alpha_kt}:k\in\Z^*\setminus\{-1\}\}}^{ 
L^2}=\overline{\spn\{e^{i\alpha_kt}:k\in\Z^*\setminus\{-1\}\}}^{ 
L^2}=X,$$
which implies that $\Imm(u)=\frac{u-\overline{u}}{2i}\in X$. We call ${\bf {v}}:=( v_k)_{k\in\Z^*\setminus\{-1\}}$ 
the biorthogonal sequence to ${\bf  \Xi}$ in $X$ which is 
also a Riesz basis of $X$ (see Remark \ref{unique_bio_riesz}). Now, for every $u\in X$, we have 
$\Imm(u)\in X$ and \begin{align}\label{decomposizione}\Imm(u)=\sum_{k\in\Z^*\setminus\{-1\}}v_k c_{k},\ \ \ \  
\ \text{for}\ \ \ \ \ c_k=\la \Imm(u),\xi_k\ra_{L^2(0,T)},\ \ \ \ \forall k\in\Z^*\setminus\{-1\}.\end{align} In 
conclusion, when $u$ satisfies
\eqref{SIAM}, the identities \eqref{immu} and \eqref{decomposizione} yield that $\Imm(u)=0$ and then $u$ is real. 
\qedhere
\end{proof}
\begin{osss}\label{lowerlevel} 
The hypotheses on the function $G$ in 
Proposition \ref{entire} can be rewritten in terms of the sequence $\{\pm{\lambda_{k}}\}_{k\in\N^*}$ 
rather than
$\{\pm\sqrt{\lambda_{k}}\}_{k\in\N^*}$. In this case, one can ensure the solvability of the moment 
problem \eqref{momem} for sequences in $h^{1+\tilde 
d}(\N^*,\C)$ for every $T>0$. Indeed, if Lemma \ref{entire1} is valid for ${\bf 
\upnu}=\{\pm{\lambda_{k}}\}_{k\in\N^*}$, then $H(\upnu)^*\supseteq h^{1+\tilde 
d}(\N^*,\C)$ as for Lemma \ref{entire3} and the solvability of \eqref{momem} in $h^{1+\tilde 
d}(\N^*,\C)$ is 
guaranteed by the techniques adopted in the proof of Proposition 
\ref{entire}.
Nevertheless, such result 
presents some disadvantages which make it unsuitable 
for our purposes as we explain in Remark \ref{nextlevel}.
\end{osss}

\section{Proof of Theorem \ref{global}}\label{proofglobal}
\subsection{Local exact controllability}\label{profgloballocal}
For $\varepsilon,T>0$, we introduce the following 
subspaces of $H^s_\Gi$ with $s>0$:
	$$O_{\varepsilon,T}^{s}:=\big\{\psi\in H_{\Gi}^{s}\big|\ \|\psi\|_{L^2}=1,\ \|\psi 
-\phi_1(T)\|_{(s)}<\varepsilon\big\}, \ \ \ \ \ \ \ \phi_1(T):=e^{-i\lambda_1 T}\phi_1.$$ 
\begin{defi}The \eqref{mainx1} is said to be locally exactly controllable in $O_{\varepsilon,T}^s$ for
$\varepsilon,T,s>0$ when, for every 
$\psi\in 
O_{\varepsilon,T}^{s}$, there exists $u\in L^2((0,T),\R)$ such that $\psi= \G^u_T\phi_1.$\end{defi}

\begin{prop}\label{localteorema}
Let the hypotheses of Theorem \ref{global} be satisfied. The \eqref{mainx1} 
is locally exactly controllable in $ O_{\varepsilon,T}^{s}$ for $\varepsilon>0$ sufficiently small, $T>0$ 
sufficiently large and $s=2+d$ with $d$ from 
Assumptions II$(\eta,\tilde d)$. \end{prop}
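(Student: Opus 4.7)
The plan is to apply the inverse mapping theorem to the end-point map $\Theta_T: u \mapsto \Gamma_T^u \phi_1$ at the control $u=0$, where $\Theta_T(0)=\phi_1(T)$. Since $\Gamma_T^u$ is unitary, $\Theta_T$ takes values in the $L^2$-unit sphere of $H^{2+d}_\Gi$, and my goal is to show that its differential at $u=0$ admits a bounded right inverse onto the (real) tangent space at $\phi_1(T)$.

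I would first establish regularity and compute the differential. Combining the well-posedness result of Proposition \ref{laura} with the Duhamel formula \eqref{form} and Assumptions II$(\eta,\tilde d)$, a standard argument (carried out for instance in \cite{laurent,mio2}) shows that $\Theta_T : L^2((0,T),\R) \to H^{2+d}_\Gi$ is of class $C^1$, with
$$d\Theta_T(0)\cdot v \;=\; -i\int_0^T e^{-iA(T-s)}\, v(s)\, B\, e^{-iAs}\phi_1\, ds.$$
Projecting onto the basis $(\phi_k(T))_{k\in\N^*}:=(e^{-i\lambda_k T}\phi_k)_{k\in\N^*}$ one finds
$$\bigl\langle d\Theta_T(0)\cdot v,\, \phi_k(T)\bigr\rangle_{L^2} \;=\; -\,i\,\langle \phi_k, B\phi_1\rangle_{L^2}\int_0^T v(s)\, e^{i(\lambda_k-\lambda_1)s}\, ds,\qquad k\in\N^*.$$

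The main step is to invert this differential using Proposition \ref{entire}. Given $\psi$ in a neighborhood of $\phi_1(T)$ in the $L^2$-sphere of $H^{2+d}_\Gi$, I would set
$$w_k \;=\; \frac{i\,e^{i\lambda_k T}\,\langle \psi - \phi_1(T), \phi_k\rangle_{L^2}}{\langle \phi_k, B\phi_1\rangle_{L^2}}, \qquad k\in\N^*.$$
Assumptions I$(\eta)$ yield $|\langle \phi_k, B\phi_1\rangle_{L^2}| \geq C k^{-(2+\eta)}$ for every $k$, and the inequality $d \geq \tilde d + \eta$ (forced by Assumptions II$(\eta,\tilde d)$) then gives
$$\sum_{k\in\N^*}\bigl|k^{\tilde d}\, w_k\bigr|^2 \;\leq\; C^{-2}\sum_{k\in\N^*}\bigl|k^{\tilde d + 2 + \eta}\,\langle \psi-\phi_1(T),\phi_k\rangle_{L^2}\bigr|^2 \;\leq\; C^{-2}\,\|\psi-\phi_1(T)\|_{(2+d)}^2,$$
so $(w_k)_{k\in\N^*}\in h^{\tilde d}(\N^*,\C)$. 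The unitarity of $\Gamma_T^u$ (the sphere constraint) together with the non-vanishing of $\langle \phi_1, B\phi_1\rangle_{L^2}\in\R$ guaranteed by Assumptions I$(\eta)$ at $k=1$ imply $w_1\in\R$ at the linearized level. Proposition \ref{entire}, applicable for $T$ sufficiently large thanks to the entire function $G$ and the lower bound \eqref{stimabasso} fixed in Theorem \ref{global}, then produces $u\in L^2((0,T),\R)$ solving $w_k = \int_0^T u(s)e^{i(\lambda_k-\lambda_1)s}\,ds$ for every $k$ and depending continuously on $(w_k)$.

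This provides a bounded right inverse of $d\Theta_T(0)$ on the tangent space to the $L^2$-sphere at $\phi_1(T)$. A standard application of the inverse mapping theorem (in the spirit of \cite[Section\ 3]{laurent}) then yields $\varepsilon > 0$ and $T$ sufficiently large such that $\Theta_T$ is a local diffeomorphism from a neighborhood of $0$ onto $O_{\varepsilon,T}^{2+d}$, giving the desired local exact controllability. The main obstacle is the moment-problem step: matching precisely the regularity decay provided by Assumptions I and II so that $(w_k)$ lies exactly in $h^{\tilde d}$, and ensuring $w_1\in\R$ so that Proposition \ref{entire} produces a \emph{real-valued} control.
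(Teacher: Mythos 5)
Your proposal is correct and follows essentially the same route as the paper: linearize the end-point map at $u=0$, identify the image of the differential with a trigonometric moment problem whose coefficients lie in $h^{\tilde d}$ thanks to Assumptions I$(\eta)$ and the inequality $d\geq \tilde d+\eta$ from Assumptions II, check the reality of the first coefficient via the sphere constraint and $\la\phi_1,B\phi_1\ra_{L^2}\in\R$, solve it with Proposition \ref{entire} (the paper additionally records the gap $\inf_k|\sqrt{\lambda_{k+\MM}}-\sqrt{\lambda_k}|\geq\delta\MM$ needed as a hypothesis there), and conclude with the generalized inverse function theorem.
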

\begin{proof}
The result corresponds to the surjectivity of the map $\G_T^{(\cdot)}\phi_1: L^2((0,T),\R)\longrightarrow 
O_{\varepsilon,T}^{s}.$ Let us define $\phi_k(T):=e^{-i\lambda_k T}\phi_k$ with $k\in\N^*$. We decompose 
$\G_{T}^{(\cdot)}\phi_1=\sum_{k\in\N^*}{\phi_k(T)}\la \phi_k(T),\G_{T}^{(\cdot)}\phi_1\ra_{L^2}$ and we consider
$$\alpha:u\in L^2((0,T),\R)\longmapsto \big(\la \phi_k(T), \G_{T}^{u}\phi_1\ra_{L^2}\big)_{k\in\N^*}.$$ Now, 
$\G_T^u\phi_1\in H^s_\Gi$ for every $u\in L^2((0,T),\R)$ and $T>0$ thanks to Proposition \ref{laura}. Thus, 
$\alpha$ takes value in $Q^s:=\{{\bf x}:=(x_k)_{k\in\N^*}\in h^s(\C)\ |\ \|{\bf x}\|_{\ell^2}=1\}$.
Defined $\updelta:=(\delta_{k,1})_{k\in\N^*}$, we notice that $$\alpha(0)=\big(\la \phi_k(T), 
\G_{T}^{0}\phi_1\ra_{L^2}\big)_{k\in\N^*}=\big(\la \phi_k(T), \phi_1(T)\ra_{L^2}\big)_{k\in\N^*}=\updelta.$$
The local exact controllability of the \eqref{mainx1} in $O_{\varepsilon,T}^s$ is equivalent to the surjectivity of 
the function
$$\alpha:L^2((0,T),\R)\longrightarrow Q^s_{\varepsilon}:=\{{\bf x}:=(x_k)_{k\in\N^*}\in Q^s\ |\ \|{\bf 
x}-\updelta\|_{(s)}<\varepsilon\}.$$ 
Let $T_{\updelta}Q^s$ be the tangent space of $Q^s$ in the point $\alpha(0)=\updelta$. For every 
$f:[0,1]\rightarrow Q^s$ such that $f(0)=\updelta$ and $f'(0)={\bf x}\in h^s(\C)$, we have $0=(\dd_t\la 
f(t),f(t)\ra)(t=0)=2\Re(\la{\bf x},\updelta\ra_{\ell^2})$, which implies $$ T_{\updelta}Q^s=\{{\bf 
x}:=(x_k)_{k\in\N^*}\in h^s(\C)\ |\ ix_1\in\R\}.$$
Let $P$ be the orthogonal projector onto $T_{\updelta}Q^s$. We define $\widetilde Q^s:=\{{\bf 
x}:=(x_k)_{k\in\N^*}\in h^s(\C)\ |\ \|{\bf x}\|_{\ell^2}\leq 1\}$ and $\widetilde Q^s_{\varepsilon}:=\{{\bf 
x}:=(x_k)_{k\in\N^*}\in \widetilde Q^s\ |\ \|{\bf x}\|_{(s)}<\varepsilon\}.$ We consider the Fréchet derivative 
of $\alpha$ in $u=0$:
$$\gamma:v\in L^2((0,T),\R)\longmapsto (d_u\alpha(u=0))\cdot\, v\ \in T_{\updelta}Q^s.$$ 
We notice that $P\alpha:L^2((0,T),\R)\rightarrow \widetilde Q^s\cap T_{\updelta}Q^s$ and its Fréchet 
derivative in $u=0$ is $\gamma$. If $\gamma$ is surjective 
in $T_{\updelta}Q^s$, then the Generalized Inverse Function Theorem (\cite[Theorem\  1;\ p.\ 240]{Inv}) guarantees 
the existence of $\varepsilon>0$ sufficiently small so that $P\alpha$ is surjective in $\widetilde 
Q^s_{\varepsilon}\cap T_{\updelta}Q^s$. Now, for every ${\bf 
x}=\{x_{k}\}_{{k\in\N^*}}\in Q^s_\epsilon$, we have $P{\bf x}\in\widetilde Q_\epsilon^s$ and if there exists $u\in 
L^2((0,T),\R)$ such that $P{\bf x}=P\alpha(u)$, then
	\begin{align*}{\bf x}&=P{\bf x}+\sqrt{1-\|P{\bf 
x}\|_{\ell^2}^2}\updelta=P\alpha(u)+\sqrt{1-\|P\alpha(u)\|_{\ell^2}^2}\updelta =\alpha(u).\end{align*}
	The last relation and the Generalized Inverse Function Theorem imply that if $\gamma$ is 
surjective in $T_{\updelta}Q^s$, then $\alpha$ is surjective in $Q^s_{\epsilon}$ with $\epsilon>0$ sufficiently 
small.
Thus, we study the function $\gamma$ and, thanks to the Duhamel's formula provided in \eqref{form}, we notice that 
it is composed by the elements $$\gamma_{k}(v)=-i\Bigg\la e^{-i\lambda_k 
T}\phi_k,\int_0^Te^{-iA(T-\tau)}v(\tau)Be^{-i\lambda_1\tau}\phi_1d\tau\Bigg\ra_{L^2}=
	-i\int_{0}^Tv(\tau)e^{i(\lambda_k-\lambda_1)\tau}d\tau \la\phi_k,B\phi_1\ra_{L^2}.$$
	Proving the surjectivity of $\gamma$ corresponds to ensure the solvability of the following moment problem
	\begin{equation}\begin{split}\label{mome1}
	{x_{k}}\la\phi_j,B\phi_k\ra_{L^2}^{-1}=-i\int_{0}^Tu(\tau)e^{i(\lambda_k-\lambda_1)\tau}d\tau,\ \ \ \ \ \ \ 
\forall k\in\N^*
	\\
	\end{split}\end{equation} 
	for every $(x_{k})_{k\in\N^*}\in T_{\updelta}Q^s$. We notice that $\big(x_k 
\la\phi_k,B\phi_1\ra_{L^2}^{-1}\big)_{k\in\N^*}\in h^{s-2-\eta}=h^{d-\eta}\subseteq h^{\tilde d}$ thanks to the 
point {\bf 1.\,\,}of Assumptions I$(\eta)$. As $B$ is symmetric, we have $\la\phi_1,B\phi_1\ra_{L^2}\in\R$ and 
$i{x_{1}}\la\phi_1,B\phi_1\ra_{L^2}^{-1}\in\R.$ Thanks to \cite[Proposition\ 6.2]{wave}, there exist $\delta>0$ 
and 
$\MM\in\N^*$ such that
	$$\inf_{{k\in\N^*}}\big|\sqrt{\lambda_{k+\MM}}-\sqrt{\lambda_k}\big|\geq\delta \MM.$$
	The last relation and the identity \eqref{interessante} ensure that Proposition $\ref{entire}$ is satisfied 
and the solvability of 
$(\ref{mome1})$ is guaranteed in $\{(c_k)_{k\in\N^*}\in 
h^{\tilde d}(\C)\ |\ ic_1\in\R\}$ for $T>0$ large enough. As $\big(x_k 
\la\phi_k,B\phi_1\ra_{L^2}^{-1}\big)_{k\in\N^*}$ belongs to such space for every $(x_{k})_{k\in\N^*}\in 
T_{\updelta}Q^s$, the claim is proved. 
\qedhere\end{proof}
	\begin{osss}\label{timereversibility}
  We consider the unitary propagator $\widetilde \G_t^u$ generated by the time-dependent Hamiltonian $-A-u(T-t)B$ 
with $u\in L^2((0,T),\R)$ with $T>0$. As explained in $\cite[Section\ 2.3]{mio1}$, the propagator $\widetilde 
\G_t^u$ represents the reversed dynamics of the \eqref{mainx1} with the same $u$ and 
\begin{align}\label{inversepropagator}(\widetilde\G_T^u)(\G_T^u)=(\G_T^u)(\widetilde\G_T^u)=I.\end{align}
When the hypotheses of Theorem \ref{global} are satisfied, the local exact controllability results 
of Proposition \ref{localteorema} is also valid for the reversed dynamics. Indeed, the functions 
$(\phi_k)_{k\in\N^*}$ are eigenfunctions of $-A$ corresponding to the eigenvalues 
$(-\lambda_k)_{k\in\N^*}$. Let $s=2+d$ and $d$ from 
Assumptions II$(\eta,\tilde d)$. As in the proof of Proposition \ref{localteorema}, the local exact 
controllability of the 
reversed dynamics in the neighborhood of $H^s_\Gi$:
$$\widetilde O_{\varepsilon,T}^{s}:=\big\{\psi\in H_{\Gi}^{s}\big|\ \|\psi\|_{L^2}=1,\ \|\psi 
-e^{i\lambda_1 T}\phi_1\|_{(s)}<\varepsilon\big\},$$
with $\varepsilon,T>0$ can be ensured by proving the solvability of a moment problem of the form
	\begin{equation*}\begin{split}
	{x_{k}}\la\phi_j,B\phi_k\ra_{L^2}^{-1}=-i\int_{0}^Tu(\tau)e^{-i(\lambda_k-\lambda_1)\tau}d\tau,\ \ \ \ \ \ \ 
\forall k\in\N^*
	\\
	\end{split}\end{equation*} 
	for every $(x_{k})_{k\in\N^*}\in T_{\updelta}Q^s$. The result is proved as Proposition \ref{localteorema} 
thanks 
to Remark 
\ref{entirecoro}. Hence, for $\varepsilon>0$ 
sufficiently small and $T>0$ 
sufficiently large, we have that for every $\psi\in \widetilde O_{\varepsilon,T}^{s}$, there exists $u\in 
L^2((0,T),\R)$ such that $\psi= \widetilde \G^u_T\phi_1$. Finally, the last relation yields that
\begin{align}\label{reversedmoscio}\phi_1= \G^{\widetilde u}_T\psi\ \ \ \  \ \text{for}\ \ \ 
\  \ \widetilde 
u(t)=u(T-t).\end{align}

\end{osss}

\subsection{Global exact controllability}  
\begin{proof}[Proof of Theorem \ref{global}]
Let us consider $\psi_1,\psi_2\in H^{s}_{\Gi}$ be so that $\|\psi_1\|_{L^2}=\|\psi_2\|_{L^2}$. We assume that 
$\|\psi_1\|_{L^2}=\|\psi_2\|_{L^2}=1$, but the result is equivalently proved in the general case. Let 
$T,\varepsilon>0$ be so that Proposition \ref{localteorema} and Remark \ref{timereversibility} are valid. Thanks 
to 
Proposition $\ref{approx}$ and Corollary \ref{approxreversed}, there exist $T_1,T_2>0$, $u_1\in 
L^2((0,T_1),\R)$ and $u_2\in L^2((0,T_2),\R)$ such that $$\|\G^{u_1}_{T_1}\psi_1-e^{i\lambda_1 
T}\phi_1\|_{(s)}<{\varepsilon},\ \ \ \ \ \|\widetilde \G^{u_2}_{T_2}\psi_2-e^{-i\lambda_1 
T}\phi_1\|_{(s)}<{\varepsilon},\ \ \ 
\ \ \Longrightarrow\ \ \ \ \ \  \G^{u_1}_{T_1}\psi_1\in \widetilde O_{\varepsilon,T}^{s},\ 
\widetilde\G^{u_2}_{T_2}\psi_2\in 
 O_{\varepsilon,T}^{s}.$$ 
From Remark \ref{timereversibility} (relation \eqref{reversedmoscio}), there exists $u_3\in L^2((0,T),\R)$ such 
that 
$\G_T^{u_3}\G^{u_1}_{T_1}\psi_1=\phi_1.$ From Proposition \ref{localteorema}, there 
exist $u_4\in L^2((0,T),\R)$ such that $\G_T^{u_4}\phi_1=\widetilde \G^{u_2}_{T_2}\psi_2$ and then
$$\G_T^{u_4}\G_T^{u_3}\G^{u_1}_{T_1}\psi_1=\widetilde \G^{u_2}_{T_2}\psi_2\ \ \ \ \ \ \ 
\Longrightarrow \ \ \ \ \ \ \ \ \ 
\G^{\widetilde u_2}_{T_2}\G_T^{u_4}\G_T^{u_3}\G^{u_1}_{T_1}\psi_1= \psi_2,\ 
\ \  \ \ \text{for }\ \ \ \widetilde u_2(t)=u_2(T_2-t).$$
In conclusion, Theorem \ref{global} is proved since there exists $\widehat T>0$ and $ \widehat u\in 
L^2((0,\widehat 
T),\R)$ such that $\G_{\widehat T}^{\widehat u}\psi_1=\psi_2.$\qedhere

\end{proof}

	\begin{osss}\label{nextlevel}
As discussed in Remark \ref{lowerlevel}, one can ensure the 
solvability of \eqref{mome1} for sequences in $h^{1+\tilde 
d}(\N^*,\C)$ for every $T>0$ by changing the hypotheses on the function $G$.
This outcome leads to
the local exact 
controllability of Proposition \ref{localteorema} for any $T>0$. Nevertheless, the 
new 
conditions on $G$ and the solvability of the moment problem in 
$h^{1+\tilde 
d}(\N^*,\C)$, rather than $h^{\tilde 
d}(\N^*,\C)$, compel to impose stronger hypotheses on the problem.
Our choice allows us to treat a wide range of 
problems as Theorem \ref{bim.1} and 
Theorem 
\ref{bim.2} which would be otherwise out of reach.
We also point out that, when we extend the local exact controllability in order to prove Theorem \ref{global}, the 
property of being 
controllable in any 
positive time is lost.  	\end{osss}
 
\section{Bilinear quantum systems on star graphs}\label{generic_application}
In the current section, we study the global exact controllability when $\Gi$ is a star graph by applying Theorem 
\ref{global}. The result is obtained by providing a suitable entire function $G$ satisfying the hypotheses of the 
theorem. From now on, when we call $\Gi$ a star graph, we also consider it as a quantum graph.
\begin{teorema}\label{final} Let $\Gi$ be a star graph equipped with ($\Di$/$\NN$) made by edges of lengths 
$\{L_j\}_{j\leq N}\in\AL\LL(N)$. If the couple $(A,B)$ satisfies Assumptions I$(\eta)$ and Assumptions 
II$(\eta,\epsilon)$ for $\eta,\epsilon>0$, then the $(\ref{mainx1})$ is globally exactly controllable in 
$H^{s}_{\Gi}$ for $s=2+d$ and $d$ from Assumptions II$(\eta,\epsilon)$.
	\end{teorema}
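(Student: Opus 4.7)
The plan is to apply Theorem \ref{global} directly, so the entire work reduces to constructing an entire function $G$ satisfying the spectral hypotheses listed there, the couple $(A,B)$ being already covered by Assumptions I$(\eta)$ and II$(\eta,\epsilon)$.

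First I would write down the secular equation for the Laplacian on the star graph $\Gi$ with boundary conditions in $(\Di/\NN)$ and $(\NN\KK)$ at the internal vertex $v$. Parametrizing each edge $e_j$ from the external vertex $v_j$ at $0$ to $v$ at $L_j$, the eigenfunctions on $e_j$ are of the form $\alpha_j s_j(\mu,x)$, where $s_j(\mu,x)=\sin(\mu x)$ under ($\Di$) at $v_j$ and $s_j(\mu,x)=\cos(\mu x)$ under ($\NN$) at $v_j$. Imposing continuity at $v$ together with the Kirchhoff sum condition yields a homogeneous linear system in $(\alpha_j)_{j\le N}$ whose determinant can be written as
\begin{equation*}
G(\mu)=\sum_{j=1}^{N}\bigl(\partial_x s_j\bigr)(\mu,L_j)\prod_{k\ne j}s_k(\mu,L_k).
\end{equation*}
This $G$ is entire, belongs to $L^{\infty}(\R,\R)$ (each factor is bounded by $1$ on $\R$), and the bound $|G(z)|\le Je^{I|z|}$ with $I=\sum_j L_j$ follows from $|\sin z|,|\cos z|\le e^{|\Imm z|}$. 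The positive zeros of $G$ are exactly the $\sqrt{\lambda_k}$, and since the $s_j(\mu,x)$ are real and even/odd in $\mu$, the full set $\{\pm\sqrt{\lambda_k}\}_{k\in\N^*}$ is a subset of its zeros. This is the natural candidate for $G$ in Theorem \ref{global}.

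The next step is to verify the two more delicate conditions: simplicity of the eigenvalues and the lower bound \eqref{stimabasso}. Simplicity is where the $\Q$-linear independence of $\{1,L_1,\ldots,L_N\}$ enters: any coincidence of zeros among the factors $s_j(\mu,L_j)$ would force a rational relation among $L_j/L_k$, ruling out degeneracies in $G(\mu)=0$ except on a countable exceptional set of length-tuples; for lengths in $\AL\LL(N)$ outside this negligible set the zeros are simple. The really substantial point, and the main obstacle, is the lower bound
\begin{equation*}
\bigl|G'(\pm\sqrt{\lambda_k})\bigr|\ge \frac{C}{k^{1+\tilde d}},
\end{equation*}
because $G'(\mu_k)$ is essentially a product of terms $s_j(\mu_k,L_j)$ with one factor differentiated, and the smallness of $s_j(\mu_k,L_j)$ is controlled by how close $\mu_k L_j/\pi$ is to an integer. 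Here the hypothesis that all ratios $L_k/L_j$ are algebraic irrational is decisive: by the Thue--Siegel--Roth theorem (or Baker's theorem on linear forms in logarithms, applied through the standard reduction for trigonometric products of commensurable exponentials), for any algebraic irrational $\alpha$ and any $\varepsilon>0$ there is $C_\varepsilon>0$ such that
\begin{equation*}
\Bigl|\alpha-\tfrac{p}{q}\Bigr|\ge \frac{C_\varepsilon}{q^{2+\varepsilon}},\qquad \forall\, p/q\in\Q.
\end{equation*}
Applying this to each ratio $L_k/L_j$ converts the distance of $\mu_k L_j/\pi$ to the nearest integer into a polynomial lower bound in $\mu_k$, and hence in $k$ via \eqref{interessante}. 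I would do this factor by factor and take the worst exponent across $j$ to obtain \eqref{stimabasso} for some $\tilde d=\tilde d(\varepsilon)\ge 0$; this step would most likely be relegated to Appendix \ref{numeri} and is what determines the exponent $\epsilon>0$ appearing in the final regularity $d=\max\{a+\eta,\cdot\}$ of Assumptions II.

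Once $G$ and the lower bound on $|G'(\pm\sqrt{\lambda_k})|$ are in hand with the specified $\tilde d$, and simplicity is secured, all hypotheses of Theorem \ref{global} are verified, and global exact controllability in $H^{2+d}_\Gi$ follows immediately. The statement then applies to every $\{L_j\}_{j\le N}\in\AL\LL(N)$ outside the countable exceptional set arising in the simplicity discussion, which matches the structure of the corollaries (Theorems \ref{bim.1} and \ref{bim.2}).
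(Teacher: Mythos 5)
Your overall architecture is the paper's: reduce to Theorem \ref{global}, take $G$ to be (essentially) the secular determinant of the star graph, and get the lower bound on $|G'(\pm\sqrt{\lambda_k})|$ from Roth/Schmidt-type diophantine approximation applied to the algebraic irrational ratios $L_k/L_j$ (this is exactly what Proposition \ref{zuazua3} and \cite[Corollary A.10]{wave} provide). However, there is a genuine gap at the central step. Your $G$ is, up to an unbounded factor $\mu$ coming from $\partial_x s_j$ (which already breaks your claim that $G\in L^\infty(\R,\R)$ and should be divided out, giving the paper's $G(x)=\prod_{l}\sin(xL_l)\sum_l\cot(xL_l)$ in the Dirichlet case), a \emph{signed sum} of $N$ products of sines and cosines. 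Consequently $G'$ is also a signed sum, and you cannot lower-bound it ``factor by factor, taking the worst exponent across $j$'': the individual terms can cancel. The paper's proof hinges on the identity $G'=-\tilde G+H$ of \eqref{stimettina}, where $H$ vanishes precisely at the points satisfying the secular equation $\sum_l\cot(\sqrt{\lambda_k}L_l)=0$, so that $G'(\sqrt{\lambda_k})=-\tilde G(\sqrt{\lambda_k})$ with $\tilde G$ reducing to a sum of \emph{nonnegative} terms $\sum_l L_l\prod_{k\neq l}\sin^2(xL_k)/\prod_l|\sin(xL_l)|\geq L^*\sum_l\prod_{k\ne l}|\sin(xL_k)|$; only then does the diophantine estimate apply. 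Without this cancellation-free reformulation your lower bound \eqref{stimabasso} is not established.

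A second, smaller discrepancy: your simplicity argument introduces a countable exceptional set of length-tuples, whereas Theorem \ref{final} holds for \emph{every} $\{L_j\}_{j\le N}\in\AL\LL(N)$ with no exception. The paper proves simplicity directly: by Remark \ref{cicciobastardo}, no eigenfunction can vanish at the internal vertex $v$ (a vanishing component would propagate through the $(\NN\KK)$ conditions and force some $L_l/L_m\in\Q$), and if $\lambda$ had two orthonormal eigenfunctions $f,g$ then $h=f(v)g-g(v)f$ would be an eigenfunction with $h(v)=0$, a contradiction. The same remark gives $G'(\pm\sqrt{\lambda_k})\neq 0$, i.e.\ simplicity of the zeros of $G$. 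The countable set $\CC$ appearing in Theorems \ref{bim.1} and \ref{bim.2} comes from validating Assumptions I for the specific operator $B$ (nonvanishing of $\la\phi_1,B\phi_k\ra_{L^2}$), not from the spectral analysis, so attributing it to the simplicity discussion would yield a statement weaker than the one to be proved.
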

\begin{proof}
	{\bf 1) Star graph equipped with $(\Di)$.} The boundary conditions ($\Di$) on $V_e$ imply that 
$\phi_k=(a^1_k\sin(\sqrt{\lambda_k} x),...,a_k^n\sin(\sqrt{\lambda_k} x))$ for each $k\in\N^*$ and suitable 
$\{a_k^l\}_{l\leq N}\subset \C$ such that $(\phi_k)_{k\in\N^*}$ is orthormal in $\Hi.$ The conditions ($\NN\KK$) in 
the internal vertex $v\in V_i$ ensure that, for every $k\in\N^*$, 
\begin{equation}\label{cot}\begin{split}\begin{cases}a^1_k\sin(\sqrt{\lambda_k}L_1)=...=a^N_k\sin(\sqrt{\lambda_k}
L_N),\\
	\sum_{l\leq N} a^l_k\cos(\sqrt{\lambda_k}L_l)=0,\end{cases} 
	\Longrightarrow \ \ \ \ \sum_{l=1}^N \cot(\sqrt{\lambda_k}L_l)=0.\end{split}\end{equation} We use the provided identities in order to construct an entire function satisfying the hypotheses of Theorem $\ref{global}$. To this purpose, we define an entire function $G$ and two maps $\tilde G$ and $H$ such that
	\begin{equation}\label{stimettina}\begin{split}
	&G(x):=\prod_{l\leq N}\sin({x} L_l)\sum_{l\leq N} \cot({x} L_l)
	\ \ \ \ \ \ \ \ \ \ \ \ G'(x)=-\tilde{G}(x)+H(x),\\
	&\tilde{G}(x):={\prod_{l\leq N}\sin({x} L_l)}\sum_{l\leq N} \frac{L_l}{\sin^2({x} L_l)}, \ \ \ \ \ \ \ \ \ H(x):=\frac{d}{dx}\big(\prod_{l\leq N}\cos({x} L_l)\big)\sum_{l\leq N} \cot({x} L_l).\end{split}\end{equation}
	The identities $(\ref{cot})$ and $(\ref{stimettina})$ imply that $H(\sqrt{\lambda_k})=0$ and 
$G'(\sqrt{\lambda_k})=-\tilde{G}(\sqrt{\lambda_k})$ for every $k\in\N^*.$ Now,
	\begin{equation}\label{stimetta}\begin{split}
	|\tilde{G}(x)|&= \frac{\prod_{l\leq N}|\sin({x} L_l)|\sum_{l\leq N} L_l\prod_{k\neq l}\sin^2({x} 
L_k)}{\prod_{l\leq N}\sin^2({x} L_l)}\geq{L^*} \sum_{l\leq N} {\prod_{k\neq l}|\sin({x} L_k)|},\\
	\end{split}\end{equation} 
	with $L^*:=\min_{l\leq N} L_l$. Thanks to $(\ref{stimetta})$, we refer to $\cite[Corollary\ A.10;\ (2)]{wave}$ 
(which contains a misprint as it is valid for 
$\lambda>\frac{\pi}{2}\max\{1/L_j\ :\ j\leq N\}$) and for 
every $\epsilon>0$, there exists $C_1>0$ such that \begin{align}\label{relo}|G'(\pm\sqrt{\lambda_k})|\geq{L^*} 
\sum_{l=1}^N{\prod_{j\neq l}|\sin(\sqrt{\lambda_k} L_j)|}\geq\frac{C_1}{(\sqrt{\lambda_k})^{1+\epsilon}},\ \ \ \ \ 
\forall k\in\N^*\ \ \ \ :\ \ \ \lambda_{k}>\frac{\pi}{2}\max\{L_j^{-1}\}_{j\leq N}.\end{align}
	\begin{osss}\label{cicciobastardo}
		For every $k\in\N^*$ and $j\leq N$, we have $|\phi_k^j(L_j)|\neq 0,$ otherwise the $(\NN\KK)$ conditions 
would ensure that $\phi_k^l(L_l)=\phi_k^m(L_m)=0$ with $l,m\leq N$ so that $\phi_k^l,\phi_k^m\not\equiv 0$ and 
there would be satisfied $a_k^l\sin( L_l\sqrt{\lambda_k})=a_k^m\sin( L_m\sqrt{\lambda_k})=0$ with $a_k^l,a_k^m\neq 
0$, which is absurd as $\{L_j\}_{j\leq N}\in\AL\LL(N).$
	\end{osss}

	\noindent
	 Remark $\ref{cicciobastardo}$ implies $|G'(\pm\sqrt{\lambda_k})|\neq 0$ for $k\in\N^*$. Thanks to 
$(\ref{interessante})$ and \eqref{relo}, there exists $C_2>0$ so that $$|G'(\pm\sqrt{\lambda_k})|\geq 
{C_2}{k^{-(1+\epsilon)}},\ \ \ \forall k\in\N^*.$$We notice that the spectrum of $A$ is simple. Indeed, if there 
would exist two orthonormal eigenfuctions $f$ and $g$ of $A$ corresponding to the same eigenvalue $\lambda$, then 
$h(x)=f(v)g(x)-g(v)f(x)$ would be another eigenfunction of $A$. Now, $h$ is an eigenfunction and $h(v)=0$, which 
is impossible thanks to Remark $\ref{cicciobastardo}$.

	  As $|\cos(zL_l)|\leq e^{L_l|z|}$ and $|\sin(zL_l)|\leq e^{L_l|z|}$ for every $l\leq N$ and $z\in\C$, we notice that $|G(z)|\leq Ne^{|z|\sum_{l=1}^NL_l}$ for every $z\in\C.$
Now, $G(\sqrt{\lambda_k})=0$ for every $k\in\N^*$ thanks to $(\ref{cot})$ and $G\in L^\infty(\R,\R)$.
	 
	\noindent
	In conclusion, the claim is achieved as Theorem $\ref{global}$ is valid with respect to the function $G$ when $\tilde d=\epsilon$.

	\smallskip
	\noindent
	{\bf 2) Generic star graph.} Let $I_1\subseteq \{1,...,N\}$ be the set of indices of those edges containing an 
external vertex equipped with $(\NN)$ and $I_2:=\{1,..,N\}\setminus I_1$. The proof follows from the techniques 
adopted in {\bf 1)} by considering Proposition $\ref{zuazua3}$ (rather than $\cite[Corollary\ A.10;\ 
(2)]{wave}$) and the entire map
	{{\begin{equation*}\begin{split}
	G(x):=&\prod_{l\in I_2}\sin({x} L_l)\prod_{l\in I_1}\cos({x} L_l)\Big(\sum_{l\in I_2} \cot({x} L_l)+\sum_{l\in I_1} \tan({x} L_l)\Big).\qedhere 
	\end{split}\end{equation*}}}\end{proof}

\begin{osss}\label{pallosso}
When $\Gi$ is a star graph equipped with ($\Di$) such that $L_2/L_1$ and $L_3/L_1$ are rationals,$$\exists 
n_2,n_3,m_2,m_3\in\N^*\ \ \ \ :\ \ \ \ {L_2}/{L_1}={n_2}/{m_2},\ \ \ \ \ \  \ \ \ {L_3}/{L_1}={n_3}/{m_3}.$$
The numbers $\{\mu_k\}_{k\in\N^*}$ with $\mu_k=\frac{k^2m_2^2 m_3^2\pi^2}{L_1^2}$ are eigenvalues of $A$ and they 
are 
multiple. Fixed $k\in\N^*$,
$$f_k=\big(-2\sin(\sqrt\mu_k x),\sin(\sqrt\mu_k x),\sin(\sqrt\mu_k x),0,...,0\big),\ \ \ \ \ \ \ \ 
g_k=\big(0,\sin(\sqrt\mu_k 
x),-\sin(\sqrt\mu_k x),0,...,0\big)$$
are reciprocally orthogonal eigenfunctions of $A$ corresponding to $\mu_k$. In addition, we notice that the 
sequence $\{g_k\}_{k\in\N^*}$ is composed by eigenfuctions vanishing in the edge $e_1$. The same kind of 
construction can be repeated when the star graph $\Gi$ is equipped with the general boundary conditions 
($\Di$/$\NN$).
\end{osss}

\needspace{3\baselineskip}
\begin{coro}\label{copia} Let $\Gi$ be a star graph equipped with ($\Di$/$\NN$). Let $\Gi$ satisfy the following conditions with $\widetilde N\in 2\N^*$ such that $\widetilde N\leq N$. 
	
	\begin{itemize}
		\item For $j\leq \widetilde N/2$, the two external vertices belonging to $e_{2j-1}$ and $e_{2j}$ are equipped with $(\Di)$ or $(\NN)$.
		\item The couples of edges $\{e_{2j-1},e_{2j}\}_{j\leq {\widetilde N}/{2}}$ are long $\{L_j\}_{j\leq {\widetilde N}/{2}}$, while the edges $\{e_j\}_{\widetilde N< j\leq N}$ measure $\{L_j\}_{\widetilde N< j\leq N}$. In addition, $\{L_j\}_{j\leq \frac{\widetilde N}{2}}\cup \{L_j\}_{\widetilde N< j\leq N}\in\AL\LL\big(\frac{\widetilde N}{2}+N-\widetilde N\big)$.
	\end{itemize}

\noindent	
If $(A,B)$ satisfies Assumptions I$(\eta)$ and Assumptions II$(\eta,\epsilon)$ for $\eta,\epsilon>0$, then the $(\ref{mainx1})$ is globally exactly controllable in $H^{s}_{\Gi}$ for $s=2+d$ and $d$ from Assumptions II$(\eta,\epsilon)$.
\end{coro}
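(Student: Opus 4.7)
My plan is to deduce Corollary \ref{copia} from Theorem \ref{global} by mimicking the proof of Theorem \ref{final}: constructing an entire function $G$ tailored to the paired-edge geometry and verifying the spectral hypotheses.

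\textbf{Spectral analysis.} First I would show that the spectrum of $A$ is simple despite the presence of equal-length pairs. Writing an eigenfunction $\psi$ as $\psi^l(x) = a^l \sin(\sqrt{\lambda}\,x)$ on each edge (with $\cos$ replacing $\sin$ on Neumann external edges) and imposing the boundary and $(\NN\KK)$ conditions, every eigenpair falls into one of two mutually exclusive classes. In the \emph{symmetric} case $\psi(v) \neq 0$, every $\sin(\sqrt{\lambda}\, L_l)$ is nonzero and the weighted characteristic equation
$$S(\sqrt{\lambda})\,:=\,2\sum_{j \le \widetilde N/2}\cot(\sqrt{\lambda}\,L_j) + \sum_{l > \widetilde N}\cot(\sqrt{\lambda}\,L_l) = 0$$
holds. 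In the \emph{antisymmetric} case $\psi(v) = 0$, one necessarily has $\sin(\sqrt{\lambda}\,L_j) = 0$ for exactly one pair index $j$, with $a^{2j-1} = -a^{2j}$ and $a^l = 0$ for every other edge. Algebraic independence of the distinct lengths $\{L_j\}_{j \le \widetilde N/2} \cup \{L_l\}_{l > \widetilde N} \in \AL\LL$ rules out $\sin(\sqrt{\lambda}\,L_l) = 0$ holding for two inequivalent indices, which forces every eigenvalue to be simple.

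\textbf{Construction of $G$.} I would then set
$$G(x)\,:=\,\prod_{j \le \widetilde N/2}\sin^2(xL_j)\,\prod_{l > \widetilde N}\sin(xL_l)\cdot S(x),$$
with $\sin\leftrightarrow\cos$ and $\cot\leftrightarrow-\tan$ on edges bearing Neumann external conditions, in the spirit of Theorem \ref{final}\,(2). Expansion cancels every cotangent pole, so that $G$ is entire with $|G(z)| \le J e^{I|z|}$ and $G \in L^\infty(\R,\R)$. Simplicity of the zeros $\{\pm\sqrt{\lambda_k}\}$ follows from a case split: at a symmetric eigenvalue $S$ vanishes simply and all sines are nonzero; at an antisymmetric eigenvalue $\sqrt{\lambda}\,L_j = m\pi$, the identity $\sin^2(xL_j)\cdot 2\cot(xL_j) = \sin(2xL_j)$ contributes a simple zero while all other terms in the expansion carry $\sin^2(xL_j)$ as a factor (and hence vanish to order $\ge 2$). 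The $\AL\LL$ property also rules out spurious zeros at $\sin(xL_l) = 0$ for $l > \widetilde N$, where the surviving $\cos(xL_l)$ term multiplies a nonvanishing product of sines.

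\textbf{Main obstacle and conclusion.} The principal technical step is the lower bound
$$|G'(\pm\sqrt{\lambda_k})|\,\ge\,C\,k^{-(1+\tilde d)}$$
with $\tilde d$ matching the hypothesis Assumptions II$(\eta,\epsilon)$. At symmetric eigenvalues the argument of \eqref{stimetta} in Theorem \ref{final} transfers essentially verbatim: the simple zero of $S$ reduces $|G'|$ to $L^* \sum_l \prod_{m\neq l}|\sin(\sqrt{\lambda_k}\,L_m)|$ ranging over the set of distinct lengths, and the analogue of \cite[Corollary A.10]{wave} (or Proposition \ref{zuazua3}) applied to the $\AL\LL$ family delivers the required power bound via \eqref{interessante}. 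At an antisymmetric eigenvalue $\sqrt{\lambda_k}\,L_j = m\pi$ a direct differentiation gives
$$G'(\sqrt{\lambda_k})\,=\,2L_j \prod_{j'\ne j}\sin^2(\sqrt{\lambda_k}\,L_{j'})\prod_{l > \widetilde N}\sin(\sqrt{\lambda_k}\,L_l),$$
and the crux is to bound this \emph{product} (rather than a sum) from below using the joint Diophantine properties of the algebraic irrational ratios $L_{j'}/L_j$ and $L_l/L_j$ guaranteed by $\AL\LL$. Once this is established, Theorem \ref{global} applied with the resulting $\tilde d$ and the standing Assumptions I$(\eta)$ and II$(\eta,\epsilon)$ yields the global exact controllability in $H^{2+d}_{\Gi}$ with $d$ from Assumptions II$(\eta,\epsilon)$.
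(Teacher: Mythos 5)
Your overall strategy --- simplicity of the spectrum followed by the construction of an entire function feeding into Theorem \ref{global} --- is the same as the paper's, and your spectral discussion (the symmetric/antisymmetric dichotomy, the uniqueness of the antisymmetric eigenfunctions supported on a single pair, and the exclusion of further multiplicities) reproduces the paper's argument correctly. The divergence is in the choice of $G$: you take the full secular determinant, with a factor $\sin^2(xL_j)$ for each pair, whereas the paper keeps a single factor per distinct length and instead doubles the weight of the corresponding cotangent (or tangent) in the secular sum. Your choice has the virtue that the antisymmetric eigenvalues genuinely are simple zeros of $G$, and your expression for $G'$ at such a zero is correct. But you leave the decisive step --- the lower bound $|G'(\pm\sqrt{\lambda_k})|\geq Ck^{-(1+\tilde d)}$ with $\tilde d=\epsilon$ at the antisymmetric eigenvalues --- as an unproven \virgolette{crux}, and this is precisely where the proposal breaks down.

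At an antisymmetric eigenvalue $\sqrt{\lambda_k}=m\pi/L_j$ your formula gives $|G'|=2L_j\prod_{j'\neq j}\sin^2(\sqrt{\lambda_k}L_{j'})\prod_{l>\widetilde N}|\sin(\sqrt{\lambda_k}L_l)|$, and the squares are fatal to the rate. Schmidt's theorem (\cite[Theorem\ A.8]{wave}, which underlies Proposition \ref{zuazua3}) controls the \emph{product} of the distances $\iii mL_{j'}/L_j\iii$ to the integers, giving $Cm^{-(1+\epsilon)}$; but a single factor $\iii mL_{j'}/L_j\iii$ is genuinely of order $1/m$ along the continued-fraction convergents of the (e.g.\ quadratic irrational) ratio $L_{j'}/L_j$, so its square is of order $1/m^2$ and the best uniform bound available for your product is $Cm^{-(2+2\epsilon)}$, i.e.\ $\tilde d\geq 1$. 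Since Theorem \ref{global} requires the exponent $\tilde d$ in \eqref{stimabasso} to coincide with the second parameter of Assumptions II, your construction would only give the conclusion under Assumptions II$(\eta,1+\epsilon')$, which is strictly stronger than the hypothesis Assumptions II$(\eta,\epsilon)$ of the corollary and therefore does not prove the stated result. The paper's $G$ is built precisely to avoid the squares --- each distinct length contributes one sine and the pair is accounted for by the coefficient $2$ on its cotangent --- which preserves the $k^{-(1+\epsilon)}$ rate from Proposition \ref{zuazua3} at the zeros of the weighted secular sum; the price is that one must then verify how the antisymmetric eigenvalues are accommodated by that function, which is exactly the difficulty your computation makes visible. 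As written, your argument does not close this step.
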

		\begin{figure}[H]
	\centering
	\includegraphics[width=\textwidth-150pt]{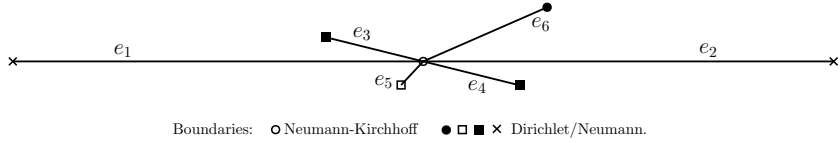}
	\caption{Example of graph described in Corollary $\ref{copia}$ with $\widetilde N=4$ and $N=6$.}
\end{figure}

\begin{proof}
	 Let $I_1\subseteq \{1,...,\widetilde N/2\}$ be the set of $j$ such that $e_{2j-1}$ and $e_{2j}$ contain two external vertices of $\Gi$ equipped with $(\NN)$ and $I_2:=\{1,..,\widetilde N/2\}\setminus I_1$.
	Let $I_3\subseteq \{\widetilde N+ 1,...,N\}$ be the set of $j$ such that $e_j$ contains an external vertex of $\Gi$ equipped with $(\NN)$ and $I_4:=\{\widetilde N+ 1,...,N\}\setminus I_3$. Let $$(\lambda_k^1)_{k\in\N^*}:=\Big(\frac{(2k-1)^2\pi^2}{4L_j^2}\Big)_{\underset{j\in I_1}{j,k\in\N^*}},\ \ \ \ \ \ \ (\lambda_k^2)_{k\in\N^*}:=\Big(\frac{k^2\pi^2}{L_j^2}\Big)_{\underset{j\in I_2}{j,k\in\N^*}}.$$ We notice that $(\lambda_k^1)_{k\in\N^*}\cup(\lambda_k^2)_{k\in\N^*}\subset (\lambda_k)_{k\in\N^*}$ are the only eigenvalues of $A$ corresponding to eigenfunctions vanishing in the internal vertex $v$. For every $f\in(\phi_k)_{k\in\N^*}$ of $A$ corresponding to an eigenvalue $\lambda\in (\lambda_k^1)_{k\in\N^*}$, $f$ is uniquely defined (up to multiplication for $\alpha\in\C$ such that $|\alpha|=1$) by the identities
$$f^{2j-1}(x)=-f^{2j}(x)=\sqrt{L_j^{-1}}\cos(\sqrt{\lambda} x),\ \ \ \ \ \ \ \ \ \ f^l \equiv 0,\ \ \ \ \ \ \ \ \forall l\in\{1,...,N\}\setminus\{2j-1,2j\}.$$
The same property is valid for $(\lambda_k^2)_{k\in\N^*}$ and then, the eigenvalues 
$(\lambda_k^1)_{k\in\N^*}\cup(\lambda_k^2)_{k\in\N^*}$ are simple. In conclusion, the discrete spectrum of $A$ is 
simple since, if there would exist a multiple eigenvalue 
$$\lambda\in (\lambda_k)_{k\in\N^*}\setminus\big((\lambda_k^1)_{k\in\N^*}\cup(\lambda_k^2)_{k\in\N^*}\big),$$
then there would exist two orthonormal eigenfuctions $f$ and $g$ corresponding to the same eigenvalue $\lambda$. Now, $h(x)=f(v)g(x)-g(v)f(x)$ would be another eigenfunction corresponding to $\lambda$ such that $h(v)=0$, which is impossible as it would imply that $\lambda\in (\lambda_k^1)_{k\in\N^*}\cup(\lambda_k^2)_{k\in\N^*}.$ Thus, $(\lambda_k)_{k\in\N^*}$ are simple eigenvalues.

\noindent
The remaining part of proof follows the one of Theorem $\ref{final}$ thanks to Proposition $\ref{zuazua3}$ by considering
\begin{equation*}\begin{split}
G(x):=&\prod_{l\in I_2\cup I_4}\sin({x} L_l)\prod_{l\in I_1\cup I_3}\cos({x} L_l)\Big(2\sum_{l\in I_2} \cot({x} L_l)+2\sum_{l\in I_1} \tan({x} L_l)+\sum_{l\in I_4} \cot({x} L_l)+\sum_{l\in I_3} \tan({x} L_l)\Big).\qedhere
\end{split}\end{equation*}
\end{proof}
\subsection{Proofs of Theorem \ref{bim.1} and of Theorem \ref{bim.2}}\label{applica}

\begin{proof}[Proof of Theorem $\ref{bim.1}$]
Theorem $\ref{bim.1}$ is proved such as $\cite[Theorem\ 1.2]{mio3}$ that is stated for $N=4$. The only difference 
between the two results is the fact that Theorem $\ref{bim.1}$ follows from Theorem $\ref{final}$ instead 
of $\cite[Proposition\ 3.3]{mio3}$, which is only valid for $N\leq 4$.\qedhere
\end{proof}

\begin{proof}[Proof of Theorem $\ref{bim.2}$]
	The conditions ($\NN$) in $V_i$ imply the existence, for every $k\in\N^*$, of $\{a_k^l\}_{l\leq N}\subset\C$ such that $\phi_k=(a^1_k\cos(x\sqrt{\lambda_k}),...,a_k^N\cos(x\sqrt{\lambda_k})).$	The coefficients $\{a_k^l\}_{l\leq N}\subset\C$ are so that $(\phi_k)_{k\in\N^*}$ forms a Hilbert basis of $\Hi$ and then
	\begin{equation}\label{equation1}1=\sum_{l\leq 
N}\int_{0}^{L_l}|a_k^l|^2\cos^2(x\sqrt{\lambda_k})dx=\sum_{l\leq 
N}|a_k^l|^2\Big(\frac{L_l}{2}+\frac{\sin(2L_l\sqrt{\lambda_k})}{4\sqrt{\lambda_k}}\Big).\end{equation} For every 
$k\in\N^*$, the ($\NN\KK$) boundary conditions in $V_i$ ensure 
\begin{equation}\label{tremor}\begin{split}a^1_k\cos&(\sqrt{\lambda_k}L_1)=...=a^N_k\cos(\sqrt{\lambda_k}L_N),\ \ \ \ \  \ \ \sum_{l\leq N} a^l_k\sin(\sqrt{\lambda_k}L_l)=0,\\
	&\sum_{l\leq N} \tan(\sqrt{\lambda_k}L_l)=0,\ \ \ \ \ \ \ \ \ \  \sum_{l\leq N}|a_k^l|^2{\sin(2L_l\sqrt{\lambda_k})}=0.\\ 
	\end{split}
	\end{equation} 
	The last identities and $(\ref{equation1})$ imply $1=\sum_{l=1}^N|a_k^l|^2{L_l}/{2}$. Thanks to $(\ref{tremor})$, we have $a_k^l=a_k^1\frac{\cos(\sqrt{\lambda_k} L_1)}{\cos(\sqrt{\lambda_k} L_l)}$ for $l\neq 1$ and $k\in\N^*$. Thus, $|a_k^1|^2\big(L_1+\sum_{l=2}^NL_l\frac{\cos^2(\sqrt{\lambda_k} L_1)}{\cos^2(\sqrt{\lambda_k} L_l)}\big)={2}$ for every $k\in\N^*$ and
	\begin{equation*}
	\begin{split}
	 &|a_k^1|^2={2\prod_{m=2 }^N\cos^2(\sqrt{\lambda_k} L_m)}{\Big(\sum_{j=1}^N L_j\prod_{m\neq j}\cos^2(\sqrt{\lambda_k} L_m)\Big)^{-1}}.\\
	\end{split}
	\end{equation*}

	\smallskip
	\noindent
	{\bf Validation of Assumptions I($\mathbf{3+\epsilon}$) with $\mathbf{\epsilon>0}$.} 
For every $k\in\N^*$, thanks to the relation $(\ref{tremor})$
$$\prod_{l\leq N}\cos(\sqrt{\lambda_k}L_l)\sum_{l\leq N} \tan(\sqrt{\lambda_k}L_l)=0,\ \ \ \ \Longrightarrow\ \ \ \ \ \ \sum_{l=1}^N\sin(\sqrt{\lambda_k} L_l)\prod_{m\neq l }\cos(\sqrt{\lambda_k} L_m)=0.$$ Thanks to the relation $(\ref{interessante})$ and Corollary $\ref{corozuazua}$, for every $\epsilon>0$, there exist $C_1,C_2>0$ such that,
\begin{equation}
	\label{oip}
	\begin{split}
	|a_k^1|&\geq \sqrt{\frac{2}{\sum_{l=1}^NL_l\cos^{-2}(\sqrt{\lambda_k} L_l)}}\geq\sqrt{\frac{2}{\sum_{l=1}^NL_lC_1^{-2}{\lambda_k^{1+\epsilon}}}}\geq \frac{C_2}{k^{1+\epsilon}},\ \ \ \ \ \ \  \ \forall k\in\N^*.\\
	\end{split}
	\end{equation}
	In addition, $\la \phi_1^l,(B\phi_k)^l\ra_{L^2(e_l,\C)}=0$ for $2\leq l\leq N$ and, for every $k\in\N^*$,
	\begin{equation}\begin{split}\label{ancora}
	\la \phi_1,B\phi_k\ra_{L^2}&=\la \phi_1^1,(B\phi_k)^1\ra_{L^2(e_1,\C)}=
	-\frac{120a_k^1a_1^1L_1^6}{(\sqrt{\lambda_k}+\sqrt{\lambda_1})^4}-\frac{120a_k^1a_1^1L_1^6}{(\sqrt{\lambda_k}-\sqrt{\lambda_1})^4}+o(\sqrt{\lambda_k}^{-5}).\\
	\end{split}\end{equation}
From the relations $(\ref{oip})$ and $(\ref{ancora})$, thanks to the relation $(\ref{interessante})$, for every $\epsilon>0$, there exists $C_3>0,$ such that for $k\in\N^*$ sufficiently large, \begin{align}\label{alin}|\la\phi_1,B\phi_k\ra_{L^2}|\geq {C_3}{k^{-(5+\epsilon)}}. \end{align}

\noindent
Now, it is possible to compute $a_k(\cdot)$ and $B_k(\cdot)$ with $k\in\N^* $, analytic functions in $\R^+$, so that $$a_k(L_1)^2=(a_k^1)^2,\ \ \ \  {a_1(L_1)a_k(L_1)}B_k(L_1)=\la\phi_1, B\phi_k\ra_{L^2}$$ and such that each ${a_1(\cdot)a_k(\cdot)}B_k(\cdot)$ is non-constant and analytic.
	Thus, each ${a_1(\cdot)a_k(\cdot)}B_k(\cdot)$ has discrete zeros $\tilde V_k\subset\R^+$ and $\tilde V=\bigcup_{k\in\N^*}\tilde V_k$ is countable. For every $\{L_l\}_{l\leq N}\in \AL\LL(N)$ so that $L_1\not\in \tilde V$, we have  $|\la\phi_1, B\phi_k\ra_{L^2}|\neq 0$ for every $k\in\N^*.$
Thus, the point {\bf 1.\,\,}of Assumptions I($3+\epsilon$) is ensured thanks to the relations $(\ref{alin})$ since, for every $\epsilon>0$, there exists $C_4>0$ such that $$|\la\phi_1,B\phi_k\ra_{L^2}|\geq {C_4}{k^{-(5+\epsilon)}},\ \ \ \ \forall k\in\N^*.$$

	Let $(k,j),(m,n)\in I,\ (k,j)\neq(m,n)$ for $I:=\{(j,k)\in(\N^*)^2:j< k\}$. We prove the validity of the point 
{\bf 2.\,\,}of Assumptions I($3+\epsilon$). As above, we compute $F_k(\cdot)$ with $k\in\N^*$, analytic in $\R^+$, 
such that $\la\phi_k, B\phi_k\ra_{L^2}=F_k(L_1)$. Each 
$F_{j,k,l,m}(\cdot):=F_j(\cdot)-F_k(\cdot)-F_l(\cdot)+F_m(\cdot)$ is non-constant and analytic in $\R^+$, the set 
of its positive zeros $V_{j,k,l,m}$ is discrete. Now, we introduce the countable set: $$V:=\bigcup_{(j,k),(l,m)\in 
I\ : \ (j,k)\neq(l,m)}V_{j,k,l,m}.$$ For $\{L_l\}_{l\leq N}\in \AL\LL(N)$ so that $L_1\not\in V\cup \tilde V$, the 
point {\bf 2.\,\,} of Assumptions I$(3+\epsilon)$ with $\epsilon>0$ is satisfied.
	
\smallskip
	\noindent
	{\bf Validation of Assumptions II($\mathbf{3+\epsilon_1,\epsilon_2}$) with $\mathbf{\epsilon_1,\epsilon_2>0}$ so that $\mathbf{\epsilon_1+\epsilon_2\in\big(0,\frac12\big)}$.} Let 
$$P(x):=(5x^6-24x^5 L_1 +45 x^4 L_1^2-40 x^3 L_1^3 +15 x^2 L_1^4 -L_1^6).$$ For
$m> 0$, we notice $B :H^m\longrightarrow H^m$ and $\dd_x(B\psi)(\widetilde v)=0$ for every $\widetilde v\in V_e$ 
since $\dd_x P(0)=0$. Now, $\dd_x(B\psi)(v)=(B\psi)(v)=0$ with $v\in V_i$ since $\dd_x P(L_1)=P(L_1)=0$. Then, 
$B:H_{\Gi}^2\rightarrow H_{\Gi}^2$. Moreover, $\dd_x^2P(L_1)=\dd_x^3P(L_1)=0,$ which implies
$B:H^m_{\NN\KK}\longrightarrow H^m_{\NN\KK}$ for every $m\in \big(0,\frac{9}{2}\big).$ For 
$d\in\big[3+\epsilon_1+\epsilon_2,\frac72\big)$ and $d_1\in\big(d,\frac72\big)$, there follow $$Ran(B|_{ 
H^{d_1}_{\NN\KK}})\subseteq H^{d_1}_{\NN\KK},\ \ \ \ \ \ \ Ran(B|_{H_{\Gi}^{2+d}})\subseteq Ran(B|_{H^{2+d}\cap 
H^{1+d}_{\NN\KK}\cap H^2_{\Gi}})\subseteq H^{2+d}\cap H^{1+d}_{\NN\KK}\cap H^2_{\Gi}.$$ 
	The point {\bf 2.\,\,}of Assumptions II($3+\epsilon_1,\epsilon_2$) with $\epsilon_1,\epsilon_2>0$ so that $\epsilon_1+\epsilon_2\in\big(0,\frac12\big)$ is valid.

\smallskip
	\noindent
	{\bf Conclusion.}
	The couple $(A,B)$ satisfies Assumptions I$(3+\epsilon)$ and Assumptions II$(3+\epsilon_1,\epsilon_2)$ with $\epsilon_1,\epsilon_2>0$ so that $\epsilon_1+\epsilon_2\in\big(0,\frac12\big)$. Theorem $\ref{final}$ guarantees the global exact controllability of the $(\ref{mainx1})$ in $H^s_\Gi$ with $s=2+d$ and $d\geq 3+\epsilon_1+\epsilon_2$.\qedhere
\end{proof}

\section{Energetic controllability}\label{controlloenergia}
We recall that $(\varphi_{k})_{k\in\N^*}\subseteq(\phi_{k})_{k\in\N^*}$ indicates an orthonormal 
system (not necessarily complete) of $\Hi$ made by some eigenfunctions of $A$ and $(\mu_{k})_{k\in\N^*}$ the 
ordered sequence of corresponding eigenvalues. Let $$\widetilde\Hi:=\overline{\spn\{\varphi_k\ |\ 
k\in\N^*\}}^{\ L^2}.$$ We refer to Definition $\ref{energiaaa}$ for the formal 
definition of energetic controllability.

\begin{teorema}\label{globalenergetic}
Let $\Gi$ be a compact quantum graph and one of the following points be verified.

\begin{enumerate} 
	\item There exists an entire function $G$ such that  $G \in L^\infty(\R,\R)$ and there exist $J,I>0$ so that $|G(z)|\leq J e^{I|z|}$ for every $z\in\C.$
The eigenvalues $\{{\mu_k}\}_{k\in\N^*}$ are simple, the numbers $\{\pm\sqrt{\mu_k}\}_{k\in\N^*}$ are simple zeros 
of $G$ and there exist $\tilde d\geq 
0$ and $C>0$ so that $|G'(\pm\sqrt{\mu_k})|\geq \frac{C}{k^{1+\tilde d}}$ for every $k\in\N^*.$

\item There exist $C>0$ and $\tilde d\geq 0$ so that $|\mu_{k+1}-\mu_k|\geq {C}{k^{-\frac{\tilde d}{\MM-1}}}$ for each $k\in\N^*$ with $\MM$ from \eqref{g13}.

\end{enumerate}
If $(A,B)$ satisfies Assumptions I$(\upvarphi,\eta)$ and Assumptions II$(\upvarphi,\eta,\tilde d)$ for $\eta>0$, then the $(\ref{mainx1})$ is globally exactly controllable in $H^{s}_{\Gi}\cap\widetilde\Hi$ for $s=2+d$ with $d$ from Assumptions II$(\upvarphi,\eta,\tilde d)$ and energetically controllable in $(\mu_k)_{{k\in\N^*}}.$
\end{teorema}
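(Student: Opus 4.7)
The plan is to adapt the argument used for Theorem \ref{global} to the restricted subspace $\widetilde\Hi$, exploiting Remark \ref{wellene} together with Assumptions II$(\upvarphi,\eta,\tilde d)$ to guarantee that the propagator $\G^u_t$ leaves $H^2_\Gi\cap\widetilde\Hi$ invariant. Throughout the proof I would replace the global Hilbert basis $(\phi_k)_{k\in\N^*}$ by the orthonormal system $(\varphi_k)_{k\in\N^*}$ and the full spectrum $(\lambda_k)_{k\in\N^*}$ by its subsequence $(\mu_k)_{k\in\N^*}$, and work inside $\widetilde\Hi$ exactly as the original proof works inside $\Hi$.

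First, I would establish a local exact controllability result around $\varphi_1(T)=e^{-i\mu_1 T}\varphi_1$ in the neighborhood $O_{\varepsilon,T}^s\cap\widetilde\Hi$ with $s=2+d$, following Proposition \ref{localteorema} verbatim. The linearization of $u\mapsto\G^u_T\varphi_1$ at $u=0$ produces, via Duhamel's formula, the moment problem
\begin{equation*}
x_k\,\la\varphi_k,B\varphi_1\ra_{L^2}^{-1}=-i\int_0^T u(\tau)e^{i(\mu_k-\mu_1)\tau}d\tau,\qquad k\in\N^*,
\end{equation*}
and point \textbf{1.\,} of Assumptions I$(\upvarphi,\eta)$ guarantees that the right-hand sides $\bigl(x_k\la\varphi_k,B\varphi_1\ra^{-1}_{L^2}\bigr)_{k}$ lie in $h^{\tilde d}$ whenever $(x_k)_k\in T_{\updelta}Q^s$. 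Under hypothesis~\textbf{1.\,} of the statement, Proposition \ref{entire} applied to the subsequence $(\mu_k)_k$ delivers a real-valued $u\in L^2((0,T),\R)$ solving this moment problem for $T$ large enough (note that $(\mu_k)\subseteq(\lambda_k)$ still satisfies \eqref{interessante}, and the hypothesis on $G$ supplies the required gap $\inf_k|\sqrt{\mu_{k+\MM}}-\sqrt{\mu_k}|\geq\delta\MM$). Under hypothesis~\textbf{2.\,}, the condition $|\mu_{k+1}-\mu_k|\geq C k^{-\tilde d/(\MM-1)}$ combined with \eqref{g13} furnishes the $\MM$-gap required by the Ingham-type statement from \cite[Appendix B]{mio3}, and the same moment problem is solved in $h^{\tilde d}$ for some $T>0$. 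Applying the Generalized Inverse Function Theorem as in Proposition \ref{localteorema} then yields local exact controllability inside $\widetilde\Hi$.

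Next, I would invoke the global approximate controllability result of Appendix~\ref{approximate}, which by construction uses Assumptions I$(\upvarphi,\eta)$ and is compatible with the invariance of $\widetilde\Hi$ under $\G^u_t$: for every $\psi\in H^s_\Gi\cap\widetilde\Hi$ of unit norm, there exist $T_1>0$ and $u_1\in L^2((0,T_1),\R)$ steering $\psi$ arbitrarily close to $\varphi_1(T)$ in the $\|\cdot\|_{(s)}$ norm while staying in $\widetilde\Hi$, and similarly for the reversed dynamics as in Remark \ref{timereversibility}. Concatenating the approximate control for the source $\psi^1$, the local exact control to $\varphi_1$, the local exact control from $\varphi_1$ to a state close to $\psi^2$, and the time-reversed approximate control, one obtains the global exact controllability in $H^s_\Gi\cap\widetilde\Hi$ exactly as in the proof of Theorem \ref{global}. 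Finally, every eigenfunction $\varphi_m$ lies in $H^s_\Gi\cap\widetilde\Hi$ for all $s\geq 0$ and has unit $L^2$-norm, so for any $m,n\in\N^*$ the global exact controllability in $H^s_\Gi\cap\widetilde\Hi$ directly yields $\G^u_T\varphi_m=\varphi_n$, i.e.\ the energetic controllability in $(\mu_k)_{k\in\N^*}$.

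The main obstacle I anticipate is the verification, under hypothesis~\textbf{2.\,}, that the chosen decay rate $\tilde d/(\MM-1)$ of the spectral gap for $(\mu_k)$ translates into solvability of the moment problem in the space $h^{\tilde d}$ needed for the local controllability step; this is exactly the balance recorded in the Ingham-type estimate of \cite[Appendix B]{mio3}, so the argument reduces to quoting that result with the correct exponents. A minor point also requiring care is that the approximate controllability in Appendix \ref{approximate} must be stated for the restricted propagation inside $\widetilde\Hi$, which is legitimate because Assumptions II$(\upvarphi,\eta,\tilde d)$ was designed precisely to ensure $B$ preserves the relevant subspaces of $\widetilde\Hi$.
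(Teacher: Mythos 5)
Your proposal is correct and follows essentially the same route as the paper: the paper's own proof is a short reduction observing that, by Remark \ref{wellene}, the \eqref{mainx1} is well-posed in $H^s_\Gi\cap\widetilde\Hi$, that the proof of Theorem \ref{global} goes through verbatim in $\widetilde\Hi$ under hypothesis \textbf{1}, that \cite[Theorem 3.2]{mio3} (i.e.\ the Ingham-type weak-gap argument you cite) applies in $\widetilde\Hi$ under hypothesis \textbf{2}, and that energetic controllability follows because each $\varphi_k$ lies in $H^s_\Gi\cap\widetilde\Hi$. Your write-up simply expands these same steps in more detail.
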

\begin{proof}
	From Remark $\ref{wellene}$, the $(\ref{mainx1})$ is well-posed in $H^s_{\Gi}\cap\widetilde\Hi$ with $s=2+d$ 
and $d$ from Assumptions II$(\upvarphi,\eta,\tilde d)$. The statement of Theorem $\ref{global}$ holds in 
$\widetilde\Hi$ when the point {\bf 1.\,\,}is valid, while the validity of $\cite[Theorem\ 3.2]{mio3}$ in 
$\widetilde\Hi$ is guaranteed by {\bf 2.\,\,}. The global exact controllability is provided in 
$H_{\Gi}^s\cap\widetilde\Hi$ and the energetic controllability follows as $\varphi_k\in H^s_{\Gi}\cap\widetilde\Hi$ 
for every $k\in\N^*$.\qedhere
\end{proof}
Let $\Gi$ be a compact quantum graph.
By watching the structure of the graph and the boundary conditions of $D(A)$, it is possible to construct some 
eigenfuctions $(\ffi_k)_{k\in\N^*}$ of $A$ corresponding to some eigenvalues $(\mu_k)_{k\in\N^*}$. For instance, we 
consider $\Gi$ containing one loop $e_1$ of length $1$ connected to the graph in a vertex $v$. In such case, we 
point out that the Neumann-Kirchhoff boundary conditions in $v$ valid for a function $\psi\in D(A)$ yield that 
$\sum_{j\in N(v)\setminus \{1\}}\dd_x\psi^j(v)+\dd_x\psi^1(0)-\dd_x\psi^1(1)=0.$

\begin{figure}[H]
	\centering
	\includegraphics[width=\textwidth-150pt]{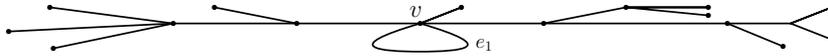}
	\caption{Example of compact graph containing a loop.}
\end{figure}

\noindent
We define $\upvarphi:=(\ffi_k)_{k\in\N^*}$ such that $\ffi_k=\big(\sqrt{{2}}\sin({2k\pi}x),0,...,0\big)$
and the corresponding eigenvalues $(\mu_k)_{k\in\N^*}=(4k^2\pi^2)_{k\in\N^*}\subseteq (\lambda_k)_{k\in\N^*}$, satisfying the gap condition
$$\inf_{k\in\N^*}|\mu_{k+1}-\mu_k|=12\pi^2>0.$$ The spectral hypotheses of Theorem \ref{globalenergetic} 
are guaranteed and the energetic controllability can be ensured by choosing a suitable $B$. In 
particular, if $(A,B)$ satisfies Assumptions I$(\upvarphi,\eta)$ and Assumptions II$(\upvarphi,\eta,0)$ for 
$\eta>0$, then Theorem $\ref{globalenergetic}$ implies the energetic controllability in $(\mu_k)_{k\in\N^*}$. As we 
will show in the proof of Theorem $\ref{bim1.1}$, this approach is also valid when $\Gi$ contains more loops ({\it 
e.g.} Figure $\ref{cappio}$).

\begin{oss} The idea described above can be adopted when $\Gi$ contains suitable sub-graphs denoted  
\virgolette{uniform chains}. A {uniform chain} is a sequence of edges of equal length $L$ connecting $M\in\N^*$ 
vertices $\{v_j\}_{j\leq M}$ such that $v_2,...,v_{M-1}\in V_i$ when $M\geq 3$. Moreover, one of the following 
conditions holds: either $v_1,v_{M}\in V_e$ are equipped with ($\Di$), or $v_1=v_{M}$ belong to $V_i$, or 
$M\in\{2,3\}$ and $v_1,v_M\in V_e$ are equipped with ($\NN$). 
	\begin{figure}[H]
		\centering
		
		\includegraphics[width=\textwidth-150pt]{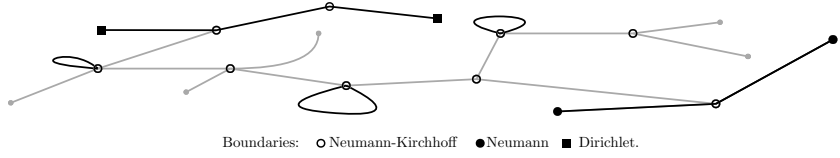}
		\caption{The figure underlines the uniform chains in a compact graph.}\label{bas3}
	\end{figure}
	
	\noindent
	Let $\Gi$ contain ${\widetilde N}\in\N^*$ {uniform chains} $\{\widetilde\Gi_j\}_{j\leq\widetilde N}$, composed 
by edges of lengths $\{L_j\}_{j\leq \widetilde N}\in\AL\LL(\widetilde N)$. Let $I_1\subseteq\{1,...,\widetilde N\}$ 
and $I_2\subseteq\{1,...,\widetilde N\}\setminus I_1$ be respectively the sets of indices $j$ such that the 
external vertices of $\widetilde\Gi_j$ are equipped with $(\NN)$ and $(\Di)$, while $I_3:=\{1,...,\widetilde 
N\}\setminus (I_1\cup I_2)$. 
	We consider the eigenvalues $(\mu_k)_{k\in\N^*}$ obtained by reordering 
$$\Big(\frac{(2k-1)^2\pi^2}{4L_j^2}\Big)_{\underset{j\in I_1}{k,j\in\N^*}}\cup 
\Big(\frac{k^2\pi^2}{L^2_j}\Big)_{\underset{j\in I_2}{k,j\in\N^*}}\cup 
\Big(\frac{4k^2\pi^2}{L^2_j}\Big)_{\underset{j\in I_3}{k,j\in\N^*}}.$$
As in the proof of $\cite[Lemma\ 2.6]{mio3}$, the Roth's Theorem $\cite[Proposition\ A.1]{mio3}$ ensures that, if 
$\{L_j\}_{j\leq \widetilde N}\in\AL\LL(\widetilde N)$, then for every $\epsilon>0,$ there exists $C>0$ so that 
$$|\mu_{k+1}-\mu_k|\geq {C}{k^{-\epsilon}},\ \ \ \ \ \ \ \ \ \forall k\in\N^*.$$ 
Thus, the spectral hypotheses of Theorem \ref{globalenergetic} 
are guaranteed and the energetic 
controllability can be ensured by choosing a suitable control operator $B$. If $(A,B)$ satisfies Assumptions 
I$(\upvarphi,\eta)$ and Assumptions II$(\upvarphi,\eta,\epsilon)$ with $\eta>0$, then Theorem 
$\ref{globalenergetic}$ implies the energetic controllability in $(\mu_k)_{k\in\N^*}$.
\end{oss}

\subsection{Proof of Theorem \ref{bim2} and some applications of Theorem \ref{globalenergetic}}\label{energia}
\begin{proof}[Proof of Theorem $\ref{bim2}$]
	Let us assume $N=3$. The ($\Di$) conditions to the external vertices $V_e$ imply $\phi_k=(a^1_k\sin(\sqrt{\mu_k} x),a^2_k\sin(\sqrt{\mu_k} x),a^3_k\sin(\sqrt{\mu_k} x))$ with suitable $(a^1_k, a^2_k, a^3_k )\in\C^3$. From the ($\NN\KK$) in $v\in V_i$, there follow $\sum_{l\leq 3}a^l_{k}\cos(\sqrt{\mu_k}L)=0$ and $a^m_{k}\sin(\sqrt{\mu_k}L)=c\in\R$ for every $m\leq 3.$ 	When $c\neq 0$, we have the eigenvalues $\big(\frac{(2k-1)^2\pi^2}{4L^2}\big)_{k\in\N^*}$ corresponding to the eigenfunctions $(g_k)_{k\in\N^*}$ so that $$g_k=\Big(\sqrt{\frac{2}{3L}}\sin\Big(\frac{(2k-1)\pi}{2L} x\Big),\sqrt{\frac{2}{3L}}\sin\Big(\frac{(2k-1)\pi}{2L}x\Big),\sqrt{\frac{2}{3L}}\sin\Big(\frac{(2k-1)\pi}{2L}x\Big)\Big),\ \ \ \ \ \forall k\in\N^*.$$
	When $c=0$, we obtain the eigenvalues $\big(\frac{k^2\pi^2}{L^2}\big)_{k\in\N^*}$ of multiplicity two that we associate to the couple of sequences of eigenfunctions  $(f^1_k)_{k\in\N^*}$ and $(f^2_k)_{k\in\N^*}$ such that, for every $k\in\N^*$, $$f^1_k:=\Big(-\sqrt{\frac{4}{3L}}\sin\Big(\frac{k\pi}{L}x\Big),\sqrt{\frac{1}{3L}}\sin\Big(\frac{k\pi}{L}x\Big),\sqrt{\frac{1}{3L}}\sin\Big(\frac{k\pi}{L}x\Big)\Big),$$
	$$f^2_k:=\Big(0,-\sqrt{\frac{1}{L}}\sin\Big(\frac{k\pi}{L}x\Big),\sqrt{\frac{1}{L}}\sin\Big(\frac{k\pi}{L}x\Big)\Big).$$
	Moreover,  $(f^1_k)_{k\in\N^*}\cup(f^2_k)_{k\in\N^*}\cup(g_k)_{k\in\N^*}$ is a Hilbert basis of $\Hi$ and $\big(\frac{k^2\pi^2}{L^2}\big)_{k\in\N^*}\cup\big(\frac{(2k-1)^2\pi^2}{4L^2}\big)_{k\in\N^*} $ are the eigenvalues of $A$ (not considering their multiplicity).

	\smallskip
	\noindent
	{\bf Validation of Assumptions I($\mathbf{\upvarphi,1}$).} We reorder $(f^1_k)_{k\in\N^*}\cup(g_k)_{k\in\N^*}$ 
in $\upvarphi=(\varphi_k)_{k\in\N^*}$. The point {\bf 1.\,\,}of Assumptions I($\upvarphi,1$) is verified since 
there exists $C_1,C_2>0$ such that
	$$|\la\varphi_1,B\varphi_k\ra_{L^2}|\geq \frac{C_1\sqrt{\mu_k}\sqrt{\mu_1}}{({\mu_k}-{\mu_1})^2}\geq 
\frac{C_2}{k^{3}},\ \ \ \ \ \ \forall k\in\N^*.$$ Subsequently, there exist $C_3,C_4>0$ so that 
$B_{k,k}:=\la\varphi_k,B\varphi_k\ra_{L^2}=C_3+{C_4}{k^{-2}}$ for every $k\in\N^*$ and $\mu_k=\frac{\pi^2k^2}{4 
L^2}$. Now, if $\mu_j-\mu_k-\mu_l+\mu_m=\frac{\pi^2}{4L^2}(j^2-k^2-l^2+m^2)=0$ with $(k,j),(m,n)\in I$ and 
$(k,j)\neq(m,n)$, then $$B_{j,j}-B_{k,k}- B_{l,l}+ 
B_{m,m}=C_4(j^{-2}-k^{-2}-l^{-2}+m^{-2})\neq 0,$$
	which implies the point {\bf 2.\,\,}of Assumptions I($\upvarphi,1$).
	
	\smallskip
	\noindent
	{\bf Validation of Assumptions II($\mathbf{\upvarphi,1,0}$) and conclusion.} The operator $B$ stabilizes the 
spaces $H^m$ with $m>0$ and $\overline{\spn\{\ffi_k:\ k\in\N^*\}}^{ L^2}\cap H^2_\Gi$, ensuring the point {\bf 
1.\,\,}of Assumptions II($\upvarphi,1,0)$.
	Since $$\inf_{j,k\in\N^*}|\mu_k-\mu_j|=\frac{\pi^2}{4L^{2}},$$ the point {\bf 2.\,\,}of Theorem $\ref{globalenergetic}$ holds and the global exact controllability is proved in $H^3_\Gi\cap\widetilde\Hi$. As $\ffi_k\in H^3_\Gi\cap\widetilde\Hi$ for every $k\in\N^*$, the energetic controllability follows in $\big(\frac{k^2\pi^2}{4L^2}\big)_{k\in\N^*}.$

	\smallskip
		When $N>3$, the spectrum contains simple eigenvalues relative to some eigenfunctions $(g_k)_{k\in\N^*}$ and multiple eigenvalues, each one corresponding to $N-1$ eigenfunctions $\{f_{k;j}\}_{l\leq N-1}$ with $k\in\N^*$. For each $k\in\N^*$, we construct $\{f_{k;j}\}_{l\leq N-1}$ such that only the functions $\{f_{k;j}\}_{l\leq N-2}$ vanish in $e_1$. We reorder $(f_{k;N-1})_{k\in\N^*}\cup(g_k)_{k\in\N^*}$ in $\upvarphi=(\varphi_k)_{k\in\N^*}$ and the proof is achieved as done for $N=3$.\qedhere
\end{proof}
\begin{teorema}\label{bim2.1}
	Let $\Gi$ be a star graph equipped with ($\Di$/$\NN$). Let $\Gi$ contain two edges $e_1$ and $e_2$ of length 
$1$ and 
connected to two external vertices both equipped with ($\Di$). Let $B$ be such that
$$B\psi=\big(x^2(\psi^{1}(x)-\psi^{2}(x)),x^2(\psi^{2}(x)-\psi^{1}(x)),0,...,0\big),\ \ \ \ \ \forall\psi\in\Hi.$$
	There exists $(\varphi_{k})_{k\in\N^*}\subset(\phi_k)_{{k\in\N^*}}$ such that the $(\ref{mainx1})$ is globally exactly controllable in $H^{3}_\Gi\cap\widetilde\Hi$ and energetically controllable in $({k^2\pi^2})_{k\in\N^*}.$ 
\end{teorema}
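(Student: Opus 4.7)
The plan is to apply Theorem~\ref{globalenergetic} to the family of antisymmetric eigenfunctions supported on the pair of distinguished edges $e_1$ and $e_2$. I would set
$$\varphi_k:=\big(\sin(k\pi x),-\sin(k\pi x),0,\ldots,0\big),\qquad k\in\N^*,$$
and first verify that each $\varphi_k$ is a normalized eigenfunction of $A$ with eigenvalue $\mu_k=k^2\pi^2$: the Dirichlet conditions on the external vertices of $e_1$ and $e_2$ hold because $\sin(k\pi\cdot)$ vanishes at $0$; on the remaining edges $\varphi_k\equiv 0$ satisfies any boundary data trivially; and at the internal vertex $v$ continuity follows from $\sin(k\pi)=0$ while the Neumann--Kirchhoff sum vanishes by cancellation between $e_1$ and $e_2$. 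Consequently $\widetilde\Hi=\{(g,-g,0,\ldots,0):g\in L^2(0,1)\}$, and the uniform spectral gap $\mu_{k+1}-\mu_k=\pi^2(2k+1)\geq 3\pi^2$ places us in hypothesis~2 of Theorem~\ref{globalenergetic} with $\tilde d=0$.

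Next I would verify Assumptions~I$(\upvarphi,1)$. The antisymmetric structure gives $\la\varphi_k,B\varphi_j\ra_{L^2}=4\int_0^1 x^2\sin(k\pi x)\sin(j\pi x)\,dx$. Applying product-to-sum together with $\int_0^1 x^2\cos(n\pi x)\,dx=2(-1)^n/(n^2\pi^2)$ yields $\la\varphi_1,B\varphi_k\ra_{L^2}=16(-1)^{k-1}k/(\pi^2(k^2-1)^2)$ for $k\geq 2$, hence $|\la\varphi_1,B\varphi_k\ra_{L^2}|\geq Ck^{-3}$, which is point~1 with $\eta=1$. For the non-resonance condition, a similar Fourier computation gives $B_{k,k}:=\la\varphi_k,B\varphi_k\ra_{L^2}=2/3-1/(k^2\pi^2)$, so the four-term combination equals $-\pi^{-2}(j^{-2}-k^{-2}-l^{-2}+m^{-2})$. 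Under the resonance $k^2-j^2=m^2-l^2=:N>0$, vanishing would force $jk=lm$; setting $p:=k/j$ and $q:=m/l$ reduces the two equations to $(p-q)(pq+1)=0$, so $p=q$ and then $(j,k)=(l,m)$, contradicting the assumption.

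For Assumptions~II$(\upvarphi,1,0)$ I would use case~1, available since $\Gi$ carries $(\Di/\NN)$ and $a+\eta=0+1=1\in(0,3/2)$. The action $B(g,-g,0,\ldots,0)=(2x^2g,-2x^2g,0,\ldots,0)$ shows immediately that $B$ preserves $\widetilde\Hi$, and the vanishing of the multiplier $x^2$ at $x=0$ combined with $g(1)=0$ (forced by $f\in H^2_\Gi$) gives $B:H^2_\Gi\cap\widetilde\Hi\to H^2_\Gi\cap\widetilde\Hi$. Taking $d=1$, the same check together with the $C^\infty$-smoothness of $x^2$ yields $Ran(B|_{H^{3}_{\Gi}\cap\widetilde\Hi})\subseteq H^{3}\cap H^2_{\Gi}\cap\widetilde\Hi$; crucially the target space is $H^{3}\cap H^2_\Gi$ rather than $H^{3}_\Gi$, so no additional constraint on $(Bf)''(1)$ is required and the computation closes without further boundary matching.

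Combining the spectral gap with Assumptions~I$(\upvarphi,1)$ and~II$(\upvarphi,1,0)$, Theorem~\ref{globalenergetic} (point~2) delivers global exact controllability in $H^{3}_{\Gi}\cap\widetilde\Hi$, and since every $\varphi_k$ lies in this space with unit $L^2$-norm, the energetic controllability in $(k^2\pi^2)_{k\in\N^*}$ follows immediately. The main subtlety I expect is the non-resonance verification: one must rule out all coincidences of the form $k^2-j^2=m^2-l^2$ with $jk=lm$ and $(j,k)\neq(l,m)$, and while the $(p,q)$-manipulation settles this cleanly, the argument leans precisely on the multiplicative structure of $B_{k,k}=2/3-(k\pi)^{-2}$, which is what singles out $x^2(\psi^1-\psi^2)$ as the right control multiplier.
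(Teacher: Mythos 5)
Your proposal is correct and follows the same route as the paper: the paper's proof also takes $\varphi_k=(\sin(k\pi x),-\sin(k\pi x),0,\dots,0)$ with $\mu_k=k^2\pi^2$ and invokes point 2 of Theorem \ref{globalenergetic} with $\tilde d=0$, deferring the verification of Assumptions I and II to the computations in the proof of Theorem \ref{bim2}. Your explicit checks (the $k^{-3}$ lower bound on $\langle\varphi_1,B\varphi_k\rangle_{L^2}$, the non-resonance argument via $B_{k,k}=2/3-(k\pi)^{-2}$, and the range conditions with $d=1$) are exactly the details that proof supplies in the analogous setting.
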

	\begin{figure}[H]
		\centering
		\includegraphics[width=\textwidth-150pt]{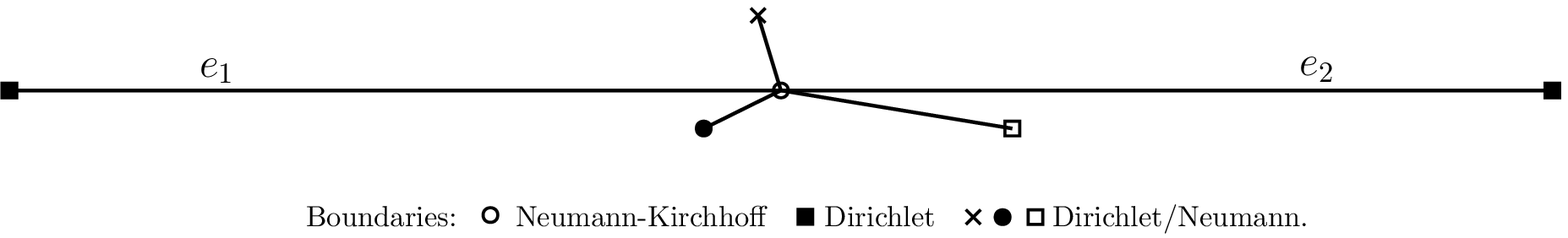}
		\caption{Example of star graph described by Theorem $\ref{bim2.1}$ with $N=5$.}
	\end{figure}

\begin{proof}
	Let $(\mu_{k})_{k\in\N^*}$ and $(\ffi_k)_{k\in\N^*}$ be so that $\mu_{k}=k^2\pi^2$, $\ffi_{k}^1=-\ffi_{k}^2=\sin({k\pi}x)$ and $\ffi_{k}^l=0$ with $k\in\N^*$ and $3\leq l\leq N.$
The claim follows from the point {\bf 2.\,\,}of Theorem $\ref{globalenergetic}$ with $\tilde d=0$ as Theorem 
$\ref{bim2}$. \qedhere
\end{proof}

\begin{teorema}\label{bim2.2}
	Let $\Gi$ be a star graph equipped with $(\Di)$ and composed by $\frac{N}{2}$ couples of edges $\{e_{2j-1}, 
e_{2j}\}_{j\leq \frac{N}{2}}$ of lengths $\{L_j\}_{j\leq \frac{N}{2}}\in\AL\LL(\frac{N}{2})$ with $N\in 2\N^*$. 
Let 
$B$ be such that $B\psi=((B\psi)^1,...,(B\psi)^N)$ for $\psi\in\Hi$ and 
$$(B\psi)^{2j}=-(B\psi)^{2j-1}=\sum_{l=1}^{N/2}{\color{black}\frac{L_l^{{3}/{2}}}{L_j^{{3}/{2}}}}x^2\Big(\psi^{2l} 
\Big(\frac{L_l}{L_j}x\Big)-\psi^{2l-1}\Big(\frac{L_l}{L_j}x\Big)\Big),\ \ \  \ \ \forall j\leq \frac{N}{2}.$$
	 There exists $\CC\subset (\R^+)^N$ countable so that, for every $\{L_j\}_{j\leq N}\in\AL\LL(N)\setminus \CC$, there exists $(\varphi_{k})_{k\in\N^*}\subseteq(\phi_k)_{{k\in\N^*}}$ such that $(\ref{mainx1})$ is globally exactly controllable in $H^{3+\epsilon}_\Gi\cap\widetilde\Hi$ with $\epsilon>0$ and energetically controllable in $\big(\frac{k^2\pi^2}{L_j^2}\big)_{\underset{j\leq N/2}{k,j\in\N^*}}.$
\end{teorema}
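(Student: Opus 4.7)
The plan is to apply Theorem~\ref{globalenergetic}(2) to the orthonormal system $\upvarphi=(\varphi_k)_{k\in\N^*}$ of \emph{antisymmetric} eigenfunctions. The $L_l^{3/2}/L_j^{3/2}$ rescalings appearing in the definition of $B$ are tuned precisely so that, on the invariant subspace $\widetilde\Hi$, the action of $B$ factors through a rescaled one-dimensional Dirichlet problem with potential $x^2$, in the spirit of Theorem~\ref{bim2.1}.

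For each $l\leq N/2$ and $n\in\N^*$, I set $\varphi_{l,n}^{2l}(x)=-\varphi_{l,n}^{2l-1}(x)=L_l^{-1/2}\sin(n\pi x/L_l)$ and $\varphi_{l,n}^j\equiv 0$ otherwise. The Dirichlet condition at external vertices is automatic; at the internal vertex, continuity follows from $\sin(n\pi)=0$ and the Neumann--Kirchhoff sum cancels by the intra-pair antisymmetry, so $A\varphi_{l,n}=(n^2\pi^2/L_l^2)\varphi_{l,n}$. I reorder $\{\varphi_{l,n}\}$ into $\upvarphi$ with ordered eigenvalues $(\mu_k)_{k\in\N^*}$. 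Since $\{L_j\}_{j\leq N/2}\in\AL\LL(N/2)$, Roth's theorem (as in \cite[Lemma~2.6]{mio3}) gives simplicity of $(\mu_k)$ and the weak gap $|\mu_{k+1}-\mu_k|\geq C_\epsilon\,k^{-\epsilon}$ for every $\epsilon>0$, which matches the gap hypothesis of Theorem~\ref{globalenergetic}(2) with $\tilde d=\epsilon$. A direct inspection shows $B(\widetilde\Hi)\subseteq\widetilde\Hi$, so Remark~\ref{wellene} applies.

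The key computation is the factorization
\[
\la \varphi_{l_1,n_1},\,B\varphi_{l_2,n_2}\ra_{L^2}
 = 4\,L_{l_1}L_{l_2}\int_0^1 y^2 \sin(n_1\pi y)\sin(n_2\pi y)\,dy,
\]
obtained by substituting $\psi=\varphi_{l_2,n_2}$ into $B$, restricting to the pair $(e_{2l_1-1},e_{2l_1})$, and performing the change of variable $y=x/L_{l_1}$ so that the $L$-dependent rescalings cancel. The right-hand side is, up to the factor $4L_{l_1}L_{l_2}$, the matrix element of the potential $x^2$ between Dirichlet eigenfunctions on $(0,1)$, so the standard integration-by-parts yields $|\la\varphi_{l_1,1},\,B\varphi_{l_2,n_2}\ra|\sim C\,n_2^{-3}$ (nonvanishing for every $n_2$). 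Combined with $\mu_k\asymp k^2$, which forces $n_k\asymp k$ in the reordering, this gives Point~1 of Assumptions~I$(\upvarphi,1+\epsilon)$.

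The main obstacle is Point~2 of Assumptions~I. From the explicit value $B_{k,k}=4L_{l_k}^2\big(\tfrac{1}{6}-\tfrac{1}{4n_k^2\pi^2}\big)$, the quantity $B_{j,j}-B_{k,k}-B_{l,l}+B_{m,m}$ is an analytic function of $(L_1,\dots,L_{N/2})$ for each quadruple. When all four indices lie in the same tower ($l_j=l_k=l_l=l_m$), the $\tfrac{2}{3}L_l^2$ terms cancel and the non-resonance reduces to the classical one-dimensional statement $\sum\pm 1/n^2\neq 0$ on the zero set of $\sum\pm n^2=0$, which is standard. Genuine cross-tower resonances force nontrivial algebraic constraints among the $L_l^2$'s; as in the proof of Theorem~\ref{bim.2}, the countable union of the corresponding zero sets defines the exceptional $\CC$, outside which Point~2 holds. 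Finally, the $x^2$-prefactor kills the Dirichlet traces at the external vertices and the rescaling preserves smoothness inside each pair, so $B$ stabilizes $H^m\cap\widetilde\Hi$ for every $m>0$, establishing clause~3 of Assumptions~II$(\upvarphi,1+\epsilon,\epsilon)$ with $d=1+2\epsilon$. Theorem~\ref{globalenergetic}(2) then delivers global exact controllability in $H^{3+\epsilon}_\Gi\cap\widetilde\Hi$ and energetic controllability in $(\mu_k)_{k\in\N^*}=(k^2\pi^2/L_j^2)_{j\leq N/2,\,k\in\N^*}$.
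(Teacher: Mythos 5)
Your proposal is correct and follows essentially the same route as the paper: the same antisymmetric eigenfunctions $\varphi_{l,n}$ supported on the pairs $\{e_{2l-1},e_{2l}\}$, the same reduction of $\la\varphi_1,B\varphi_k\ra_{L^2}$ to $4L_{l(1)}L_{l(k)}\int_0^1 y^2\sin(m(1)\pi y)\sin(m(k)\pi y)\,dy$, the weak gap via Roth's theorem, and the conclusion through point \textbf{2.} of Theorem \ref{globalenergetic}. You merely make explicit the verifications of Assumptions I and II (the $n^{-3}$ decay, the same-tower non-resonance, and the countable exceptional set via analyticity in the lengths) that the paper delegates to the proofs of Theorem \ref{bim.2} and Theorem \ref{bim2}.
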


	\begin{figure}[H]
		\centering
		\includegraphics[width=\textwidth-150pt]{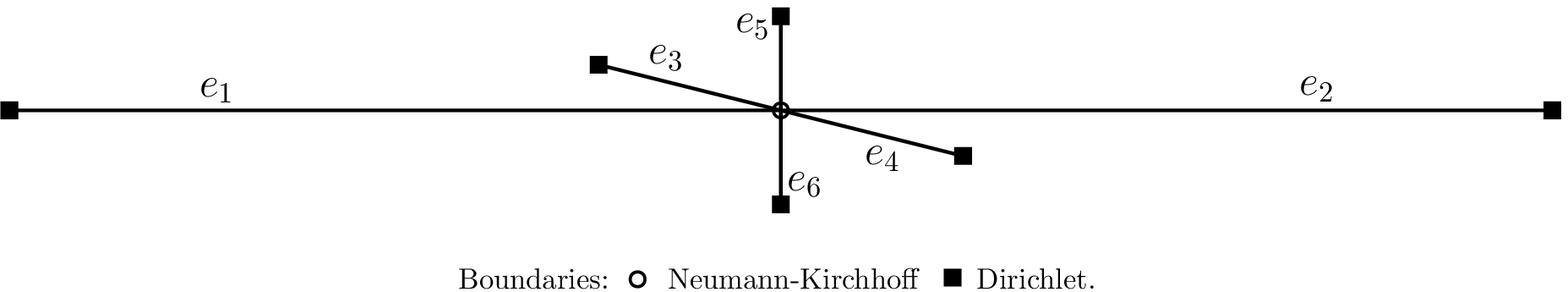}
		\caption{Example of star graph described by Theorem $\ref{bim2.2}$ with $N=6$.}
	\end{figure}
	
\begin{proof}
Let $(\mu_k)_{k\in\N^*}\subset(\lambda_k)_{k\in\N^*}$ be eigenvalues obtained by reordering $\big(\frac{k^2\pi^2}{L_j^2}\big)_{{k\in\N^*}}$ for every $j\leq N/2$ and $(\ffi_k)_{k\in\N^*}$ be an orthonormal system of $\Hi$ made by corresponding eigenfunctions.
	For $k\in\N^*$, there exist $m(k)\in\N^*$ and $l(k)\leq N/2$ so that $\ffi_k^{n}\equiv 0$ for $n\neq 2l(k),$ $2l(k)-1$ and
	$$\mu_k=\frac{m(k)^2\pi^2}{L^{2}_{l(k)}},\ \  \ \ \ffi_k^{2l(k)-1}(x)=-\ffi_k^{2l(k)}(x)=\sqrt{\frac{1}{L_{l(k)}}}\sin{(\sqrt{\mu_k} x)}.$$
	Let $[r]$ be the entire part of $r\in\R^+$. For $k\in\N^*$ and $C=4\min_{l\leq N}L_l$, we have
\begin{equation*} \begin{split}
&|\la\ffi_1, B\ffi_k\ra_{L^2}|=\Bigg|\sum_{l=1}^N\Bigg\la \ffi_k^{l}(x),\sum_{n=1}^{N/2}{\color{black}\frac{L_{n}^\frac{3}{2}x^2}{L_{[(l+1)/2]}^\frac{3}{2}}}\big(\ffi_1^{2n-1}\Big(\frac{L_{n}}{L_{[(l+1)/2]}}x\Big)-\ffi_1^{2n}\Big(\frac{L_{n}}{L_{[(l+1)/2]}}x\Big)\Big)\Bigg\ra_{L^2(e_{l})}\Bigg|\\
	&=\Big|\int_0^{L_{l(k)}} 
{\color{black}\frac{4x^2L_{l(1)}^\frac{3}{2}}{L_{l(k)}^\frac{3}{2}}}\sin\Big(\frac{m(1)\pi 
x}{L_{l(k)}}\Big)\sin\Big(\frac{m(k)\pi x}{L_{l(k)}} \Big) dx\Big|\geq C\Big|\int_0^1 x^2\sin(m(1)\pi 
x)\sin(m(k)\pi x) dx\Big|.\\
	\end{split}\end{equation*} 
	Assumptions I($\upvarphi,1$) and Assumptions II($\upvarphi,1,\epsilon$) with $\epsilon\in(0,\frac{1}{2})$ hold as in the proofs of Theorem $\ref{bim.2}$ and Theorem $\ref{bim2}$.
We consider the techniques adopted in the proof of $\cite[Lemma\ 2.6]{mio3}$ which are due to the Roth's Theorem 
$\cite[Proposition\ A.1]{mio3}$. For every $\epsilon>0,$ there exists $C>0$ so that $$|\mu_{k+1}-\mu_k|\geq 
{C}{k^{-\epsilon}},\ \ \ \  \ \ \forall k\in\N^*.$$
	The claim follows from the hypotheses {\bf 2.\,\,}of Theorem $\ref{globalenergetic}$ with $\tilde 
d=\epsilon>0$.\qedhere
\end{proof}

\begin{teorema}\label{bim1.1}
	Let $\Gi$ be a compact quantum graph. Let the first $\widetilde N\leq N$ edges $\{e_j\}_{j\leq\widetilde N}$ of the graph be loops of lengths $\{L_j\}_{j\leq \widetilde N}$ ({\it e.g} Figure $\ref{cappio}$). For $\psi=(\psi^1,...\psi^N)$, let $B$ be such that $$(B\psi)^l=\sum_{j\leq \widetilde N}{\color{black}\frac{L_j^{{3}/{2}}}{L_l^{{3}/{2}}}}x^2\Big(\frac{x}{L_l}-1\Big)\psi^j\Big(\frac{L_j}{L_l}x\Big),\ \ \ \ \ \ \  \ \ (B\psi)^m\equiv 0,\ \ \ \ \ \forall l\leq \widetilde N,\ \ \widetilde N<m\leq N.$$ 
There exists $\CC\subset (\R^+)^{\widetilde N}$ countable so that, if $\{L_j\}_{j\leq {\widetilde 
N}}\in\AL\LL({\widetilde N})\setminus \CC$, then there exists 
$(\varphi_{k})_{k\in\N^*}\subseteq(\phi_k)_{{k\in\N^*}}$ such that $(\ref{mainx1})$  
	is {globally exactly controllable} in
	$H^{3+\epsilon}_{\Gi}\cup\widetilde\Hi$ with $\epsilon>0$ and energetically controllable in $\big(\frac{k^2\pi^2}{L_j^2}\big)_{\underset{j\leq \widetilde N}{k,j\in\N^*}}.$
\end{teorema}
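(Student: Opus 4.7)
The plan is to invoke Theorem \ref{globalenergetic} (point 2) with a carefully chosen orthonormal system $\upvarphi$ of eigenfunctions supported on single loops. For each loop $e_j$ ($j \leq \widetilde N$) and each $m \in \N^*$, set
\[
\psi_{j,m}^j(x) := \sqrt{2/L_j}\sin(2m\pi x/L_j), \qquad \psi_{j,m}^l \equiv 0 \text{ for } l \neq j.
\]
Continuity at the loop vertex $v$ holds because $\psi_{j,m}^j(0) = \psi_{j,m}^j(L_j) = 0$, and the Neumann-Kirchhoff relation reduces on the loop to $\dd_x\psi_{j,m}^j(0) - \dd_x\psi_{j,m}^j(L_j) = (2m\pi/L_j)(1 - \cos(2m\pi)) = 0$; all other vertex conditions are automatic. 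Hence $\psi_{j,m}\in D(A)$ is an eigenfunction with eigenvalue $4m^2\pi^2/L_j^2$. I reorder the double-indexed family $\{\psi_{j,m}\}$ into $\upvarphi = (\varphi_k)_{k\in\N^*}$ with associated indices $l(k),m(k)$, so that $\mu_k = 4m(k)^2\pi^2/L_{l(k)}^2$. Since $\{L_j\}_{j\leq \widetilde N} \in \AL\LL(\widetilde N)$, the ratios $L_i/L_j$ are irrational, so the $\mu_k$ are pairwise distinct; moreover, the Roth-type argument of \cite[Lemma 2.6]{mio3} yields $|\mu_{k+1} - \mu_k| \geq C_\epsilon k^{-\epsilon}$ for every $\epsilon > 0$, which matches condition 2 of Theorem \ref{globalenergetic} with $\tilde d$ arbitrarily small positive.

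To check Assumptions I$(\upvarphi, 1+\epsilon)$, I compute the matrix element by the substitution $y = x/L_{l(j)}$:
\[
\la \varphi_j, B\varphi_k\ra_{L^2} = 2L_{l(k)} L_{l(j)} \int_0^1 y^2(y-1) \sin(2m(j)\pi y)\sin(2m(k)\pi y)\, dy,
\]
a formula that is the same whether or not $l(j) = l(k)$ thanks to the rescaling structure of $B$. Applying the product-to-sum identity and integrating by parts twice against $p(y) = y^2(y-1)$ (which satisfies $p(0) = p(1) = 0$, $p'(0) = 0$, $p'(1) = 1$), the leading asymptotics for large $k$ with $j = 1$ fixed are $|\la \varphi_1, B\varphi_k\ra_{L^2}| \sim C\, m(1)/m(k)^3$; combined with $m(k) \asymp k$, this gives the lower bound $C k^{-(3+\epsilon)}$ of the first point of Assumptions I with $\eta = 1 + \epsilon$. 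The regularity hypotheses of Assumptions II$(\upvarphi, 1+\epsilon, \epsilon)$ follow because $B$ preserves $\widetilde\Hi$ (it sends each $\varphi_k$ to a function supported on the loops whose restriction to each loop lies in the span of $\upvarphi$), and the polynomial factor $x^2(x/L_l - 1)$ vanishes together with its first derivative at $x = 0$ and vanishes at $x = L_l$, which is enough to propagate the Neumann-Kirchhoff-compatible regularity through the loops up to the required Sobolev order.

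The main obstacle is the non-resonance condition in the second point of Assumptions I. A parallel calculation gives
\[
\la \varphi_k, B\varphi_k\ra_{L^2} = -\frac{L_{l(k)}^2}{12} + \frac{L_{l(k)}^2}{16\pi^2 m(k)^2} + O(m(k)^{-3}),
\]
and for each admissible quadruple $(j,k,l,m)$ with $(j,k) \neq (l,m)$ satisfying $\mu_j - \mu_k - \mu_l + \mu_m = 0$, the combination $\la \varphi_j, B\varphi_j\ra - \la \varphi_k, B\varphi_k\ra - \la \varphi_l, B\varphi_l\ra + \la \varphi_m, B\varphi_m\ra$ is a non-constant real-analytic function of $(L_1, \ldots, L_{\widetilde N}) \in (\R^+)^{\widetilde N}$. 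Mirroring the exclusion argument used in the proof of Theorem \ref{bim.2}, the zero set of each such function is a proper real-analytic subset of $(\R^+)^{\widetilde N}$, and the countable union over the (countably many) admissible quadruples defines the exceptional set $\CC$. For $\{L_j\}_{j\leq \widetilde N} \in \AL\LL(\widetilde N) \setminus \CC$, all hypotheses of Theorem \ref{globalenergetic} are verified, yielding the global exact controllability in $H^{3+\epsilon}_\Gi \cap \widetilde\Hi$ and the energetic controllability in the stated sequence of eigenvalues.
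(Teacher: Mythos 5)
Your proposal is correct and follows essentially the same route as the paper: the paper's own proof simply exhibits the loop-supported eigenfunctions $\ffi_k^{l(k)}(x)=\sqrt{2/L_{l(k)}}\sin(2m(k)\pi x/L_{l(k)})$ with $\mu_k=4m(k)^2\pi^2/L_{l(k)}^2$ and then refers to the argument of Theorem \ref{bim2.2} (Roth-type gap, verification of Assumptions I--II with a countable exceptional set, point \textbf{2.} of Theorem \ref{globalenergetic}), which is exactly what you carry out in more detail. The only blemish is a sign slip in the $m(k)^{-2}$ correction of $\la\varphi_k,B\varphi_k\ra_{L^2}$ (the exact value is $-L_{l(k)}^2\big(\tfrac{1}{12}+\tfrac{1}{16\pi^2 m(k)^2}\big)$), which does not affect the exclusion argument.
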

\begin{proof}
	Let $(\ffi_k)_{k\in\N^*}$ be such that, for each $k\in\N^*$, there exist $m(k)\in\N^*$ and $l(k)\leq \widetilde N$ such that $\mu_k=\frac{4m(k)^2\pi^2}{L^{2}_{l(k)}},$ $\ffi_k^{l(k)}(x)=\sqrt{\frac{2}{L_{l(k)}}}\sin{(\sqrt{\mu_k} x)}$ and $\ffi_k^{n}\equiv 0$ for every $n\neq l(k)$ and $n\leq N$.
	Now, $(\ffi_k)_{k\in\N^*}$ is an orthonormal system made by eigenfunctions of $A$ and the claim yields as Theorem $\ref{bim2.2}$.\qedhere
	\end{proof}

\smallskip
\noindent
{\bf Acknowledgments.} The author would like to thank the
referees for the constructive comments which improved the quality
of redaction. He is also grateful to Olivier Glass and Nabile Boussa\"id 
for having carefully reviewed 
this work. Finally, he thanks Ka\"is Ammari for suggesting him the problem and the colleagues Andrea 
Piras, Riccardo 
Adami, Enrico Serra and Paolo Tilli for the fruitful conversations.

\appendix\section{Appendix: Global approximate controllability}\label{approximate}
In the current appendix, for the sake of completeness, we propose the global approximate controllability result 
provided in \cite[Section\ 5.2]{mio3}. The outcome is adopted in the proof of Theorem \ref{global}.
\begin{defi}
	The \eqref{mainx1} is said to be globally approximately controllable in $H_{\Gi}^{s}$ with $s>0$ when, for every $\psi\in H^{s}_{\Gi}$, $\widehat\G\in U(\Hi)$ such that $\widehat\G\psi\in H^{s}_{\Gi}$ and $\varepsilon>0$, there exist $T>0$ and $u\in L^2((0,T),\R)$ such that $\|\widehat\G\psi-\G^u_T\psi\|_{(s)}<\varepsilon$.
\end{defi}
\begin{prop}\label{approx}
Let $(A,B)$ satisfy Assumptions I$(\eta)$ and Assumptions II$(\eta,\tilde d)$ for $\eta>0$ and $\widetilde d\geq 0$. The (\ref{mainx1}) is globally approximately controllable in $H^{s}_{\Gi}$ for $s=2+d$ with $d$ from Assumptions II$(\eta,\tilde d)$ .
\end{prop}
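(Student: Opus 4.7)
My plan is to obtain Proposition \ref{approx} in two stages: first approximate controllability in the $L^2$ topology, then a bootstrap to the stronger $H^{2+d}_\Gi$ topology. For the $L^2$ stage, I would set up a Galerkin-type scheme, replacing the operators $A$ and $B$ by their compressions to $V_N:=\mathrm{span}\{\phi_1,\ldots,\phi_N\}$ and analyzing the resulting finite-dimensional bilinear control system on the sphere of $V_N$. Assumptions I$(\eta)$ play their two standard roles here: the bound $|\la\phi_k,B\phi_1\ra_{L^2}|\geq C k^{-(2+\eta)}$ guarantees that $B$ connects $\phi_1$ to every other mode (so the finite-dimensional system is controllable by classical Lie-algebra arguments), and the non-resonance condition on $\la\phi_j,B\phi_j\ra_{L^2}-\la\phi_k,B\phi_k\ra_{L^2}-\la\phi_l,B\phi_l\ra_{L^2}+\la\phi_m,B\phi_m\ra_{L^2}$ rules out the accidental degeneracies that would otherwise obstruct the phase-saturation step (essentially the ingredient needed to apply the Chambrion--Mason--Sigalotti--Boussa\"id type approximate controllability results, which only require a spectral gap $\inf|\lambda_{k+1}-\lambda_k|>0$ replaced here by the non-resonance condition plus the weaker gap from \eqref{g13}).

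Next I would concatenate Galerkin-level controls to move the initial datum close, in $L^2$, to any prescribed unitary image $\widehat\G\psi$. The main difficulty is that this gives only an $L^2$-neighborhood approximation, whereas the statement demands $\|\cdot\|_{(2+d)}$ smallness. For the upgrade, I would exploit the well-posedness assertion of Proposition \ref{laura} together with Assumptions II$(\eta,\tilde d)$, which ensure that $\G_t^u$ preserves $H^{2+d}_\Gi$ and provides a uniform bound
\begin{equation*}
\|\G_T^u\psi\|_{(2+d)}\leq K(T,\|u\|_{L^2})\,\|\psi\|_{(2+d)},
\end{equation*}
with $K$ a continuous increasing function. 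Combined with the uniform $L^2$-approximation, an interpolation inequality between $L^2$ and $H^{2+s_0}_\Gi$ for some $s_0>2+d$ then converts the $L^2$-smallness into $H^{2+d}_\Gi$-smallness, at the cost of paying a higher Sobolev norm that stays bounded along the finite-dimensional trajectory.

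The technical heart of the argument will be ensuring that Galerkin-level controls can be chosen so that the associated $\G_T^u\psi$ stays uniformly bounded in some $H^{2+d+\delta}_\Gi$ norm with $\delta>0$; this is where Assumptions II$(\eta,\tilde d)$ are used quantitatively, because the three sub-cases distinguish exactly how much regularity $B$ transfers in the three boundary-conditions regimes $(\Di/\NN)$, $(\NN)$, $(\Di)$. Once this uniform higher-regularity bound is established, the interpolation closes the argument and yields the $(2+d)$-approximate controllability. I expect the main obstacle to be this last uniform regularity estimate along finite-dimensional controls, since the standard arguments for $\Gi=(0,1)$ rely on a uniform spectral gap that is not available on compact graphs; but the hypothesis $B(H^{d_1}_{\NN\KK}\cap\widetilde\Hi)\subseteq H^{d_1}_{\NN\KK}\cap\widetilde\Hi$ in Assumptions II precisely provides the compatibility that allows the propagator to stabilize the higher Sobolev space.
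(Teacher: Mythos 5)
Your proposal follows essentially the same route as the paper: finite-dimensional (Galerkin) approximate controllability in $L^2$ driven by the Lie-algebra rank condition that Assumptions I make available, followed by a uniform higher-order Sobolev bound along the chosen controls and an interpolation between $L^2$ and $H^{s_1+2}_{\Gi}$ (with $s_1\leq 2+d<s_1+2$) to upgrade $L^2$-closeness to $\|\cdot\|_{(2+d)}$-closeness. The one caveat is that the uniform bound you invoke cannot depend merely on $(T,\|u\|_{L^2})$: the paper derives it from Kato's propagation of regularity, which requires controls with uniformly bounded $BV$ and $L^\infty$ norms (and bounded $T\|u\|_{L^\infty}$), and this is exactly why the $L^2$-stage controls are constructed from Chambrion's periodic excitations, which come with those uniform bounds, rather than from arbitrary approximating controls.
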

\begin{proof}
In the point {\bf 1)\,\,}of the proof, we suppose that $(A,B)$ admits a non-degenerate chain of connectedness (see \cite[Definition\ 3]{nabile}). We treat the general case in the point {\bf 2)\,\,}.

\smallskip

\needspace{3\baselineskip}

\noindent
{\color{black}{\bf 1) (a) Preliminaries.} Let $\pi_m$ be the orthogonal projector onto 
$\Hi_m:=\spn{\{\phi_j\ :\ j\leq m\}}$ with $m\in\N^*.$
Up to reordering $(\phi_k)_{k\in\N^*}$, the couples $(\pi_{m}A\pi_{m},\pi_{m}B\pi_{m})$ for $m\in\N^*$ admit 
non-degenerate chains of connectedness in $\Hi_{m}$. Let $\|\cdot\|_{BV(T)}=\|\cdot\|_{BV((0,T),\R)}$ and 
$\iii\cdot\iii_{(s)}:=\iii\cdot\iii_{L(H^s_{\Gi},H^s_{\Gi})}$ for $s>0.$ For $N\in\N^*$, we denote 
$SU(\Hi_{N})=\big\{\Gamma\in U(\Hi_{N})\ :\ (\la\phi_j,\Gamma\phi_k\ra_{L^2})_{j,k\leq N}\in 
SU(N)\big\}.$
\begin{itemize}
	\item[]	 {\bf Claim.} $\forall\ \widehat\G\in U(\Hi),\ \forall \varepsilon>0,\ \exists N_1\in\N^*,\ \widetilde\G_{N_1}\in U(\Hi)\ :\ \pi_{N_1}\widetilde\G_{N_1}\pi_{N_1}\in SU(\Hi_{N_1}),$
	\begin{equation}\label{grammo}\|\widetilde\G_{N_1}\phi_1-\widehat\Gamma\phi_1\|_{L^2}<\varepsilon.\end{equation}
\end{itemize}
Let $N_1\in\N^*$ and $\widetilde\phi_1:=\|\pi_{N_1}\widehat\G\phi_1\|_{L^2}^{-1}\pi_{N_1}\widehat\G\phi_1$. We 
define $(\widetilde\phi_j)_{2\leq j\leq N_1}$ such that $(\widetilde\phi_j)_{j\leq N_1}$ is an orthonormal basis of 
$\Hi_{N_1}$ and we complete it to $(\widetilde\phi_j)_{j\in\N^*}$, a Hilbert basis of $\Hi$. The operator 
$\widetilde\G_{N_1}$ is the unitary map such that $\widetilde\G_{N_1}\phi_j=\widetilde\phi_j$ for every $j\in\N^*$.	
The provided definition implies 
$\lim_{N_1\rightarrow\infty}\|\widetilde\G_{N_1}\phi_1-\widehat\Gamma\phi_1\|_{L^2}=0$. Thus, for every 
$\varepsilon>0$, there exists $N_1\in\N^*$ large enough satisfying the claim.}

\medskip
\needspace{3\baselineskip}

\noindent
{\bf 1) (b)  Finite dimensional controllability.} Let $M_{ad}$ be the set of $(j,k)\in\{1,...,N_1\}^2$ such that $B_{j,k}:=\la\phi_j,B\phi_k\ra_{L^2}\neq 0$ and $|\lambda_j-\lambda_k|=|\lambda_m-\lambda_l|$ with $m,l\in\N^*$ implies $\{j,k\}= \{m,l\}$ for $B_{m,l}=0$.
For every $(j,k)\in\{1,...,N_1\}^2$ and $\theta\in[0,2\pi)$, we define $E_{j,k}^{\theta}$ the $N_1\times N_1$ matrix with elements $(E_{j,k}^\theta)_{l,m}=0$, $(E_{j,k}^\theta)_{j,k}=e^{i\theta}$ and $(E_{j,k}^\theta)_{k,j}=-e^{-i\theta}$ for $(l,m)\in\{1,...,N_1\}^2\setminus\{(j,k),(k,j)\}.$ Let $E_{ad}=\big\{E_{j,k}^{\theta}\ :\ (j,k)\in M_{ad},\ \theta\in[0,2\pi)\big\}.$ 
Let $L_1:=E_{ad}$. We define by iteration $L_m:=[E_{ad},L_{m-1}]+L_{m-1}$ with $m\in\N^*\setminus\{1\}$ and there 
exists $\widetilde m\in\N^*$ such that $L_{\widetilde m}=L_{\widetilde m+1}$ as $E_{ad}$ is composed by $N_1\times 
N_1$ matrices. We denote $Lie(E_{ad})=L_{\widetilde m}$. Fixed $v$ a piecewise constant control taking value in 
$E_{ad}$ and $\tau>0$, we introduce the control system on $SU({N_1})$
\begin{equation}\label{formulapprox3}\begin{split}\begin{cases}
\dot{x}(t)=x(t)v(t),\ \ \ \ \ \ t\in(0,\tau),\\
x(0)=Id_{SU({N_1})}.\\
\end{cases}\end{split}\end{equation}

\begin{itemize}
	\item[] {\bf Claim.} $(\ref{formulapprox3})$ is controllable, {\it i.e.} for $R\in SU({N_1})$, there exist 
$p\in\N^*$, $M_1,...,M_p\in E_{ad}$, $\alpha_1,...,\alpha_p\in\R^+$ such that $R=e^{\alpha_1 M_1}\circ...\circ 
e^{\alpha_p M_p}.$
\end{itemize}
For every $(j,k)\in\{1,...,N_1\}^2$, we define the $N_1\times N_1$ matrices $R_{j,k}$, $C_{j,k}$ and $D_{j}$ as follow. For $(l,m)\in\{1,...,N_1\}^2\setminus\{(j,k),(k,j)\},$ we have $(R_{j,k})_{l,m}=0$ and $(R_{j,k})_{j,k}=-(R_{j,k})_{k,j}=1,$ while $(C_{j,k})_{l,m}=0$ and $(C_{j,k})_{j,k}=(C_{j,k})_{k,j}=i.$ Moreover, for $(l,m)\in\{1,...,N_1\}^2\setminus\{(1,1),(j,j)\},$ $(D_{j})_{l,m}=0$ and $(D_{j})_{1,1}=-(D_{j})_{j,j}=i.$
We consider the basis of $su({N_1})$, the Lie algebra of 
$SU({N_1})$, $$\{R_{j,k}\}_{j,k\leq N_1}\cup\{C_{j,k}\}_{j,k\leq 
N_1}\cup\{D_{j}\}_{j\leq N_1}.$$ Thanks to \cite[Theorem\ 6.1]{sac}, the controllability of (\ref{formulapprox3}) 
is equivalent to prove that $Lie(E_{ad})\supseteq su({N_1})$. 
The claim is valid since it is possible to obtain the matrices $R_{j,k}$, $C_{j,k}$ and $D_j$ for every $j,k\leq N_1$ by iterated Lie brackets of elements in $E_{ad}$ as follows.

\begin{itemize}
	\item For every $(j,k)\in M_{ad}$, we have $R_{j,k}=E_{j,k}^0$ and $C_{j,k}=E_{j,k}^{\frac{\pi}{2}}$. For 
every 
$(j,k)\not\in M_{ad}$ such that there exists $j_1\leq N_1$ so that $(j,j_1),(j_1,k)\in M_{ad}$, we have 
$R_{j,k}=[E_{j,j_1}^0,E_{j_1,k}^0]$ and $C_{j,k}=[E_{j,j_1}^0,E_{j_1,k}^{\frac{\pi}{2}}]$.
	\item If $(1,j)\in M_{ad}$, then $2D_j=[E_{1,j}^0,E_{1,j}^{\frac{\pi}{2}}]$, while if $(1,j)\not\in M_{ad}$ 
and there exists $j_1\leq N_1$ such that $(1,j_1),(j_1,j)\in M_{ad}$, then
	$-2D_j=\Big[[E_{1,j_1}^{\frac{\pi}{2}},E_{j_1,j}^{\frac{\pi}{2}}],[E_{1,j_1}^{0},E_{j_1,j}^{\frac{\pi}{2}}]\Big].$
	\item For every $(j,k)\not\in M_{ad}$, there exist $m\leq N_1$ and $\{j_l\}_{l\leq m}$ such that 
$(j,j_1),...,(j_m,k)\in M_{ad}.$ We call $S=\{(j,j_1),...,(j_m,k)\}.$ By repeating the previous point, we can 
generate each $R_{j,k}$, $C_{j,k}$ and $D_j$ with $j,k\in\{1,...,N_1\}$ by iterating Lie brackets of 
$E_{l,m}^{\theta}$ for $(l,m)\in S$ and $\theta\in[0,2\pi)$.	
	
	\end{itemize}

	\smallskip

	\needspace{3\baselineskip}
	
	\noindent
	{\bf 1) (c) Finite dimensional estimates.} The previous claim and the fact that
$(\la\phi_j,\widetilde\G_{N_1}\phi_k\ra_{L^2})_{j,k\leq N_1}\in SU({N_1})$ ensure the existence of $p\in\N^*$, 
$M_1,...,M_p\in E_{ad}$ and $\alpha_1,...,\alpha_p\in\R^+$ such that
	\begin{equation}\label{decomposition_rotations}(\la\phi_j,\widetilde\G_{N_1}\phi_k\ra_{L^2})_{j,k\leq 
N_1}=e^{\alpha_1 
M_1}\circ...\circ e^{\alpha_p M_p}.\end{equation}
For every $l\leq p$, we call
$\widehat\Gamma_l$ the operator in $SU(\Hi_{N_1})$ such that $(\la\phi_j,\widehat\Gamma_l\phi_k\ra)_{j,k\leq 
N_1}=e^{\alpha_l M_l}$. The identity \eqref{decomposition_rotations} yields
\begin{equation}\label{dexk}\pi_{N_1}\widetilde\G_{N_1}\pi_{N_1}
=\widehat\Gamma_1\circ...\circ\widehat\Gamma_p.\end{equation}
	
	\begin{itemize}\item[] {\bf Claim.} For every $l\leq p$, there exist $\{T_n^l\}_{l\in\N^*}\subset\R^+$ and 
$\{u_n^l\}_{n\in\N^*}$ such that $u_n^l\in L^2((0,T_n^l),\R)$ for every $n\in\N^*$ and
		
\begin{equation}\label{sorde}\lim_{n\rightarrow\infty}\|\G_{T_n^l}^{u_n^l}\phi_k-\widehat\Gamma_l\phi_k\|_{L^2}=0,\ 
\ \ \  \ \ \forall k\leq N_1, \end{equation}
		\begin{equation}\label{sorde1}\begin{split}\sup_{n\in\N^*}\|u_n^l&
		\|_{BV(T_n^l)}<\infty,  \ \ \  \  \ \ 
\sup_{n\in\N^*}\|u_n^l\|_{L^\infty((0,T_n^l),\R)}<\infty,  \ \ \  \  \ \ \sup_{n\in\N^*} 
T_n^l\|u_n^l\|_{L^\infty((0,T_n^l),\R)}<\infty.\end{split}\end{equation}\end{itemize}
	We consider the results developed in $\cite[Section\ 3.1\ \&\ Section\ 3.2]{chambrion2}$ by Chambrion and 
leading to $\cite[Proposition\ 6]{chambrion2}$ since $(A,B)$ admits a non-degenerate chain of connectedness 
(\cite[Definition\ 3]{nabile}). Each $\widehat\Gamma_l$ is a rotation in a two dimensional space for every 
$l\in\{1,...,p\}$ and this work explicits $\{T_n^l\}_{n\in\N^*}\subset\R^+$ and $\{u_n^l\}_{n\in\N^*}$ satisfying 
$(\ref{sorde1})$ such that $u_n^l\in L^2((0,T_n^l),\R)$ for every $n\in\N^*$ and 
\begin{equation*}\lim_{n\rightarrow\infty}\|\pi_{N_1}\G_{T_n^l}^{u_n^l}\phi_k-\widehat\Gamma_l\phi_k\|_{L^2}=0,\ \ 
\ \ \ \ \forall k\leq N_1.\end{equation*}
Now, $\lim_{n\rightarrow\infty}\|\pi_{N_1}\G_{T_n^l}^{u_n^l}\phi_k\|_{L^2}=1$ since $\widehat\Gamma_l\in 
SU(\Hi_{N_1})$. Thus, $\lim_{n\rightarrow\infty}\|(1-\pi_{N_1})\G_{T_n^l}^{u_n^l}\phi_k\|_{L^2}=0$ and 
$$\lim_{n\rightarrow\infty}\|\G_{T_n^l}^{u_n^l}\phi_k-\widehat\Gamma_l\phi_k\|_{L^2}\leq\lim_{n\rightarrow\infty}
\Big(\|\pi_{N_1}\G_{T_n^l}^{u_n^l}\phi_k-\widehat\Gamma_l\phi_k\|_{L^2}+\|(1-\pi_{N_1})\G_{T_n^l}^{u_n^l}\phi_k\|_{
L^2}\Big) =0.$$

	\needspace{3\baselineskip}
	{\color{black}
	\noindent
	{\bf 1) (d) Infinite dimensional estimates.} 
	
	\begin{itemize}
		\item[] {\bf Claim.} Let $\widehat\G\in U(\Hi)$. There exist $K_1,K_2,K_3>0$ such that for every 
$\varepsilon>0$, there exist $T>0$ and $u\in L^2((0,T),\R)$ such that 
$\|\G_{T}^{u}\phi_1-\widehat\G\phi_1\|_{L^2}<\varepsilon$ and
		\begin{align}\label{casinooo}\|u\|_{BV(T)}\leq K_1,  \ \  \ \ \ \ \ \|u\|_{L^\infty((0,T),\R)}\leq K_2,\ \ 
 \ \ \ \ \ T\|u\|_{L^\infty((0,T),\R)}\leq K_3.\end{align}
	\end{itemize}
	
	Let us assume that {\bf 1) (c)} be valid with $p=2$. Although, the following result is valid for any 
$p\in\N^*$. As $\widehat\Gamma_2\in SU(\Hi_{N_1})$, there exist $ l\leq N_1$ and $\alpha\in\C$ such 
that $|\alpha|=1$ and $\widehat\Gamma_2\phi_1=\alpha\phi_l$. Thanks to $(\ref{sorde})$, there exists $n\in\N^*$ 
large enough such that,
	\begin{equation*}\begin{split}&\| \G_{T_n^1}^{u_n^1}\G_{T_n^2}^{u_n^2}\phi_1-\widehat\Gamma_1 
\widehat\Gamma_2\phi_1\|_{L^2}
\leq \iii 
\G_{T_n^1}^{u_n^1}\iii\|\G_{T_n^2}^{u_n^2}\phi_1-\widehat\Gamma_2\phi_1\|_{L^2}+\|\G_{T_n^1}^{u_n^1}\alpha
\phi_l-\widehat\Gamma_1\alpha\phi_l\|_{L^2}< \varepsilon.\end{split}\end{equation*}
Thanks to $(\ref{dexk})$, there exist $K_1,K_2,K_3>0$ such that for every 
$\varepsilon>0$, there 
exist $T>0$ and $u\in L^2((0,T),\R)$ such that $\|\G_{T}^{u}\phi_1-\widetilde\G_{N_1}\phi_1\|_{L^2}<\varepsilon$ 
and satisfying \eqref{casinooo}. The 
relation $(\ref{grammo})$ and the triangular inequality achieve the claim.

\medskip
	\noindent
	{\bf 1) (e) Approximate controllability with respect to the $L^2$-norm.} Let $\psi\in \Hi$ and $\widehat \G\in U(\Hi)$.

	\begin{itemize}
		\item[] {\bf Claim.} There exist $K_1,K_2,K_3>0$ such that for every $\varepsilon>0$, there exist $T>0$ and 
$u\in L^2((0,T),\R)$ such that $\|\G_{T}^{u}\psi-\widehat\G\psi\|_{L^2}<\varepsilon$ and satisfying 
\eqref{casinooo}.
	\end{itemize}

We assume that $\|\psi\|_{L^2}=1$ but the same proof is also valid in the general case. We consider the 
unitary propagator $\widetilde \G_t^u$ describing the reversed dynamics of the \eqref{mainx1} introduced in 
Remark 
\ref{timereversibility}. We also recall the validity of the relation \eqref{inversepropagator}. Now, the 
results from \cite{chambrion2}, which are adopted in the point {\bf 1) (c)}, are also valid 
for the reversed dynamics. Thus, such as in {\bf 1) (d)}, it is also true that, for every $\widehat\G\in U(\Hi)$, 
there exist $K_1,K_2,K_3>0$ such that for every $\varepsilon>0$, there exist $T>0$ and $u\in L^2((0,T),\R)$ such 
that the relations \eqref{casinooo} are satisfied and $ 
\|\widetilde\G_{T}^{u}\phi_1-\widehat\G\phi_1\|_{L^2}<\varepsilon.$ By keeping in mind that $\widetilde 
\G_{T}^{u}=(\G_{T}^{u})^{-1}$ and $\iii\G_{T}^{u}\iii=1$, we have
$$\|\phi_1-\G_{T}^{u}\widehat\G\phi_1\|_{L^2}=\|\G_{T}^{u}\widetilde\G_{T}^{u}\phi_1-\G_{T}^{u}
\widehat\G\phi_1\|_{L^2} \leq 
\iii\G_{T}^{u}\iii\|\widetilde\G_{T}^{u}\phi_1-\widehat\G\phi_1\|_{L^2}<\varepsilon.$$
The last relation guarantees that, for 
every $\psi\in \Hi$ such that $\|\psi\|_{L^2}=1$, there exist $T_1>0$ and $u_1\in 
L^2((0,T_1),\R)$ such that
	$$\|\G_{T_1}^{u_1}\psi-\phi_1\|_{L^2}<{\varepsilon}.$$ Now, {\bf 1) (d)} ensures that, for 
every $\widehat \G\in U(\Hi)$, there exist $ T_2>0$ and 
$u_2\in L^2((0,T_2),\R)$ such that
	$$\|\G_{T_2}^{u_2}\phi_1-\widehat\G\psi\|_{L^2}<{\varepsilon}.$$
	The chosen controls $u_1$ and $u_2$ satisfy \eqref{casinooo}. In conclusion, the claim is proved as
	\begin{align*}\|\G_{T_2}^{u_2}\G_{T_1}^{u_1}\psi-\widehat\G\psi\|_{L^2}&\leq \|\G_{T_2}^{u_2}\G_{T_1}^{u_1}\psi-\G_{T_2}^{u_2}\phi_1\|_{L^2}+\|\G_{T_2}^{u_2}\phi_1-\widehat\G\psi\|_{L^2}
	< 2{\varepsilon}.\end{align*}

	\noindent
	{\bf 1) (f) Global approximate controllability in higher regularity norms.} Let $\psi\in H^{s}_{\Gi}$ with 
$s\in[s_1,s_1+2)$ and $s_1\in\N^*$. Let $\widehat \G\in U(\Hi)$ be such that $\widehat \G\psi\in H^{s}_{\Gi}$ and 
$B:H^{s_1}_{\Gi}\longrightarrow H^{s_1}_{\Gi}$.
	\begin{itemize}
		\item[] {\bf Claim.} There exist $T>0$ and $u\in L^2((0,T),\R)$ such that 
$\|\G_{T}^{u}\psi-\widehat\G\psi\|_{(s)}<\varepsilon$.
	\end{itemize}
	
We notice that the 
operator $-i(A+u(t)B-ic)$ is dissipative in $H^{s_1}_{\Gi}$ for
$c:=\|u\|_{L^\infty((0,T),\R)}\iii B\iii_{(s_1)}$. Indeed, for every 
$\lambda>0$ and $\psi\in H^{s_1+2}_{\Gi}$, we have
$$\|(\lambda+i(A+u(t)B-ic))\psi\|_{(s_1)}\geq 
\|(\lambda+c+iA)\psi\|_{(s_1)}-\|u\|_{L^\infty((0,T),\R)}\iii B\iii_{(s_1)}\|\psi\|_{(s_1)}.$$
Now, the operator $A$ with domain $H^{s_1+2}_\Gi$ is self-adjoint in the Hilbert space $H^{s_1}_{\Gi}$ and we have 
the inequality $\|(\lambda+c+iA)\psi\|_{(s_1)}\geq (\lambda+c)\|\psi\|_{(s_1)}$. By recalling that
$c=\|u\|_{L^\infty((0,T),\R)}\iii B\iii_{(s_1)}$, we obtain
\begin{align*}\|(\lambda-i(-A+u(T-t)B-ic))\psi\|_{(s_1)}\geq 
 (\lambda+c)\|\psi\|_{(s_1)}-c\|\psi\|_{(s_1)}= 
\lambda\|\psi\|_{(s_1)}.\end{align*}
Thus, $-i(A+u(t)B-ic)$ is dissipative and the Kato-Rellich's Theorem 
yields that it is also maximal dissipative. We consider the propagation of regularity developed by Kato in the 
work $\cite{kato1}$.
Let $\lambda>c$ and $\widehat H^{s_1+2}_{\Gi}:=D(A^\frac{s_1}{2}(i\lambda-A))\equiv  H^{s_1+2}_{\Gi}$. We know 
that $B:\widehat H^{s_1+2}_{\Gi}\subset  H^{s_1}_{\Gi}\rightarrow H^{s_1}_{\Gi}$ and the arguments of 
\cite[Remark\ 2.1]{mio1} imply that $B\in L(\widehat H^{s_1+2}_{\Gi},H^{s_1}_{\Gi})$. For $T>0$ and $u\in 
BV((0,T),\R)$, we have $$
	M:=\sup_{t\in [0,T]}\iii(i\lambda-A-u(t)B)^{-1}\iii_{L(H^{s_1}_{\Gi},\widehat H^{s_1+2}_{\Gi})}<+\infty.$$
We know $\|k+f(\cdot)\|_{BV((0,T),\R)}=\|f\|_{BV((0,T),\R)}$ for $f\in BV((0,T),\R)$ and $k\in\R$. Equivalently, $$N:=\iii i\lambda-A-u(\cdot)B\iii_{BV\big([0,T],L(\widehat H^{s_1+2}_{\Gi},H^{s_1}_{\Gi})\big)}=\|u\|_{BV(T)} \iii B\iii_{L(\widehat H^{s_1+2}_{\Gi},H^{s_1}_{\Gi})}<+\infty.$$ We call $C_1:=\iii A(A+u(T)B-i\lambda)^{-1}\iii_{(s_1)}<\infty$ and $U_t^{u}$ the propagator generated by $A+uB-ic$ such that $U_t^u\psi=e^{-ct}\G_t^u\psi$. Thanks to $\cite[Section\ 3.10]{kato1}$, for every $\psi\in H^{s_1+2}_{\Gi}$, it follows 
\begin{equation*}\begin{split}\|(A+u(T)B-i\lambda) U_t^u \psi\|_{(s_1)}\leq Me^{MN}\|(A-i\lambda) \psi\|_{(s_1)}\ \ \ \Longrightarrow \ \ \ \|\G_{T}^{u} \psi\|_{(s_1+2)}&\leq C_1 Me^{MN+cT}\|\psi\|_{(s_1+2)}.\end{split}\end{equation*}
For every $T>0$, $u\in BV((0,T),\R)$ and $\psi\in H^{s_1+2}_{\Gi}$, there exists $C=C(K)>0$ depending on $K=\big(\|u\|_{BV(T)},\|u\|_{L^\infty((0,T),\R)},T\|u\|_{L^\infty((0,T),\R)}\big)$ such that \begin{equation}\label{diid1}\|\G_{T}^{u} \psi\|_{(s_1+2)}\leq  C\|\psi\|_{(s_1+2)}.\end{equation}
Now, we notice that, for every $\psi\in H^{6}_{\Gi}$, we have 
$\|A\psi\|_{L^2}^2\leq\|\psi\|_{L^2}\|A^2\psi\|_{L^2}$ from the Cauchy-Schwarz inequality and there exists $C_2>0$ 
such that $\|A^2\psi\|_{L^2}^4\leq\|A\psi\|_{L^2}^2\|A^3\psi\|_{L^2}^2\leq C_2\|\psi\|_{L^2}\|A^3\psi\|_{L^2}^3$. 
By following the same idea, for every $\psi\in H^{s_1+2}_{\Gi}$, there exist $m_1,m_2\in\N^*$ and $C_3,C_4>0$ such 
that
\begin{equation}\label{diid2}\|A^\frac{s}{2}\psi\|_{L^2}^{m_1+m_2}\leq C_3\|\psi\|_{L^2}^{m_1}\|A^\frac{s_1+2}{2}\psi\|_{L^2}^{m_2} \ \ \ \ \ \Longrightarrow \ \ \ \ \|\psi\|_{(s)}^{m_1+m_2}\leq C_4\|\psi\|_{L^2}^{m_1}\|\psi\|_{(s_1+2)}^{m_2}.\end{equation}
	In conclusion, the point {\bf 1) (e)}, the relation $(\ref{diid1})$ and the relation $(\ref{diid2})$ ensure the claim.
	}

\smallskip

\noindent
{\bf 1) (g)  Conclusion.} Let $d$ be defined in Assumptions II$(\eta,\tilde d)$. 
If $d<2$, then $B:H^{2}_{\Gi}\rightarrow H^{2}_{\Gi}$ and the global approximate controllability is verified in $H^{d+2}_{\Gi}$ since $d+2<4.$ 
If $d\in [2,5/2)$, then $B:H^{d_1}\rightarrow H^{d_1}$ with $d_1\in(d,5/2)$ from Assumptions II$(\eta,\tilde d)$. 
Now, $H^{d_1}_{\Gi}=H^{d_1}\cap H^2_{\Gi}$, thanks to $\cite[Proposition\ 4.2]{mio3}$, and 
$B:H^{2}_{\Gi}\rightarrow H^{2}_{\Gi}$ implies $B:H^{d_1}_{\Gi}\rightarrow H^{d_1}_{\Gi}$. 
The global approximate controllability is verified in $H^{d+2}_{\Gi}$ since $d+2<d_1+2.$
If $d\in [5/2,7/2)$, then $B:H_{\NN\KK}^{d_1}\rightarrow H_{\NN\KK}^{d_1}$ for $d_1\in(d,7/2)$ and 
$H^{d_1}_{\Gi}=H^{d_1}_{\NN\KK}\cap H^2_{\Gi}$ from $\cite[Proposition\ 4.2]{mio3}$. Now, $B:H^{2}_{\Gi}\rightarrow 
H^{2}_{\Gi}$ that implies $B:H^{d_1}_{\Gi}\rightarrow H^{d_1}_{\Gi}$. 
The global approximate controllability is verified in $H^{d+2}_{\Gi}$ since $d+2<d_1+2.$

\smallskip
\noindent
{\bf 2) Generalization.} Let $(A,B)$ do not admit a non-degenerate chain of connectedness. We decompose 
$$A+u(\cdot)B=(A+u_0B)+u_1(\cdot)B,\ \ \ \ \ \ \  \ \ \ \ \ \  \ u_0\in \R,\ \ \ \ u_1\in L^2((0,T),\R).$$ We 
notice that, if $(A,B)$ satisfies Assumptions I$(\eta)$ and Assumptions II$(\eta,\tilde d)$ for $\eta>0$ and 
$\tilde d\geq 0$, then $\cite[Lemma\ C.2\ \&\ Lemma\ C.3]{mio3}$ are valid. We consider $u_0$ belonging to the 
neighborhoods provided by $\cite[Lemma\ C.2\ \&\ Lemma\ C.3]{mio3}$ and we denote $(\phi_k^{u_0})_{k\in\N^*}$ a 
Hilbert basis of $\Hi$ made by eigenfunctions of $A+u_0B$. The steps of the point {\bf 1)\,\,}can be repeated by 
considering the sequence $(\phi_k^{u_0})_{k\in\N^*}$ instead of $(\phi_k)_{k\in\N^*}$ and the spaces 
$D(|A+u_0B|^\frac{s_1}{2})$ in substitution of $H^{s_1}_\Gi$ with $s_1>0$. Thanks to the mentioned results, the 
claim is proved since $(A+u_0B,B)$ admits a non-degenerate chain of connectedness and 
$\big\||A+u_0B|^\frac{s_1}{2}\cdot\big\|_{L^2}\asymp\|\cdot\|_{(s_1)}$ with $s_1\in [s,s+2)$, $s=2+d$ and $d$ from 
Assumptions II$(\eta,\tilde d)$. \qedhere

\end{proof}

By referring to Remark 
\ref{timereversibility}, we consider the reversed dynamics of the \eqref{mainx1} and the unitary 
propagator 
$\widetilde \G_t^u$ generated by the time-dependent 
Hamiltonian $-A-u(T-t)B$ 
with $u\in L^2((0,T),\R)$. We also recall the validity of the relation \eqref{inversepropagator}. In the following 
corollary, we present the global approximate 
controllability for the reversed dynamics.

\begin{coro}\label{approxreversed}
Let $(A,B)$ satisfy Assumptions I$(\eta)$ and Assumptions II$(\eta,\tilde d)$ for $\eta>0$ and $\widetilde d\geq 
0$. Let $s=2+d$ be defined by $d$ from 
Assumptions II$(\eta,\tilde d)$. For 
every $\psi\in H^{s}_{\Gi}$, $\widehat\G\in U(\Hi)$ such that $\widehat\G\psi\in H^{s}_{\Gi}$ and $\varepsilon>0$, 
there exist $T>0$ and $u\in L^2((0,T),\R)$ such that $\|\widehat\G\psi-\widetilde \G^u_T\psi\|_{(s)}<\varepsilon$.
\end{coro}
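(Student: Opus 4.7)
The plan is to mirror the proof of Proposition \ref{approx} line by line, replacing the forward propagator $\G^u_T$ by the reversed propagator $\widetilde\G^u_T$ throughout. The entire argument of Proposition \ref{approx} is symmetric under the time-reversal $t\mapsto T-t$ for two reasons: (i) the Lie algebra $su(N_1)$ used in the finite-dimensional Lie-algebraic step {\bf 1)~(b)} is closed under the sign change $M\mapsto -M$, so the matrix decomposition \eqref{decomposition_rotations} is available for both $A+u(t)B$ and $-A-u(T-t)B$; and (ii) the BV and $L^\infty$ norms of a control are invariant under the reparametrization $u\mapsto u(T-\cdot)$, so the uniform bounds \eqref{casinooo} on the approximating controls transfer directly. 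Alternatively, one can reduce directly to Proposition \ref{approx} via the identity $\widetilde\G^u_T=(\G^u_T)^{-1}$ from \eqref{inversepropagator}: given $\psi\in H^s_\Gi$ and $\widehat\G\in U(\Hi)$ with $\widehat\G\psi\in H^s_\Gi$, apply Proposition \ref{approx} with initial datum $\widehat\G\psi$ and target $\widehat\G^{-1}$ to obtain, for any $\varepsilon'>0$, times $T>0$ and controls $u$ such that $\|\G^u_T\widehat\G\psi-\psi\|_{(s)}<\varepsilon'$, and then use the algebraic identity
$$\widehat\G\psi-\widetilde\G^u_T\psi=\widetilde\G^u_T\bigl(\G^u_T\widehat\G\psi-\psi\bigr).$$

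The $L^2$-level ingredient for the reversed dynamics is in fact already recorded inside the proof of Proposition \ref{approx}, step {\bf 1)~(e)}: for every $\widehat\G\in U(\Hi)$, every unit-norm $\psi\in\Hi$, and every $\varepsilon>0$ there exist $T>0$ and $u\in L^2((0,T),\R)$ satisfying \eqref{casinooo} such that $\|\widetilde\G^u_T\psi-\widehat\G\psi\|_{L^2}<\varepsilon$. The upgrade from $L^2$ to $H^s_\Gi$ proceeds as in step {\bf 1)~(f)}: Kato's propagation of regularity \cite{kato1} applied to the maximal dissipative operator $i(-A-u(T-t)B-ic)$ on $H^{s_1}_\Gi$, with the same dissipativity constant $c=\|u\|_{L^\infty}\iii B\iii_{(s_1)}$ as in the forward case, furnishes the analog of \eqref{diid1},
$$\|\widetilde\G^u_T\psi\|_{(s_1+2)}\leq C(K_1,K_2,K_3)\,\|\psi\|_{(s_1+2)},\qquad \forall\,\psi\in H^{s_1+2}_\Gi,$$
with $K_1,K_2,K_3$ uniform in $\varepsilon$. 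Combining this bound with the interpolation inequality \eqref{diid2} and the $L^2$ approximation above produces the desired $H^s_\Gi$ approximation, and the conclusion for every $d$ from Assumptions II$(\eta,\tilde d)$ is obtained by the same case distinction as in step {\bf 1)~(g)}.

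When $(A,B)$ does not admit a non-degenerate chain of connectedness, one splits $A+u(t)B=(A+u_0B)+u_1(t)B$ and repeats the above argument with the Hilbert basis of eigenfunctions of $A+u_0B$, choosing $u_0\in\R$ in the neighborhoods provided by \cite[Lemma\ C.2\ \&\ Lemma\ C.3]{mio3} exactly as in step {\bf 2)} of the proof of Proposition \ref{approx}. The main obstacle is not conceptual but notational: one must check step by step that each ingredient — the $SU(N_1)$ controllability, the Chambrion-type bounds \eqref{casinooo}, and the Kato regularity estimate — is insensitive to the sign change $A\mapsto -A$ and the reparametrization $u\mapsto u(T-\cdot)$. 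This reduces to the two symmetries (i)--(ii) listed above, together with the observation that self-adjointness of $A$ and boundedness of $B$ are preserved under these transformations.
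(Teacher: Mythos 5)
Your proposal is correct, and your ``alternative'' route is exactly the paper's proof: apply Proposition \ref{approx} to the datum $\widehat\G\psi$ with target $\widehat\G^{-1}$, use the identity $\widehat\G\psi-\widetilde\G^u_T\psi=\widetilde\G^u_T\bigl(\G^u_T\widehat\G\psi-\psi\bigr)$, and control $\iii\widetilde\G^u_T\iii_{(s)}$ uniformly via the maximal dissipativity of the reversed generator together with the bounds \eqref{casinooo}. You correctly identify the one non-trivial point, namely that this operator bound is uniform in $\varepsilon$ because the constants $K_1,K_2,K_3$ are.
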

\begin{proof}
First, we consider $K_1,K_2,K_3>0$ such that Proposition 
\ref{approx} holds with times and controls satisfying the 
identities \eqref{casinooo}. Second, the point {\bf 1) (f)} of the proof of 
Proposition \ref{approx} is also valid for the propagator $\widetilde\G_{T}^{u}$ as 
the operator $i(A+u(T-t)B+ic)$ with $c:=\|u\|_{L^\infty((0,T),\R)}\iii B\iii_{(s_1)}$ is maximal dissipative. By 
interpolation, there 
exists a constant $C>0$ depending on $K_1$, $K_2$ and $K_3$ such that $\iii\widetilde\G_{T}^{u} \iii_{(s)}\leq 
C$. Third, for every $\psi\in \Hi$ and $\widehat\G\in 
U(\Hi)$, we call $\widetilde\psi=\widehat\G\psi$ and 
$\widetilde\G=\big(\widehat\G\big)^{-1}\in U(\Hi)$. For every $\varepsilon>0$, we set 
$\widetilde\varepsilon=\varepsilon \,{C}^{-1}$. Proposition \ref{approx} ensures the existence of
$T>0$ and $u\in L^2((0,T),\R)$ 
such 
that $\|\G_{T}^{u}\widetilde\psi-\widetilde\G\widetilde\psi\|_{(s)}
<\widetilde\varepsilon$
and the 
identities \eqref{casinooo} are verified. In conclusion, as $\widetilde 
\G_{T}^{u}=(\G_{T}^{u})^{-1}$ and $\widetilde\G\widetilde\psi=\psi$,
$$\|\widetilde \G_{T}^{u}\psi-\widehat\G\psi\|_{(s)}=\|\widetilde \G_{T}^{u}\psi-\widetilde 
\G_{T}^{u}\G_{T}^{u}\widehat\G\psi\|_{(s)}\leq\iii \widetilde\G_{T}^{u}\iii_{(s)}
\|\psi-\G_{T}^{u}\widehat\G\psi\|_{(s)}\leq C\|\widetilde\G\widetilde\psi-\G_{T}^{u}\widetilde\psi\|_{(s)}
<C\widetilde\varepsilon=\varepsilon.\qedhere$$
\end{proof}

\section{Appendix: Some diophantine
approximation results}\label{numeri}
For $x\in\R$, we denote $E(x)$ the closest integer number to $x$, $\iii x\iii=\min_{z\in\Z}|x-z|$ and 
$F(x)=x-E(x).$ We notice $|F(x)|=\iii x\iii$ and $-\frac{1}{2}\leq F(z)\leq \frac{1}{2}.$ Let $\{L_j\}_{j\leq 
N}\in(\R^+)^N$ and $i\leq N$. We also define
$$n(x):=E\Big(x-\frac{1}{2}\Big),\ \ \ \  r(x):=F\Big(x-\frac{1}{2}\Big),\ \ \ \  d(x):=\iii x-\frac{1}{2}\iii ,\ \ \ \ \widetilde m^i(x) :=n\Big(\frac{L_i}{\pi}x\Big).$$In this appendix, we pursue $\cite[Appendix\ A]{mio3}$, which is based on the techniques from $\cite[Appendix\ A]{wave}$.

\begin{lemma}\label{zuazua1}
	Let $\{L_k\}_{k\leq N}\subset \R^+$, $I_1\subseteq\{1,...,N\}$, $I_2:=\{1,...,N\}\setminus I_1$ and
$$a(\cdot):=\prod_{{i\in I_2}}|\sin((\cdot) L_i)|\sum_{i\in I_1}\prod_{\underset{j\neq i}{j\in 
I_1}}|\cos((\cdot)L_j)|+\prod_{{i\in I_1}}|\cos((\cdot) L_i)|\sum_{i\in I_2}\prod_{\underset{j\neq i}{j\in 
I_2}}|\sin((\cdot) L_j)|.$$ Let $\{\widetilde L_j\}_{j\leq N}\subset\R^+$ be such that $\widetilde L_j=2 L_j$ when 
$j\in I_1$, while $\widetilde L_j=L_j$ when $j\in I_2$. There exists $C>0$ such that, for every $x\in\R$, there 
holds
	\begin{equation*}\begin{split}a(x)\geq C \min\Big( \min_{i\leq N}  \prod_{{j\neq i}} \iii \Big(\widetilde m^i(x)+\frac{1}{2}\Big)\frac{\widetilde L_j}{L_i}\iii ,\
	\min_{i\leq N}  \prod_{{j\neq i}} \iii  m^i(x)\frac{\widetilde L_j}{L_i}\iii\Big).\end{split}\end{equation*}\end{lemma}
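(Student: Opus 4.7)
The plan is to decompose
$$a(x)=\sum_{i\leq N}P_i(x),\qquad P_i(x):=\prod_{\substack{j\in I_1\\ j\neq i}}|\cos(xL_j)|\prod_{\substack{j\in I_2\\ j\neq i}}|\sin(xL_j)|,$$
so that $a(x)\geq P_i(x)$ for every $i\leq N$. Invoking the elementary bounds $|\sin(\pi y)|\geq 2\iii y\iii$ and $|\cos(\pi y)|\geq \iii 2y\iii$, combined with $\widetilde L_j=L_j$ on $I_2$ and $\widetilde L_j=2L_j$ on $I_1$, one obtains $|\sin(xL_j)|,|\cos(xL_j)|\geq \iii x\widetilde L_j/\pi\iii$ uniformly in $j$, and therefore $P_i(x)\geq\prod_{j\neq i}\iii x\widetilde L_j/\pi\iii$.

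The second step converts this into the claimed bound. For any fixed $i$, the definitions of $\widetilde m^i(x)$ and $m^i(x)$ give the two decompositions $xL_i/\pi=(\widetilde m^i+1/2)+r_i=m^i+F_i$, with $|r_i|=d(xL_i/\pi)$, $|F_i|=\iii xL_i/\pi\iii$ and $|r_i|+|F_i|=1/2$. Multiplying by $\widetilde L_j/L_i$,
$$\frac{x\widetilde L_j}{\pi}=\Big(\widetilde m^i+\tfrac12\Big)\frac{\widetilde L_j}{L_i}+r_i\frac{\widetilde L_j}{L_i}=m^i\frac{\widetilde L_j}{L_i}+F_i\frac{\widetilde L_j}{L_i}.$$
The key step is then to exhibit an index $i^{*}=i^{*}(x)$ and one of the two representations for which, for every $j\neq i^{*}$, the perturbation $r_{i^{*}}\widetilde L_j/L_{i^{*}}$ (resp.\ $F_{i^{*}}\widetilde L_j/L_{i^{*}}$) is dominated by the ``static'' distance $\iii(\widetilde m^{i^{*}}+1/2)\widetilde L_j/L_{i^{*}}\iii$ (resp.\ $\iii m^{i^{*}}\widetilde L_j/L_{i^{*}}\iii$), up to a universal constant. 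Taking the product over $j\neq i^{*}$ then yields $a(x)\geq P_{i^{*}}(x)\geq c^{N-1}\prod_{j\neq i^{*}}\iii\,\cdot\,\iii$, which bounds $a(x)$ below by a constant times the minimum on the right-hand side of the lemma.

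The main obstacle is this selection step. The triangle inequality only provides $\iii y+\delta\iii\geq \iii y\iii-|\delta|$, which is useless when $|\delta|\gtrsim \iii y\iii$ — a situation arising when $\widetilde L_j/L_{i^{*}}$ is large. The remedy exploits the identity $|r_i|+|F_i|=1/2$ (so that $\min(|r_i|,|F_i|)\leq 1/4$), combined with a pigeonhole choice of $i^{*}$ (essentially the index maximizing $L_{i^{*}}$), so that at least one of the two perturbations is small enough simultaneously for every $j\neq i^{*}$. Quantifying the universal constant $c>0$ in terms of $\{L_j\}_{j\leq N}$ — and, crucially, using that the right-hand side of the lemma is a \emph{minimum} over both representations, which allows the proof to choose whichever of the two inequalities is favorable at each $x$ — is the technical heart of the argument and follows the scheme developed in \cite[Appendix A]{wave} and \cite[Appendix A]{mio3}.
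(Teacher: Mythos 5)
Your opening step is fine: $a(x)\geq P_i(x)$ for each $i$, and $|\sin(\pi y)|\geq 2\iii y\iii$, $|\cos(\pi y)|\geq 2d(y)\geq \iii 2y\iii$ do give $P_i(x)\geq\prod_{j\neq i}\iii x\widetilde L_j/\pi\iii$. But the proof is not complete: the step you yourself label ``the technical heart'' is exactly where your strategy breaks down, and the remedy you sketch does not repair it. You need, for a single $i^*=i^*(x)$ and one of the two representations, the factor-by-factor \emph{lower} bound $\iii x\widetilde L_j/\pi\iii\geq c\,\iii(\widetilde m^{i^*}(x)+\frac12)\widetilde L_j/L_{i^*}\iii$ (resp.\ the $m^{i^*}$ version) simultaneously for \emph{every} $j\neq i^*$. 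The only available tool is $\iii y+\delta\iii\geq\iii y\iii-|\delta|$, which is vacuous unless $|\delta|\lesssim\iii y\iii$; here $\iii y\iii$ is the target distance, which can be arbitrarily small (that is precisely the regime the lemma addresses), whereas the perturbation $|r_{i^*}|\,\widetilde L_j/L_{i^*}$ is in no way controlled by it. The identity $|r_{i^*}|+|F_{i^*}|=\frac12$ and the choice of $i^*$ maximizing $L_{i^*}$ only give the perturbations an \emph{absolute} bound, never a bound \emph{relative} to the target distances, so the pigeonhole cannot deliver the required domination, and no constant $c>0$ comes out of the scheme as written.

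The paper's proof runs the comparison in the opposite direction, which is what removes the degeneracy. Writing $(\widetilde m^i(x)+\frac12)\frac{L_j}{L_i}=\frac{L_jx}{\pi}-r\big(\frac{L_ix}{\pi}\big)\frac{L_j}{L_i}$ and using the \emph{upper} triangle inequality $\iii y+\delta\iii\leq\iii y\iii+|\delta|$ together with $2d(y)\leq|\cos(\pi y)|\leq\pi d(y)$ and $2\iii y\iii\leq|\sin(\pi y)|\leq\pi\iii y\iii$, one obtains additive estimates such as
$$2\,\iii\Big(\widetilde m^i(x)+\tfrac12\Big)\tfrac{L_j}{L_i}\iii\;\leq\;|\sin(L_jx)|+\tfrac{\pi L_j}{2L_i}\,|\cos(L_ix)|,$$
in which the error term is proportional to the one factor \emph{omitted} from $P_i$. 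Multiplying over $j\neq i$ (all factors being bounded) gives $\prod_{j\neq i}\iii\cdot\iii\leq C\big(P_i(x)+|\cos(L_ix)|\big)\leq C\big(a(x)+|\cos(L_ix)|\big)$, and the analogue with $|\sin(L_ix)|$ for the $m^i$ representation. One then only has to select $i_0$ so that the single leftover term $|\cos(L_{i_0}x)|$ or $|\sin(L_{i_0}x)|$ is itself controlled by $a(x)$; since the right-hand side of the lemma is a minimum over both $i$ and both representations, absorbing this one additive error is all that is required — no uniform multiplicative domination of every factor is needed. To salvage your write-up, replace the factor-by-factor lower bounds by these additive upper bounds; as it stands, the key selection step is missing and the mechanism you propose for it fails.
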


\begin{proof}
From $\cite[relation\ (A.3)]{wave}$, for every $x\in\R$, there follows
\begin{equation}\label{primom}2d({x})\leq|\cos(\pi x)|\leq\pi d({x}).\end{equation}
As $2d\big( \big(\widetilde m^i(x)+\frac{1}{2}\big)\frac{L_j}{L_i}\big)  \leq \big|\cos\big(\big(\widetilde 
m^i(x)+\frac{1}{2}\big)\frac{L_j}{L_i}\pi\big)\big|$ 
and $\widetilde m^i(x)+\frac{1}{2}=\frac{L_i}{\pi}x-r\big(\frac{L_i}{\pi}x\big)$ for $x\in\R$ and $i,j\leq N$,
\begin{equation}\begin{split}\label{bingo11}
&2d\Big( \Big(\widetilde m^i(x)+\frac{1}{2}\Big)\frac{L_j}{L_i}\Big)  \leq 
|\cos(L_jx)|+\left|\sin\left(\pi\frac{L_j}{L_i}\Big|r\Big(\frac{L_i}{\pi}x \Big)\Big|\right)\right|.\\
\end{split}\end{equation}
Now, $|\sin(\pi|r(\cdot)|)|\leq\pi\iii | r(\cdot)|\iii\leq \pi|r(\cdot)|= \pi d(\cdot)\leq \frac{\pi}{2}|\cos(\pi (\cdot))|$ thanks to $\cite[relation\ (A.3)]{wave}$ and $(\ref{primom})$. For every $x\in\R$, it holds
\begin{equation}\label{rico}\left|\sin\left(\pi\frac{L_j}{L_i}\Big|r\Big(\frac{L_i}{\pi}x \Big)\Big|\right)\right|\leq \pi\frac{L_j}{L_i}\Big|r\Big(\frac{L_i}{\pi}x \Big)\Big|\leq \frac{\pi L_j}{2 L_i}|\cos(L_ix)|. \end{equation}
From $(\ref{bingo11})$ and $(\ref{rico})$, there exists $C_1>0$ such that, for every $i,j\leq N$,
\begin{equation}\begin{split}\label{bin1}
2d\Big(\Big(\widetilde m^i(x)+\frac{1}{2}\Big)\frac{L_j}{L_i}\Big)  &\leq 
|\cos(L_jx)|+\frac{\pi L_j}{2 L_i}|\cos(L_ix)|,\ \ \ \ \ \ \ \forall x\in\R^+,\\
\end{split}\end{equation}
$$\Longrightarrow\ \ \ C_1\prod_{\underset{j\neq i}{j\in I_1}}d\Big( \Big(\widetilde m^i(x)+\frac{1}{2}\Big)\frac{L_j}{L_i}\Big) \leq \prod_{\underset{j\neq i}{j\in I_1}}|\cos(L_jx)|+|\cos(L_ix)|.$$
From $\cite[relation\ (A.3)]{wave}$, as done in $(\ref{bingo11})$ and $(\ref{rico})$, there exists $C_2>0$ such that
\begin{equation}\label{bin2}\begin{split}
2\iii \Big(\widetilde m^i(x)+\frac{1}{2}\Big)\frac{L_j}{L_i}\iii 
&\leq 
|\sin(L_jx)|+\frac{\pi L_j}{2 L_i}|\cos(L_ix)|,\ \ \ \ \ \ \forall x\in\R,\\
\end{split}\end{equation}
\begin{equation*}\begin{split}
\Longrightarrow\ \ \  \ \  &C_2\prod_{\underset{j\neq i}{j\in I_1}}d\Big( \Big(\widetilde m^i(x)+\frac{1}{2}\Big)\frac{L_j}{L_i}\Big)\prod_{\underset{j\neq i}{j\in I_2}}\iii \Big(\widetilde m^i(x)+\frac{1}{2}\Big)\frac{L_j}{L_i}\iii \leq  \prod_{\underset{j\neq i}{j\in I_2}}|\sin(L_jx)|\prod_{\underset{j\neq i}{j\in I_1}}|\cos(L_jx)|+|\cos(L_ix)|.
\end{split}\end{equation*}
Now, it is satisfied $d(x)=\iii \frac{1}{2}(2x-1)\iii\geq\frac{1}{2}\iii 2x-1\iii =\frac{1}{2}\iii 2x\iii$ for 
every $x\in\R$ and then we have $d\big( \big(\widetilde m^i(x)+\frac{1}{2}\big)\frac{L_j}{L_i}\big)\geq 
\frac{1}{2}\iii \Big(\widetilde m^i(x)+\frac{1}{2}\Big)\frac{2L_j}{L_i}\iii$, which imply
\begin{equation}\label{rico1}\begin{split}
&C_2\prod_{\underset{j\neq i}{j\leq N}}\frac{1}{2}\iii \Big(\widetilde m^i(\cdot)+\frac{1}{2}\Big)\frac{\widetilde L_j}{L_i}\iii \leq a(\cdot)+|\cos(L_i(\cdot))|.
\end{split}\end{equation}
Equivalently, from the proof of $\cite[Proposition\ A.1]{wave}$, for every $x\in\R$,
\begin{equation*}\begin{split}
2\iii m^i(x)\frac{\ L_j}{L_i}\iii &\leq 
|\sin(L_jx)|+\frac{\pi L_j}{2 L_i}|\sin(L_ix)|,\ \ \ \ \ \ \ 
2d\Big( m^i(x)\frac{\ L_j}{L_i}\Big) 
\leq 
|\cos(L_jx)|+\frac{\pi L_j}{2 L_i}|\sin(L_ix)|,\\
\end{split}\end{equation*}
\begin{equation}\label{rico2}\begin{split}
&\Longrightarrow\ \ \ \ C_2\prod_{\underset{j\neq i}{j\leq N}}\frac{1}{2}\iii m^i(\cdot)\frac{\widetilde 
L_j}{L_i}\iii
\leq a(\cdot)+|\sin(L_i(\cdot))|.
\end{split}\end{equation}
The claim follows as $\cite[Proposition\ A.1]{wave}$. Indeed, if $(\lambda_k)_{k\in\N^*}\subset\R^+$ is so that $a(\lambda_k)\xrightarrow{k\rightarrow\infty}0$, then there exist some $i_0\leq N$ such that $|\sin(\lambda_k L_{i_0})|\xrightarrow{k\rightarrow\infty}0$ or $|\cos(\lambda_k L_{i_0})|\xrightarrow{k\rightarrow\infty}0$. By considering $(\ref{rico1})$ and $(\ref{rico2})$ with $i=i_0$, we have
	\begin{equation*}\begin{split}z(\lambda_k):=\min\Big( &\min_{i\leq N}  \prod_{{j\neq i}} \iii \Big(\widetilde 
m^i(\lambda_k)+\frac{1}{2}\Big)\frac{\widetilde L_j}{L_i}\iii,
\min_{i\leq N}  \prod_{{j\neq i}}\iii  m^i(\lambda_k)\frac{\widetilde L_j}{L_i}\iii
\Big)\xrightarrow{k\longrightarrow\infty}0.\end{split}\end{equation*}The lemma is proved since $z(\lambda_k)$ 
converges to $0$ at least as fast as $a(\lambda_k)$ thanks to $(\ref{rico1})$ and $(\ref{rico2})$.\qedhere
\end{proof}

\begin{prop}\label{zuazua3}
	Let $\{L_j\}_{j\leq N}\subset\R$, $I_1\subseteq\{1,...,N\}$ and $I_2:=\{1,...,N\}\setminus I_1$. If $\{L_j\}_{j\leq N}\in\AL\LL(N)$, then, for every $\epsilon>0$, there exists $C_\epsilon>0$ such that, for every $ x>\max\{{\pi}/{2L_j}:\ j\leq N\}$, we have
	{{$$\prod_{{j\in I_2}}|\sin(x L_j)|\sum_{j\in I_1}\prod_{\underset{k\neq j}{k\in I_1}}|\cos(x L_k)|+\prod_{{j\in I_1}}|\cos(x L_j)|\sum_{j\in I_2}\prod_{\underset{k\neq j}{k\in I_2}}|\sin(x L_k)|\geq\frac{C_\epsilon}{x^{1+\epsilon}}.$$}}
\end{prop}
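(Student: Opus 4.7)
The plan is to combine Lemma \ref{zuazua1} with Roth's theorem on Diophantine approximation of algebraic irrationals. The left-hand side of the inequality is precisely the function $a(x)$ of Lemma \ref{zuazua1}, so the lemma immediately reduces the problem to producing, for every $\epsilon>0$, a lower bound of the form $C_\epsilon/x^{1+\epsilon}$ on
$$\min\Big(\min_{i\leq N}\prod_{j\neq i}\iii\big(\widetilde m^i(x)+\tfrac{1}{2}\big)\tfrac{\widetilde L_j}{L_i}\iii,\;\min_{i\leq N}\prod_{j\neq i}\iii m^i(x)\tfrac{\widetilde L_j}{L_i}\iii\Big).$$
The hypothesis $x>\max\{\pi/(2L_j)\}$ will be used to guarantee that the integers $\widetilde m^i(x)=n(L_ix/\pi)$ and $m^i(x)$ are nonzero for every $i\leq N$, so that we can actually invoke a Diophantine approximation statement on them.

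Next, since $\{L_j\}_{j\leq N}\in\AL\LL(N)$, for every $i\neq j$ the ratio $L_j/L_i$ is algebraic irrational; the same then holds for $\widetilde L_j/L_i$ and for $\widetilde L_j/(2L_i)$, since multiplying an algebraic irrational by a nonzero rational preserves both algebraicity and irrationality. Roth's theorem (in the form used in $\cite[\text{Proposition A.1}]{mio3}$) then yields, for any $\delta>0$, a constant $c_\delta>0$ such that $\iii n\,\widetilde L_j/L_i\iii\geq c_\delta/|n|^{1+\delta}$ for every nonzero $n\in\Z$. The shifted factors $\iii(m+\tfrac12)\widetilde L_j/L_i\iii$ are handled by rewriting $(m+\tfrac12)\widetilde L_j/L_i=(2m+1)\widetilde L_j/(2L_i)$ and applying Roth to the algebraic irrational $\widetilde L_j/(2L_i)$, giving an identical lower bound in terms of $|2m+1|$.

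The final step is to translate the estimates from $n$ to $x$: the definitions give $|\widetilde m^i(x)|,|m^i(x)|\leq L_ix/\pi+O(1)$, so each individual factor is bounded below by $C\,x^{-(1+\delta)}$. Multiplying the $N-1$ factors indexed by $j\neq i$ and then taking the minimum over $i\leq N$ yields the desired polynomial lower bound; calibrating $\delta$ in function of $\epsilon$ (and $N$) delivers the exponent required in the statement, after absorbing all index-dependent constants into a single $C_\epsilon>0$.

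The main obstacle is the bookkeeping of the accumulated exponent when multiplying the $N-1$ factors: naive application of Roth produces an exponent of the form $(N-1)(1+\delta)$ rather than $(1+\delta)$, so one must choose $\delta$ as a function of $\epsilon$ (and of $N$) so that the final exponent matches the form claimed in the statement. A secondary nuisance is that Roth's constants $c_\delta$ are ineffective; however, since the outer minimum is over the finite index set $i\leq N$, and for each $i$ the product involves only the finitely many ratios $\{\widetilde L_j/L_i\}_{j\neq i}$, one may simply set $C_\epsilon$ equal to the minimum of the finitely many Roth constants involved, which suffices for the non-quantitative inequality required.
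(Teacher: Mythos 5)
Your reduction to Lemma \ref{zuazua1} is the same first step as the paper's, but the Diophantine input you propose does not suffice, and the gap is exactly the one you flag at the end and then wave away. Applying Roth's theorem separately to each of the $N-1$ ratios $\widetilde L_j/L_i$ gives $\iii n\,\widetilde L_j/L_i\iii\geq c_\delta |n|^{-(1+\delta)}$ for each $j$, and multiplying these bounds yields a lower bound for the product with exponent $(N-1)(1+\delta)$. This exponent is at least $N-1$ no matter how small you take $\delta$, so no calibration of $\delta$ as a function of $\epsilon$ and $N$ can produce the claimed exponent $1+\epsilon$ once $N\geq 3$: for $\epsilon<N-2$ the bound $C/x^{(N-1)(1+\delta)}$ is strictly weaker than $C_\epsilon/x^{1+\epsilon}$. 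The sharp exponent is not cosmetic here — it is what makes $\tilde d=\epsilon$ arbitrarily small in the estimate \eqref{relo} of Theorem \ref{final}, which is needed to stay inside the admissible range $a+\eta\in(0,3/2)$ (etc.) of Assumptions II for star graphs with arbitrarily many edges.

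The paper instead invokes Schmidt's theorem on simultaneous Diophantine approximation (\cite[Theorem A.8]{wave}), which bounds the \emph{entire product} at once: for algebraic numbers $\alpha_1,\dots,\alpha_{N-1}$ with $1,\alpha_1,\dots,\alpha_{N-1}$ linearly independent over $\Q$, one has $\prod_{j}\iii n\alpha_j\iii\geq C/|n|^{1+\epsilon}$ for all nonzero integers $n$. This is a genuinely deeper statement than the product of $N-1$ one-dimensional Roth bounds, and it is precisely where the hypothesis in the definition of $\AL\LL(N)$ that $\{1,L_1,\dots,L_N\}$ be linearly independent over $\Q$ enters (your proposal never uses that hypothesis, only the algebraic irrationality of the ratios, which is another sign something is missing). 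With Schmidt's theorem in place of factor-by-factor Roth, the rest of your argument — passing from $|n|$ to $x$ via $|\widetilde m^i(x)|\lesssim L_i x/\pi$, handling the shifted factors through $2m+1$ and the doubled lengths $\widetilde L_j$, and taking the minimum over the finitely many indices $i$ — matches the paper's conclusion.
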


\begin{proof}
The claim is due to Lemma $\ref{zuazua1}$ and to the Schmidt's Theorem $\cite[Theorem\ A.8]{wave}$, which implies that, for every $\epsilon>0$ and $i\leq N$, there exist $C_1(i),C_2(i),C_3(i)>0$ such that, for every $x\in\R$,
\begin{equation}\label{last}\begin{split} \prod_{\underset{j\neq i}{j\leq N}} \iii \Big(\widetilde 
m^i(x)+\frac{1}{2}\Big)\frac{\widetilde L_j}{L_i}\iii
\geq \frac{C_1(i)}{(2\widetilde m^i(x)+1)^{1+\epsilon}}
\geq\frac{C_1(i)}{\big(\frac{2L_i}{\pi}x+1\big)^{1+\epsilon}}
\geq
 \frac{C_2(i)}{x^{1+\epsilon}}\end{split}\end{equation}
and $\prod_{\underset{j\neq i}{j\leq N}}\iii  m^i(x)\frac{\widetilde L_j}{L_i}\iii
\geq{C_3(i)}{x^{-1-\epsilon}}$ for every $x>\frac{\pi}{2}\max\{1/L_j\ :\ j\leq N\}$.
The statement follows with $C_\epsilon:=\min\big(\min_{i\leq N}C_2(i),\min_{i\leq N}C_3(i)\big)$.\qedhere 	\end{proof}

\begin{coro}\label{corozuazua}
	Let $\{L_k\}_{k\leq N}\in\AL\LL(N)$ with $N\in\N^*$. Let $\{\omega_n\}_{n\in\N^*}$ be the unbounded sequence of positive solutions of the equation 
	\begin{equation}\label{zizi}\sum_{l\leq N}\sin(x L_l)\prod_{m\neq l }\cos(x L_m)=0,\ \ \ \ \ \ \ \ \ x\in\R.\end{equation}
	For every $\epsilon>0$, there exists $C_\epsilon>0$ so that $|\cos(\omega_n L_l)|\geq \frac{C_\epsilon}{\omega_n^{1+\epsilon}}$ for every $l\leq N$ and $n\in\N.$
\end{coro}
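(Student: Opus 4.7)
The plan is to combine Proposition \ref{zuazua3}, specialized to $I_1=\{1,\ldots,N\}$ and $I_2=\emptyset$, with a simple algebraic manipulation of the defining equation \eqref{zizi}. First I would verify that $\cos(\omega_n L_l)\neq 0$ for every $l\leq N$ and $n\in\N^*$. Indeed, if $\cos(\omega_n L_{l_0})=0$ for some $l_0$, every summand of \eqref{zizi} with index $l\neq l_0$ vanishes (it contains $\cos(\omega_n L_{l_0})$ as a factor), so \eqref{zizi} reduces to $\sin(\omega_n L_{l_0})\prod_{m\neq l_0}\cos(\omega_n L_m)=0$. Since $|\sin(\omega_n L_{l_0})|=1$, this would force $\cos(\omega_n L_{l_1})=0$ for some $l_1\neq l_0$, making $L_{l_1}/L_{l_0}$ a ratio of odd integers and contradicting $\{L_k\}_{k\leq N}\in\AL\LL(N)$.

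Next I would apply Proposition \ref{zuazua3} with $I_1=\{1,\ldots,N\}$ and $I_2=\emptyset$. With these choices the $I_2$-indexed terms collapse (the empty product is $1$, the empty sum is $0$), so the inequality becomes
$$\sum_{j\leq N}\prod_{k\neq j}|\cos(\omega_n L_k)|\geq\frac{\widetilde C_\epsilon}{\omega_n^{1+\epsilon}},$$
valid whenever $\omega_n>\max_j \pi/(2L_j)$. By the pigeonhole principle there exists $j^*=j^*(n)\leq N$ with $\prod_{k\neq j^*}|\cos(\omega_n L_k)|\geq\widetilde C_\epsilon/(N\omega_n^{1+\epsilon})$. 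Since every factor in this product lies in $[0,1]$, each factor is bounded below by the whole product, so
$$|\cos(\omega_n L_k)|\geq\frac{\widetilde C_\epsilon}{N\omega_n^{1+\epsilon}},\qquad\forall k\neq j^*.$$

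The main obstacle is to handle the one exceptional index $k=j^*$ that Proposition \ref{zuazua3} leaves uncontrolled. I would address it by returning to \eqref{zizi}: isolating the $l=j^*$ term and factoring $\cos(\omega_n L_{j^*})$ out of every remaining summand yields
$$|\sin(\omega_n L_{j^*})|\prod_{m\neq j^*}|\cos(\omega_n L_m)|\leq(N-1)|\cos(\omega_n L_{j^*})|.$$
A dichotomy on $|\sin(\omega_n L_{j^*})|$ closes the argument: if $|\sin(\omega_n L_{j^*})|<1/2$ then $|\cos(\omega_n L_{j^*})|>\sqrt{3}/2$, much stronger than the target bound; otherwise $|\sin(\omega_n L_{j^*})|\geq 1/2$, and combining with the product estimate of the previous step gives $|\cos(\omega_n L_{j^*})|\geq\widetilde C_\epsilon/(2N(N-1)\omega_n^{1+\epsilon})$. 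The finitely many $\omega_n\leq\max_j\pi/(2L_j)$ give strictly positive values of $|\cos(\omega_n L_l)|$ by the first paragraph and can be absorbed into the final constant $C_\epsilon$.
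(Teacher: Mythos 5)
Your proof is correct, and it reaches the conclusion by a genuinely different (and more self-contained) route than the paper's. The paper argues asymptotically: if $|\cos(L_j\omega_{n_k})|\to 0$ along a subsequence, it deduces from \eqref{zizi} that some other $|\cos(L_i\omega_{n_k})|\to 0$ at least as fast, then re-enters the proof of Lemma \ref{zuazua1} to extract the intermediate bound $C_2|\cos(L_i\omega_n)|\geq\prod_{j\neq i}d\big((\widetilde m^i(\omega_n)+\tfrac12)\tfrac{L_j}{L_i}\big)$ and concludes with the Schmidt-type estimate \eqref{last}, leaving the final quantitative step implicit. You instead use only the \emph{statement} of Proposition \ref{zuazua3}, specialized to $I_1=\{1,\dots,N\}$, $I_2=\emptyset$, which correctly collapses to $\sum_{j\leq N}\prod_{k\neq j}|\cos(\omega_n L_k)|\geq \widetilde C_\epsilon\,\omega_n^{-1-\epsilon}$; pigeonhole plus the fact that each factor in a product of numbers in $[0,1]$ dominates the whole product then controls every index except one, and the exceptional index $j^*$ is recovered by isolating its term in \eqref{zizi} and using the bound $|\sin(\omega_n L_{j^*})|\prod_{m\neq j^*}|\cos(\omega_n L_m)|\leq (N-1)|\cos(\omega_n L_{j^*})|$ together with the $|\sin|\geq 1/2$ dichotomy. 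That isolation step is essentially the same algebraic observation the paper uses for its rate comparison, but you deploy it quantitatively with explicit constants. Your preliminary verification that no two cosines can vanish simultaneously (which would force a rational ratio $L_{l_1}/L_{l_0}$ of odd integers, contradicting $\AL\LL(N)$) is the same non-degeneracy fact the paper relies on, and your handling of the finitely many $\omega_n\leq\max_j \pi/(2L_j)$ is sound because the left-hand side of \eqref{zizi} is a nonzero analytic function, so its positive zeros are discrete and only finitely many lie in any bounded interval. In short, the paper's proof is shorter but leans on internal details of the appendix lemmas; yours is longer but uses Proposition \ref{zuazua3} as a black box and is fully explicit.
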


\begin{proof}If there exists $\{\omega_{n_k}\}_{k\in\N^*}$, subsequence of $\{\omega_n\}_{n\in\N^*}$, such that 
$|\cos(L_j\omega_{n_k})|\xrightarrow{k\rightarrow\infty} 0$ for some $j\leq N$, then there exists $i\leq N$ such 
that $i\neq j$ and $|\cos(L_i\omega_{n_k})|\xrightarrow{k\rightarrow\infty} 0$ thanks to $(\ref{zizi})$. The last 
convergence is at least as fast as the first one. From $(\ref{bin1})$, we have $\prod_{j\neq i}d\Big( 
\Big(\widetilde m^i(\omega_{n_k})+\frac{1}{2}\Big)\frac{L_j}{L_i}\Big)\xrightarrow{k\rightarrow\infty} 0$ at least 
as fast as $|\cos(L_j\omega_{n_k})|\xrightarrow{k\rightarrow\infty} 0$. As in the proof of Lemma 
$\ref{zuazua1}$, there exists $C_2>0$ so that
	$$ C_2|\cos(L_i\omega_n)| \geq \prod_{j\neq i}d\Big( \Big(\widetilde m^i(\omega_n)+\frac{1}{2}\Big)\frac{L_j}{L_i}\Big) =\prod_{j\neq i}\iii \frac{1}{2}\Big(\Big(\widetilde m^i(\omega_n)+\frac{1}{2}\Big)\frac{2L_j}{L_i}-1\Big)\iii.$$
	The last identity and the techniques leading to the equation $(\ref{last})$ achieve the claim.\qedhere 
\end{proof}

\end{document}